\documentclass[runningheads]{llncs}
\usepackage[utf8]{inputenc}
\usepackage[T1]{fontenc}
\usepackage{amsmath,amssymb}
\usepackage{pgf}
\usepackage{tikz}
\usepackage{subfigure}

\usetikzlibrary{arrows,automata,shapes,petri,backgrounds}
\usetikzlibrary{positioning}


\newcommand{\Nat}{\mathbb{N}}
\renewcommand{\epsilon}{\varepsilon}

\newcommand{\calU}{\mathcal{U}}

\newcommand{\calA}{\mathcal{A}}
\newcommand{\calB}{\mathcal{B}}
\newcommand{\calC}{\mathcal{C}}
\newcommand{\calD}{\mathcal{D}}

\newcommand{\calG}{\mathcal{G}}


\newcommand{\name}{\textsf}

\newcommand{\concept}{\textbf}
\newcommand{\class}{\textsc}

\newcommand{\jour}{\textsl}





\newcommand{\cupover}{\displaystyle \bigcup}

\newcommand{\prodover}{\displaystyle \Pi}



\newcommand{\syncshuf}{\|}

\newcommand{\zero}{0}




\newcommand{\If}{\mbox{~if~}}
\newcommand{\Iff}{\mbox{~iff~}}

\newcommand{\Forall}{\mbox{~for~all~}}

\newcommand{\Otherwise}{\mbox{~otherwise}}
\newcommand{\Where}{\mbox{~where~}}
\newcommand{\rmand}{\mbox{~and~}}

\newcommand{\postdecomp}{\jour{postdecomp}}


\newcommand{\step}[1]{\xrightarrow{#1}}

\newcommand{\To}{\Rightarrow}
\newcommand{\Step}[1]{\mbox{$\stackrel{#1}{\To}$}}

\newcommand{\preset}[1]{\mbox{${ }^{\bullet}#1$}}
\newcommand{\postset}[1]{\mbox{$#1 \, { }^{\bullet}$}}%

\newcommand{\post}[1]{\mbox{$\jour{post}(#1)$}}
\newcommand{\pre}[1]{\mbox{$\jour{pre}(#1$)}}

\newcommand{\project}{\mbox{$\downarrow$}}

\newcommand{\ldot}{{\rm <}\kern-0.37em{\raisebox{.25ex}{\bf .}}\kern0.375em}




\newcommand{\fsync}{\jour{fsync}}
\newcommand{\Init}{\jour{Init}}
\newcommand{\Lang}{\jour{Lang}}
\newcommand{\Part}{\jour{Part}}
\newcommand{\Pref}{\mathit{Pref}}
\newcommand{\Suf}{\mathit{Suf}}
\newcommand{\Der}{\jour{Der}}
\newcommand{\pairing}{\jour{pairing}}

\newcommand{\cable}{\jour{cable~}}
\newcommand{\cables}{\jour{cables}}

\newcommand{\globals}{\jour{globals}}
\newcommand{\psglobals}{\jour{PS-globals}}

\newcommand{\matching}{\jour{matching}}
\newcommand{\psmatchings}{\jour{PS-matchings}}

\newcommand{\ce}{\jour{CE}}
\newcommand{\cepairings}{\jour{CE-pairings}}
\newcommand{\cecables}{\jour{CE-cables}}
 
\newcommand{\sce}{\jour{SCE}}
\newcommand{\scepairings}{\jour{SCE-pairings}}
\newcommand{\scecables}{\jour{SCE-cables}}

\newcommand{\ES}{\jour{ES}}

\begin{document}
\title{Kleene Theorems for Free Choice Nets Labelled with Distributed Alphabets}
\author{Ramchandra Phawade}
\authorrunning{Ramchandra Phawade}
 
\institute{ Indian Institute of Technology Dharwad,
India\\
\email{prb@iitdh.ac.in}\\
}

\maketitle
\begin{abstract}
 
We provided (PNSE'2014) expressions for free choice nets having  
\emph{distributed choice property} which makes the nets \emph{direct
product} representable. 
In a recent work (PNSE'2016), we gave equivalent syntax for a larger class of  
free choice nets obtained by dropping distributed choice property.

In both these works, the classes of free choice nets were
restricted by a \emph{product condition} on the set of final markings.
In this paper we do away with this restriction and give expressions
for the resultant classes of nets which correspond to \emph{free choice
synchronous products and Zielonka automata}. 
For free choice nets with distributed choice property, 
we give an alternative characterization using properties  
checkable in polynomial time.   
 
Free choice nets we consider are $1$-bounded, S-coverable,
and are labelled with distributed alphabets, where S-components of
the associated S-cover respect the given alphabet distribution.  
 
\keywords{Kleene theorems \and Petri nets \and Distributed automata.}
\end{abstract}
 
\section{Introduction}  
\label{lab:sec:intro}
 
There are several different notions of acceptance to define languages
for labelled place transition Petri nets, depending on restrictions on  
labelling and \emph{final} markings~\cite{Pet76}.
The language of a place transition net with an initial marking and a finite set of final markings,  
is called $L$-type language~\cite{Jan87}. 
One goal of this work is to give syntax of expressions for $L$-type
languages for various subclasses of $1$-bounded, free choice nets labelled with distributed
alphabets.
One advantage of using distributed alphabet is that we can see free choice nets as products  
of automata~\cite{Muk11}, enabling us to write expressions for the nets using components.   
This also enables us to compare expressiveness of nets and products of automata.  
Three kinds of formulations of automata over distributed alphabets, in
the increasing order of expressiveness: direct products, synchronous products,  
and asynchronous products are described in \cite{Muk11}. In the present paper\footnote{A  
preliminary version of this paper appeared at $14$th PNSE workshop,
held at Bratislava~\cite{P18}.},  
we present a hirearchy of $1$-bounded free choice nets like automata
over distributed alphabets, and also introduce a fourth product automata in the  
current hierarchy which is utilized to get the syntax. 
In this hierarchy, there are four kinds of free choice nets labelled
over distributed alphabets. Two out of these four classes
were introduced earlier~\cite{PL14,PL15,P16}. Two new classes of systems
are given in this work. To understand the complete hierarchy and their
relations to other formalisms like expressions and automata over
distributed alphabets we invite the reader to read these earlier works
~\cite{PL14,PL15,P16}. 
 
We use product automata to get expressions for the
Free choice nets, and give correspondences for all these three formalisms  
for various classes. This kind of correspondence has been used
in concurrent code generation for discrete event systems~\cite{IorAnt12}.
 
We construct  
expressions for $L$-type languages of free choice nets via \emph{free choice Zielonka automata}.  
 
\begin{figure}[!htbp]
    \centering
    \begin{minipage}{.397\textwidth}
        \centering
\begin{tikzpicture}[node distance=2cm,->,>=stealth',scale=0.8]

  \tikzstyle{place}=[circle,thick,draw=blue!75,fill=blue!10,minimum size=4mm]
    \tikzstyle{red place}=[place,draw=red!75,fill=red!20]
      \tikzstyle{transition}=[rectangle,thin,draw=black!75, fill=black!10,minimum size=4mm]

    \tikzstyle{every label}=[red]

 \begin{scope}
      \node [place,tokens=1] (p1)    at(0.91,3)  [label=above:$r_1$]      {};
      \node [place,tokens=1] (p2)    at(4.1,3)  [label=above:$s_1$]      {};
      \node [place]          (p3)    at(0,1)  [label=below:$r_2$]      {};
      \node [place]          (p4)    at(1.5,1)  [label=below:$r_3$]      {};
      \node [place]          (p5)    at(4,1)  [label=below:$s_2$]      {};
      \node [place]          (p6)    at(5,1)  [label=below:$s_3$]      {};

      \node [transition] (t1) at(1,2)       {$a$}
       edge [pre]                             (p1)
       edge [pre]                             (p2)
       edge [post]                            (p3)
       edge [post]                            (p5);

      \node [transition] (t2) at(4,2)       {$a$}
       edge [pre]                             (p1)
       edge [pre]                             (p2)
       edge [post]                            (p4)
       edge [post]                            (p6);

      \node [transition] (t3) at(-0.5,0)       {$b$}
       edge [pre]                 (p3)
       edge [post,bend left]               (p1);

      \node [transition] (t4) at(2,0)       {$c$}
       edge [pre]                 (p4);
 
      \node [transition] (t5) at(3,0)       {$d$}
       edge [pre]                 (p5);
 
      \node [transition] (t6) at(5.5,0)       {$e$}
       edge [pre]                 (p6)
       edge [post,bend right]               (p2);
 
      \draw[->,black,rounded corners] (t4)--(2,-0.5)--(-1,-0.5)--(-1,3)--(p1);  
      \draw[->,black,rounded corners] (t5)--(3,-0.5)--(6,-0.5)--(6,3)--(p2);  
 

\end{scope}
\end{tikzpicture}
 
        \caption{Free Choice Net without distributed choice}
        \label{lsfc-nondp-fig1}
    \end{minipage}%
    \begin{minipage}{0.73\textwidth}
        \centering
	\fbox{
\begin{tikzpicture}[node distance=2cm,->,>=stealth',scale=0.80]

  \tikzstyle{place}=[circle,thick,draw=blue!75,minimum size=4mm]
    \tikzstyle{red place}=[place,draw=red!75,fill=red!20]
      \tikzstyle{transition}=[rectangle,thin,draw=black!75,minimum size=1mm]

    \tikzstyle{every label}=[red]

 \begin{scope}
      \node [place,tokens=1] (p1)    at(1,3)  [label=above:$r_1$]      {};
      \node [place,tokens=1] (p2)    at(3.8,3)  [label=above:$s_1$]      {};
      \node [place]          (p3)    at(0,1)  [label=below:$r_2$]      {};
      \node [place]          (p4)    at(1.5,1)  [label=below:$r_3$]      {};
      \node [place]          (p5)    at(3.3,1)  [label=below:$s_2$]      {};
      \node [place]          (p6)    at(4.8,1)  [label=below:$s_3$]      {};

      \node [transition] (t1) at(0.5,2)       {$a$}
       edge [pre]                             (p1)
       edge [post]                            (p3);

      \node [transition] (t2) at(1.5,2)       {$a$}
       edge [pre]                             (p1)
       edge [post]                            (p4);
 
      \node [transition] (t7) at(3.3,2)       {$a$}
       edge [pre]                             (p2)
       edge [post]                            (p5);

      \node [transition] (t8) at(4.3,2)       {$a$}
       edge [pre]                             (p2)
       edge [post]                            (p6);

      \node [transition] (t3) at(-0.5,0)       {$b$}
       edge [pre]                 (p3)
       edge [post,bend left]               (p1);

      \node [transition] (t4) at(2,0)       {$c$}
       edge [pre]                 (p4);
 
      \node [transition] (t5) at(2.8,0)       {$d$}
       edge [pre]                 (p5);
 
      \node [transition] (t6) at(5.3,0)       {$e$}
       edge [pre]                 (p6)
       edge [post,bend right]               (p2);
 
      \draw[->,black,rounded corners] (t4)--(2,3)--(p1);  
      \draw[->,black,rounded corners] (t5)--(2.8,3)--(p2);  
 

\end{scope}
\end{tikzpicture}
 }
	\caption{S-cover of the net in Fig.~\ref{lsfc-nondp-fig1}} 
	\label{lsfc-dp-fig3}
    \end{minipage}
\end{figure}

Consider the net $N$ of Figure~\ref{lsfc-nondp-fig1}  
with 
$G=\{\{r_1,s_1\},\{r_2,s_2\}\}$ as its set of final markings, with 
its decomposition into 
finite state machines in Figure~\ref{lsfc-dp-fig3}.  
Because this net is decomposable into state machines~\cite{Hac72,DesEsp95}, its markings
can be written in tuple form, where each s-component has a place in
the tuple: for example $G=\{(r_1,s_1),(r_2,s_2)\}$. 
For the final marking $(r_1,s_1)$, its language can be
expressed by $\fsync((ab+ac)^*, (ad+ae)^*)$~\cite{PL14,PL15}.
Similarly, for the final marking $(r_2,s_2)$ the language equivalent
expression can be given by $\fsync((ab+ac)^*a, (ad+ae)^*a)$.
In general, if the places involved in the final markings form a
\emph{product}~\cite{PL14}, then its language is specified by taking product of 
component expressions, 
using \emph{free choice Zielonka automata with product-acceptance}~\cite{P16}
as intermediary. 
Even though $r_1$ and $s_2$ participate in final markings,
marking $\{r_1,s_2\}$ does not belong to $G$, hence 
set $G$ do not form a product.
The language $L$ of net system $(N,G)$ can be described by,  
$\fsync((ab+ac)^*, (ad+ae)^*)+\fsync((ab+ac)^*a, (ad+ae)^*a)$.
The key idea is ability to express the language of a net as
the union of languages of nets complying with the product condition on final set of
markings. This closure under union may not be always possible for
restricted classes of languages defined over a distributed alphabet. For
example, the union of \emph{direct product} languages $L_1=\{ca,cb\}$ and
$L_2=\{caa,cbb\}$ defined over $\Sigma_1=\{c,a\}$ and $\Sigma_2=\{c,b\}$  
respectively, is not expressible as a \emph{direct product language}.   
But this language is accepted by \emph{synchronous products}: the direct
products extended with subset-acceptance~\cite{Muk11}.  
 
For the restricted class of direct product representable free choice nets,  
with its set of final markings having product condition-we  
gave expressions via product systems with matchings
(matched states of product system correspond to places of a cluster in net)
and product-acceptance~\cite{PL14,PL15}.  
As a second goal, we develop syntax for free choice nets with
distributed choice, now extended with subset-acceptance.
For a net in this class also, its language can be  
expressed as the union of languages accepted by product system with
matchings and product-acceptance.  
This union is accepted by product systems with matching and extended
with subset-acceptance (\emph{free choice synchronous products}).  
As a third contribution, we develop an alternate characterization
of this class of nets, 
via \emph{free choice Zielonka automata with
product-moves}. 
 
Language equivalent expressions for $1$-bounded nets have been  
given by Grabowski~\cite{Gra81}, Garg and Ragunath~\cite{GarRag92} 
and other authors~\cite{Lod06a}, where renaming operator has been used
in the syntax to disambiguate synchronizations. We have chosen to not
use this operator and to exploit the S-decompositions of nets instead. 
The syntax for smaller sublclasses of nets such \emph{marked graphs} and
\emph{free choice nets with initial markings as feedback vertex set} has
been given earlier ~\cite{LMP11,Pha-thesis}. 
 
\medskip 
\noindent{\emph{Organization of paper}}.
In the next section, we begin with preliminaries on distributed alphabets  
and nets. 
In Section~\ref{lab:sec:ps} we define product systems with  
globals and subset-acceptance, and show that their  
languages can be expressed as the union of languages accepted by  
product systems with globals and product-acceptance.  
These product systems are used as intermediary to get expressions for nets and vice versa.
The following section relates these product systems to nets.  
In Section~\ref{lab:sec:expr} we develop syntax of expressions for product systems
with subset acceptance, and next section 
establishes the correspondence between various
classes of product systems and expressions.  
In the last section we conclude, with an overview of established correspondences
between all three formalisms.

\section{Preliminaries} 
\label{lab:sec:prlm}
$\Nat$ denotes the set of natural numbers including $0$.
Let $\Sigma$ be a finite alphabet  
and $\Sigma^*$ be the set of all finite words over the alphabet $\Sigma$, including 
the empty word $\epsilon$. A \name{language} over an alphabet $\Sigma$ is 
a subset $L \subseteq \Sigma^*$. 
The \name{projection} of a word $w \in \Sigma^*$ to a  
set $\Delta \subseteq \Sigma$, denoted as $w\project_{\Delta}$,  
is defined by:
$\epsilon\project_{\Delta}=\epsilon$ and
\begin{math}
\begin{array}{ll}
(a\sigma)\project_{\Delta}= 
\begin{cases}
  a(\sigma\project_{\Delta}) & \text{if} ~a \in \Delta,\\
  \sigma\project_{\Delta}    & \text{if} ~a \notin \Delta.
\end{cases}
\end{array}
\end{math}

\noindent
Given languages $L_1,L_2,\ldots,L_m$, their 
\name{synchronized shuffle} $L = L_1 \syncshuf \dots \syncshuf L_m$ 
is defined as:  
$w \in L \Iff \Forall i \in \{1,\ldots,m\}, w\project_{\Sigma_i} \in L_i$.

\begin{definition}[Distributed Alphabet]
Let $Loc$ denote the set $\{1,2,\ldots,k\}$.
A \name{distribution} of $\Sigma$ over $Loc$ is 
a tuple of nonempty sets $(\Sigma_1,\Sigma_2,\ldots,\Sigma_k)$
with $\Sigma = \bigcup_{1\leq i\leq k}\Sigma _i$.
For each action $a \in \Sigma$,  its \name{locations}
are the set  $loc(a)=\{i \mid a \in \Sigma_i \}$.  
Actions $a \in \Sigma$ such that $|loc(a)|=1$ are called \name{local},
otherwise they are called \name{global}. 
\end{definition}  
A global action is global in the locations in which it occurs. 
For a set $S$ let $\wp(S)$ denote the set of all its susbets. 
For singleton sets like $\{p\}$, sometimes we may write it as $p$.
 
We will sometimes write $p$ instead of the singleton $\{p\}$. 
 

Let $I=\{i_1,\ldots,i_m\} \subseteq \{1,\ldots,k\}$ be a set of indices
with $1 \leq i_1 < i_2 < \ldots < i_m \leq k$, 
and let $Z_1,\ldots,Z_k$ be finite sets. 
Then $\prodover_{i \in I} Z_i= \{(z_{i_1},\ldots,z_{i_m}) \mid z_{i_j}
\in Z_{i_j}, \forall j \in \{1,\ldots,m\}\}$. 

Let $Z=\prodover_{i \in Loc} Z_i$ and $z=(z_{1},\ldots,z_{k}) \in Z$. 
Then restriction of $z$ to $I$ is the subset of its components taken in  
the order given by $I$ i.e., $z\project I = (z_{i_1},\ldots,z_{i_m})$.
And its generalization 
$Z \project I = \{ (z_{i_1},\ldots,z_{i_m}) \mid~ \exists z \in Z ~\mbox{with}~
z\project I = (z_{i_1},\ldots,z_{i_m})\}$.

\label{lab:sec:nets}
 
\subsection{Nets} 
 
\begin{definition}      
A \name{labelled net} $N$ is a tuple $(S,T,F,\lambda)$, where $S$ is a
finite set of places, $T$ is a finite set (disjoint from $S$) of transitions  
labelled by the function $\lambda: T \to \Sigma$ and  
$F \subseteq (T \times S) \cup (S \times T)$ is the flow relation. 
\end{definition}

Elements of $S \cup T$ are called \name{nodes} of $N$.
Given a node $z$ of net $N$, set $\preset z = \{ x \mid (x,z) \in F \}$
is called \name{pre-set} of $z$   and
 $\postset z = \{ x \mid (z,x) \in F \}$ is called \name{post-set} of $z$.
 Given a set $Z$ of nodes of $N$,
let $\preset Z = \bigcup_{z \in Z} \preset z
\rmand 
 \postset Z = \bigcup_{z \in Z} \postset z$.
 
We only consider nets in which every transition 
has nonempty pre- and post-set.
For each action $a$ in $\Sigma$ let  
$T_a=\{ t \mid t \in T \rmand \lambda(t)=a\}$.
 
A path of net $N$ is a nonempty sequence $x_1 \ldots x_n$ of nodes of $N$  
where $(x_i,x_{i+1}) \in F$ for all $i$ in $\{1,\ldots,n-1\}$. We say
that this path leads from node $x_1$ to $x_n$. 
Net $N$ is said to be \name{connected} if for any two nodes $x$ and
$y$ there exists a path leading $x$ to $y$ or from $y$ to $x$. 
The net is \name{strongly connected} if for any two nodes $x$ and
$y$ there exists a path leading from $x$ to $y$ and a path from $y$ to $x$. 
 
A net is called an \name{S-net} \cite{DesEsp95} if for any transition $t$ we have 
$|\preset{t}|=1=|\postset{t}|$. 
 
A \name{marking} of a net $N$ is mapping $M:S \to \Nat$. 
At marking $M$, a place $p$ is said to be marked if $M(p) \geq 1$,   
and is said to be unmarked if $M(p)=0$. 
 
 
\begin{definition}
A labelled net system is a tuple $(N,M_0,\calG)$ where 
$N=(S,T,F,\lambda)$ is a labelled net; $M_0$ an initial marking;  
and a finite set of final markings $\calG$.  
 
\end{definition} 

%
%
 
A transition $t$ is \name{enabled} at a marking $M$ if all places in its  
pre-set are marked by $M$. In such a case, $t$ can be \name{fired or
occurs} at $M$, to  produce the new marking $M'$ which is defined as : 
for each place $p$ in $S$, $M'(p)=M(p) + F(t,p) - F(p,t)$, 
where $F(x,y)=1$ if $(x,y) \in F$ and $0$ otherwise. 
We write this as $M \step{t} M'$ or $M \step{\lambda(t)} M'$.

 
For some markings $M_0,M_1,\ldots,M_n$  if we have 
$M_0 \step{t_1} M_1 \step{t_2} \ldots \step{t_n} M_n$, then
the sequence $\sigma=t_1 t_2 \ldots t_n$ is called \name{occurrence or firing
sequence}. We write $M_0 \step{\sigma} M_n$ and call $M_n$ the marking
\name{reached} by $\sigma$. This includes an empty transition sequence
$\epsilon$. For each marking $M$ we have $M \step{\epsilon} M$.  
We write $M \step{*} M'$ and call $M'$ \name{reachable} from $M$ if it is  
reached by some occurrence sequence $\sigma$ from $M$. 

A net system $(N,M_0,\calG)$ is called \name{$1$-bounded} if for every place
$p$ of the net and every reachable marking $M$, we have $M(p) \leq 1$. 
Any marking $M$ of a $1$-bounded net can be alternately represented by the subset of places which are marked at $M$. 
In this paper, we consider only $1$-bounded nets.  

We say a net system $(N,M_0)$ is \name{live} if, for every reachable marking $M$ 
and every transition $t$, there exists a marking $M'$ reachable from $M$
which enables $t$.

\begin{definition}
For a labelled net system $(N,M_0,\calG)$, its
\name{language} is defined as
$Lang(N,M_0,\calG)=\{ \lambda(\sigma) \in \Sigma^{*} \mid 
	\sigma \in T^{*}~\mbox{and}~ M_0 \step{\sigma}M, 
	~\mbox{for some}~M \in \calG\}$.
\end{definition} 
 
\subsubsection{Net Systems and its components}
First we define subnet of a net. 
 
 
%
%
 
Let $X$ be a set of nodes of net $N=(S,T,F)$.
Then the triple $N'=(S \cap X, T \cap X, F \cap (X \times X))$ is
a \name{subnet} of net $N$.  
Flow relation $F \cap (X \times X)$  is said to be \name{induced}  
by nodes $X$; and $N'$ is said to be a subnet of $N$ \name{generated}  
by nodes $X$ of $N$.  
 
We follow the convention that if $N'$ is a subnet of $N$ and $z$ is a node of $N'$ then 
$\preset{z}$ and $\postset{z}$ denote the pre-set and post-set taken
in $N$, i.e., $\preset z = \{ x \mid (x,z) \in F \}$ 
and $\postset z = \{ x \mid (z,x) \in F \}$.  
 
\begin{definition}      
 
Subnet $N'$ is called a \concept{component} of $N$ if, 
\begin{itemize}  
\item For each place $s$ of $X$,  
$\preset{s},\postset{s} \subseteq X$,
\item $N'$ is an $S$-net, 
\item $N'$ is connected.
\end{itemize} 
A set $\mathcal{C}$ of components of net $N$ is called 
\concept{S-cover} for $N$, if every place of the net belongs to some component of $\mathcal{C}$.  
\end{definition}  
 
Our notion of component does not require strong connectedness and so  
it is different from notion of $S$-component in \cite{DesEsp95},  
and therefore our notion of $S$-cover also differs from theirs.
 
A net is \name{covered by components or S-coverable} if it has an $S$-cover. 

Fix a distribution $(\Sigma_1,\Sigma_2,\ldots,\Sigma_k)$ of $\Sigma$.  
We define s-decomposition \cite{Hac72} of a net into sequential components.
Note that S-decomposition given here is for labelled nets unlike~\cite{Hac72,DesEsp95}  
and is different from \cite{PL14,P16,P18} also,
as it takes into account the initial marking of the net. 
%
 
\begin{definition}      
A labelled net system $(N,M_0,\calG)$ is called \concept{S-decomposable}  
if, there exists an S-cover $\mathcal{C}$ for net $N=(S,T,F,\lambda)$, such that for each  
$T_i= \bigcup_{a \in \Sigma_i} \lambda^{-1}(a)$, there exists $S_i \subseteq S$  
and the subnet generated by $S_i \cup T_i$ is a component in
$\mathcal{C}$,  
and the initial marking $M_0$ marks only one place of the component. 
\label{s-decomposable} 
\end{definition}  

Now each S-decomposable net $N$ admits an $S$-cover, since  
there exist subsets $S_1,S_2,\ldots,S_k$ of places $S$, such that
 $S= S_1 \cup S_2 \cup \ldots S_k$ and 
 $~^{\bullet}S_i \cup S_i^{\bullet}=T_i$, such that
the subnet $(S_i,T_i,F_i)$  generated by $S_i$ and $T_i$ is an S-net,
where $F_i$ is the induced flow relation from $S_i$ and $T_i$.


Note that, the initial marking, of a $1$-bounded and S-decomposable net system, 
marks exactly one place in each S-component of the given S-cover $S_1,S_2,\ldots,S_k$. 
At any reachable markings of such a net, the total number of tokens in an S-component 
remains contant~\cite{Hac72,DesEsp95}. Therefore, at any reachable marking $M$, 
each S-component has only on token, so at that marking only one place of that component is marked. 
Also, if we collect each place from an S-component we get back the marking of net. 
Hence, marking $M$ can be written as a $k$-tuple from its component places $S_1 \times S_2 \times \ldots \times S_k$.   
 
We use a \emph{product} condition \cite{PL14} on the set of final markings of a net system  
which is known \cite{Zie87,Muk11} to restrict classes of languages. 

%
%

\begin{definition}
An \name{S-decomposable labelled net system}
$(N,M_0,\calG)$ is said to have \name{product-acceptance}
if its set of final markings $\calG$ satisfies \name{product condition}:
if $\langle q_1,q_2,\ldots q_k \rangle \in \calG$ and
   $\langle q'_1,q'_2,\ldots q'_k \rangle \in \calG$ 
then $\{q_1,q'_1\} \times \{q_2,q'_2\} \times \ldots \times \{q_k,q'_k\} \subseteq \calG$.
\end{definition} 

Let $t$ be a transition in $T_a$. Then by $S$-decomposability 
a pre-place and a post-place of $t$ belongs to each $S_i$ for all $i$ in $loc(a)$. 
Let $t[i]$ denote the tuple $\langle p,a,p' \rangle$ such that  
$(p,t), (t,p') \in F_i, ~\mbox{and}~p,p' \in P_i$
for all $i$ in $loc(a)$.

\subsection{Free choice nets and their properties}
 
Let $x$ be a node of a net $N$. The \name{cluster} of $x$, denoted
by $[x]$, is the minimal set of nodes containing $x$ such that
\begin{itemize}
\item if a place $s \in [x]$ then $s^{\bullet}$ is included in
$[x]$, and
\item if a transition $t \in [x]$ then $~^{\bullet}t~$ is
included in $[x]$.
\end{itemize}
 
For a cluster $C$, we denote its set of places by $S_{C}$,
and its set of transitions by $T_{C}$. 
 
The set of all $a$-labelled transitions along with places $r_1$ and
$s_1$ form a cluster of the net shown in Figure~\ref{lsfc-dcp-fig1}.

\begin{definition}[Free choice nets ~\cite{DesEsp95}]
A cluster $C$ is called
\name{free choice} (FC) if all transitions in $C$ have the same pre-set.
A net is called \name{free choice} if all its clusters
are free choice.
\end{definition}

In a labelled net $N$, for a free choice cluster $C$
define the $a$-labelled transitions $C_a = \{ t \in T_C \mid \lambda(t)=a\}$.
If the net has an S-decomposition then 
we associate a \name{post-product}
$\pi(t) = \prodover_{i \in loc(a)} (\postset{t} \cap S_i)$ 
with every such transition $t$.
This is well defined since in S-nets, 
every transition will have at most one post-place in $S_i$.
Let $\post{C_a} = \cupover_{t \in C_a} \pi(t)$.
Let 
$C_a[i] = \postset{C_a} \cap S_i$
and 
$\postdecomp(C_a) = \prodover_{i \in loc(a)} C_a[i]$.
Clearly $\post{C_a} \subseteq \postdecomp(C_a)$.
Sometimes, we may call $C_a[i]$ as \name{post-projection} of the
cluster $C$ with respect to label $a$ and location $i$.
Also, $\postdecomp(C_a)$ is called \name{post-decomposition} of cluster
$C$ with respect to label $a$.  
 
The following definition from \cite{PL15,PL14} 
is used to get direct product representability.  

\begin{definition}[distributed choice property]
An S-decomposable free choice net $N=(S,T,F,\lambda)$ is said to  
    have \name{distributed choice} property (DCP)  if,
for all $a$ in $\Sigma$ and for all clusters $C$ of $N$,
$\postdecomp(C_a) \subseteq \post{C_a}$.
\label{lab-dc} 
\end{definition}

 
\begin{figure}[!htbp]
    \centering
    \begin{minipage}{.399\textwidth}
        \centering
\begin{tikzpicture}[node distance=2cm,->,>=stealth',scale=.919]

  \tikzstyle{place}=[circle,thick,draw=blue!75,fill=blue!10,minimum size=4mm]
    \tikzstyle{red place}=[place,draw=red!75,fill=red!20]
      \tikzstyle{transition}=[rectangle,thin,draw=black!75,
        			  fill=black!10,minimum size=4mm]

    \tikzstyle{every label}=[red]

 \begin{scope}
      \node [place,tokens=1] (r1)    at(1,3)  [label=above:$r_1$]      {};
      \node [place,tokens=1] (s1)    at(4,3)  [label=above:$s_1$]      {};
      \node [place]          (r2)    at(0,1)  [label=below:$r_2$]      {};
      \node [place]          (r3)    at(1.5,1)  [label=below:$r_3$]      {};
      \node [place]          (s2)    at(3.5,1)  [label=below:$s_2$]      {};
      \node [place]          (s3)    at(5,1)  [label=below:$s_3$]      {};

      \node [transition] (t1) at(1,2)        {$a$}
       edge [pre]                             (r1)
       edge [pre]                             (s1)
       edge [post]                            (r2)
       edge [post]                            (s2);
 
      \node [transition] (t11) at(2,2)       {$a$}
       edge [pre]                             (r1)
       edge [pre]                             (s1)
       edge [post]                            (r2)
       edge [post]                            (s3);

      \node [transition] (t21) at(3,2)       {$a$}
       edge [pre]                             (r1)
       edge [pre]                             (s1)
       edge [post]                            (r3)
       edge [post]                            (s2);
 
      \node [transition] (t2) at(4,2)       {$a$}
       edge [pre]                             (r1)
       edge [pre]                             (s1)
       edge [post]                            (r3)
       edge [post]                            (s3);

      \node [transition] (t3) at(-0.5,0)       {$b$}
       edge [pre]                 	   (r2)
       edge [post,bend left]               (r1);

      \node [transition] (t4) at(2,0)       {$c$}
       edge [pre]                 (r3);
 
      \node [transition] (t5) at(3,0)       {$d$}
       edge [pre]                 (s2);
 
      \node [transition] (t6) at(5.5,0)       {$e$}
       edge [pre]                 (s3)
       edge [post,bend right]               (s1);
 
      \draw[->,black,rounded corners] (t4)--(2,-0.5)--(-1,-0.5)--(-1,3)--(r1);  
      \draw[->,black,rounded corners] (t5)--(3,-0.5)--(6,-0.5)--(6,3)--(s1);  
 

\end{scope}
\end{tikzpicture}
 
        \caption{Labelled Free Choice Net system with distributed choice}
        \label{lsfc-dcp-fig1}
    \end{minipage}%
    \begin{minipage}{0.73\textwidth}
        \centering
 
\begin{tikzpicture}[node distance=2cm,->,>=stealth',scale=0.82]

  \tikzstyle{place}=[circle,thick,draw=blue!75,minimum size=4mm]
    \tikzstyle{red place}=[place,draw=red!75,fill=red!20]
      \tikzstyle{transition}=[rectangle,thin,draw=black!75,minimum size=1mm]

    \tikzstyle{every label}=[red]

 \begin{scope}
      \node [place,tokens=1] (p1)    at(0.81,3)  [label=above:$r_1$]      {};
      \node [place,tokens=1] (p2)    at(4,3)  [label=above:$s_1$]      {};
      \node [place]          (p3)    at(0,1)  [label=right:$r_2$]      {};
      \node [place]          (p4)    at(1.5,1)  [label=below:$r_3$]      {};
      \node [place]          (p5)    at(3.5,1)  [label=right:$s_2$]      {};
      \node [place]          (p6)    at(5,1)  [label=below:$s_3$]      {};

      \node [transition] (t1) at(-0.3,2)       {$a$}
       edge [pre]                             (p1)
       edge [post]                            (p3);
 
      \node [transition] (t11) at(0.4,2)       {$a$}
       edge [pre]                             (p1)
       edge [post]                            (p3);

      \node [transition] (t2) at(1.1,2)       {$a$}
       edge [pre]                             (p1)
       edge [post]                            (p4);
 
      \node [transition] (t21) at(1.8,2)       {$a$}
       edge [pre]                             (p1)
       edge [post]                            (p4);

      \node [transition] (t7) at(3.13,2)       {$a$}
       edge [pre]                             (p2)
       edge [post]                            (p5);
 
      \node [transition] (t71) at(3.8,2)       {$a$}
       edge [pre]                             (p2)
       edge [post]                            (p5);

      \node [transition] (t8) at(4.5,2)       {$a$}
       edge [pre]                             (p2)
       edge [post]                            (p6);

      \node [transition] (t81) at(5.13,2)       {$a$}
       edge [pre]                             (p2)
       edge [post]                            (p6);

      \node [transition] (t3) at(-0.8,0)       {$b$}
       edge [pre]                 (p3);

      \node [transition] (t4) at(2.2,0)       {$c$}
       edge [pre]                 (p4);
 
      \node [transition] (t5) at(2.8,0)       {$d$}
       edge [pre]                 (p5);
 
      \node [transition] (t6) at(5.5,0)       {$e$}
       edge [pre]                 (p6);
 
      \draw[->,black,rounded corners] (t4)--(2.2,3)--(p1);  
      \draw[->,black,rounded corners] (t3)--(-0.8,3)--(p1);  
      \draw[->,black,rounded corners] (t5)--(2.8,3)--(p2);  
      \draw[->,black,rounded corners] (t6)--(5.5,3)--(p2);  
 

\end{scope}
\end{tikzpicture}
 
	\caption{S-cover of the net in Fig.~\ref{lsfc-dcp-fig1}} 
	\label{ps-dcp-fig1}
    \end{minipage}
\end{figure}
 
 
\begin{example}[Free choice net system without distributed choice and
with product-acceptance]
\label{lab:ex:fcns-pac}
Consider a distributed alphabet $\Sigma=(\Sigma_1=\{a,b,c\},\Sigma_2=\{a,d,e\})$  
and the net system $N$ 
shown in Figure~\ref{lsfc-nondp-fig1}, labelled over $\Sigma$. 
Its (only possible) S-cover having two S-components 
with sets of places $S_1=\{r_1,r_2,r_3\}$ and $S_2=\{s_1,s_2,s_3\}$ respectively, 
is given in Figure~\ref{lsfc-dp-fig3}.
For the cluster $C$ of $r_1$, we have the set of $a$-labelled transitions 
$C_a=\{t_1,t_2\}$ with $C_a[1]=\{r_2,r_3\}$ and $C_a[2]=\{s_2,s_3\}$.  
So we get 
$\postdecomp(C_a)=\{(r_2,s_2),(r_2,s_3),(r_3,s_2),(r_3,s_3)\}$.  
 
As $\pi(t_1)=\{(r_2,s_2)\}$ and 
$\pi(t_2)=\{(r_3,s_3)\}$ so $\post{C_a}=\{(r_2,s_2),(r_3,s_3)\}$. 
Since $\postdecomp(C_a)\nsubseteq \post{C_a}$, 
this cluster does not have distributed choice, so
the net system does not have it. 

With the set of final markings
$\{(r_1,s_1),(r_1,s_2),(r_2,s_1),(r_2,s_2)\}$ satisfying product
condition, the language $L_p$ accepted by this net system is
$r^*[\epsilon+a+ab+ad]$ where
$r= (a(bd+db)+a(ce+ec))$. 
\end{example}  
 
%

 
\begin{example}[Free choice net system without distributed choice and 
not satisfying product condition of the set of final markings]
\label{exfcnondcpnetpac}
Consider the net system of Example~\ref{lab:ex:fcns-pac} whose
underlying net is shown in Figure~\ref{lsfc-nondp-fig1}.
With set of final markings $\{(r_1,s_1),(r_2,s_2)\}$, which do not
satisfy product condition, the language $L_s$ accepted by this net system is
$r^*[\epsilon+a]$ where $r= (a(bd+db)+a(ce+ec))$. 
\end{example}


\begin{example}[A net with distributed choice property and product
acceptance condition]  
\label{lab:ex:fcns-dcp-pac}
Consider the labelled net system $(N,(r_1,s_1),\calG))$ of
Figure~\ref{lsfc-dcp-fig1}, defined over distributed alphabet  
$\Sigma=(\Sigma_1=\{a,b,c\}, \Sigma_2=\{a,d,e\})$, 
and where $\calG=\{(r_1,s_1),(r_1,s_2),(r_2,s_1),(r_2,s_2)\}$ is 
the set of final markings satisfying product condition. 
Its two S-components with sets of places
$S_1=\{r_1,r_2,r_3\}$ and $S_2=\{s_1,s_2,s_3\}$, 
are shown in Figure~\ref{ps-dcp-fig1}.
For cluster $C$ of $r_1$, we have $C_a=\{t_1,t_2,t_3,t_4\}$, 
$C_a[1]=\{r_2,r_3\}$ and $C_a[2]=\{s_2,s_3\}$, hence
$postdecomp(C_a)=\{(r_2,s_2),(r_2,s_3),(r_3,s_2),(r_3,s_3)\}$.  
We have
$\pi(t_1)=\{(r_1,s_2)\}$,
$\pi(t_2)=\{(r_2,s_3)\}$,
$\pi(t_3)=\{(r_3,s_2)\}$ and 
$\pi(t_4)=\{(r_3,s_3)\}$.  \\
So $post(C_a)=\{(r_2,s_2),(r_2,s_3),(r_3,s_2),(r_3,s_3)\}$.  
Therefore, $postdecomp(C_a)= post(C_a)$. 
For all other clusters this holds trivially, because each of them
have only one transition and only one post-place, hence the net 
has distributed choice. 
Language $L_3$ accepted by the net system is 
$r^*[\epsilon+a+a(b+c)+a(d+e)]$ where
$r= (a(bd+db)+a(be+eb)+a(cd+dc)+a(ce+ec))$. 
\end{example}
 
\begin{example}[A net with distributed choice and subset-acceptance]  
\label{lab:ex:fcns-dcp-sac}
    Consider the net system of Example~\ref{lab:ex:fcns-dcp-pac},
with the underlying net shown in 
    Figure~\ref{lsfc-dcp-fig1} with set of final markings
$\{(r_1,s_1),(r_2,s_2)\}$.
The language $L_4$ accepted by this net system is
$r^*[\epsilon+a]$ where
$r= (a(bd+db)+a(be+eb)+a(cd+dc)+a(ce+ec))$. 
\end{example}

\section{Product systems}
\label{lab:sec:ps}
 
We define product systems over a fixed distribution  
$(\Sigma_1,\Sigma_2,\ldots,\Sigma_k)$ of $\Sigma$.  
First we define sequential systems. 
\begin{definition}      
A \name{sequential system} over a set of actions $\Sigma_i$ is a finite state
automaton
$ A_i=\langle P_i,\to_i,G_i,p_i^0 \rangle$ where
$P_i$ are called \concept{states},
$G_i \subseteq P_i$ are final states,
$p_i^0 \in P_i$ is the initial state, and 
$ \to_i \subseteq P_i \times \Sigma_i \times P_i$ 
is a set of \concept{local moves}.
\end{definition} 
 
For a local move $t=\langle p,a,p' \rangle$ of $\rightarrow_i$ state 
$p$ is called \name{pre-state} sometimes denoted by \pre{t} and $p'$ is called 
\name{post-state} of $t$, sometimes denoted by \post{t}.
Such a move is sometimes called an \name{$a$-move} or an
\name{$a$-labelled} move.  
 
Let $\to^i_a$ denote the set of all $a$-labelled moves in the sequential system $A_i$.
The \name{language} of a sequential system is defined as usual.

%
 
\begin{definition} 
Let $ A_i=\langle P_i,\to_i,G_i,p_i^0 \rangle$ be a sequential system over  
alphabet $\Sigma_i$ for $1 \leq i \leq k$.
A \concept{product system} $A$ over the distribution 
$\Sigma=(\Sigma_1,\dots,\Sigma_k)$  
sometimes denoted by $\langle A_1,\ldots,A_k \rangle$ is 
a tuple $\langle P, \To, R^0, G\rangle$, where :\\
\noindent
$P=\prodover_{i \in Loc} P_i$ is the set of \name{product states} of $A$;  
$R^0= (p_1^0,\dots,p_k^0)$ is the initial product state of $A$;
$G \subseteq \prodover_{i \in Loc}G_i$ is the set of final product states of $A$;  
and, $\To  \subseteq \bigcup_{a \in \Sigma} \To_a$, denotes the \name{global moves}  
of $A$ where  $\To_a = \prodover_{i \in loc(a)} \to^i_a$. 
\end{definition} 
Elements of $\To_a$ are sometimes called \name{global $a$-moves}. 
Any global $a$-move is \emph{global} within the set of component sequential machines  
where action $a$ occurs. For a global $a$-move $g$, we define its set of  
pre-states \name{pre($g$)} as the set of pre-states of all its component $a$-moves;  
the set of post-states \name{post($g$)} as the set of post-states of all its component $a$-moves;  
and, use notation $g[i]$ for its $i$-th component--local $a$-move--belonging to $A_i$,  
for all $i$ in $loc(a)$. We use $R[i]$ for the projection of a product state $R$ in $A_i$.
 
%
 
\subsection{Direct products}

%
 
With set of global moves $\To =\bigcup_{a \in \Sigma} \To_a$ and 
final states  $G= \prodover_{i \in Loc}G_i$ 
$A$ is called \name{product system with product-acceptance}.  
These systems are called \name{direct products} in \cite{Muk11}.
 
With set of global moves $\To =\bigcup_{a \in \Sigma} \To_a$ and 
final states  $G \subseteq \prodover_{i \in Loc}G_i$ 
$A$ is called \name{product system with subset-acceptance}.
These systems are called \name{synchronous products} in \cite{Muk11}.
 
The \name{runs} of a product system $A$ over some 
word $w$ are described by associating product states with prefixes of $w$: the  
empty word is assigned initial product state $R^0$, and for every  
prefix $va$ of $w$, if $R$ is the product state reached after $v$  
and $Q$ is reached after $va$ where,
for all $j \in loc(a), \langle R[j],a,Q[j] \rangle \in \to_j$,
and for all $j \notin loc(a), R[j]=Q[j]$. 
A run of a product system over word $w$ is said to be \name{accepting}  
if the product state reached after $w$ is in $G$. We define the  
\name{language} $Lang(A)$ of product system $A$, as the  set of 
words on which the product system has an accepting run.  
The set of languages accepted by direct (resp. synchronous)
products is called \name{direct (resp. synchronous) product languages}.  

We use a characterization from \cite{Muk11} of languages accepted by  
direct products. 
%
%
%
\begin{proposition}
\label{lm-ps-char}
Let $L$ be a language defined over distributed alphabet $\Sigma$. 
The language $L$ is a direct product language iff 
$L = \{w \in \Sigma^* \mid \forall i \in \{1,\ldots,k\}, ~\exists u_i \in L 
~\mbox{such that}~ w \project_{\Sigma_i} = u_i\project_{\Sigma_i} \}.
$
\end{proposition}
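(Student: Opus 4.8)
The plan is to prove the two implications separately, working from the run-based definition of a direct product language. For the forward direction, suppose $L = Lang(A)$ for a direct product $A = \langle A_1,\ldots,A_k \rangle$ with product-acceptance. First I would observe the standard fact that, because the global moves are exactly $\To = \bigcup_{a\in\Sigma}\prodover_{i\in loc(a)}\to^i_a$ (a full Cartesian product of local $a$-moves) and the acceptance condition $G = \prodover_{i\in Loc} G_i$ is itself a full product, a run of $A$ over $w$ decomposes coordinatewise: $w$ is accepted by $A$ iff for every $i$, the projection $w\project_{\Sigma_i}$ is accepted by the sequential system $A_i$. This is proved by a routine induction on $|w|$, using that a global $a$-move exists in $\To_a$ iff a local $a$-move exists in each $\to^i_a$ for $i\in loc(a)$, and that coordinates outside $loc(a)$ are unchanged. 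Consequently $L = \{w \mid \forall i,\ w\project_{\Sigma_i} \in Lang(A_i)\}$. Now if $w\project_{\Sigma_i} \in Lang(A_i)$, pick any $v_i \in L$ (for instance using the other components' accepting projections — here one uses that each $Lang(A_i)$ is nonempty when $L$ is nonempty, or handles $L=\emptyset$ trivially); more directly, from $w\project_{\Sigma_i}\in Lang(A_i)$ one builds $u_i\in L$ whose $\Sigma_i$-projection is $w\project_{\Sigma_i}$ by combining the $A_i$-run for $w\project_{\Sigma_i}$ with, in each other coordinate $j$, an arbitrary accepting run of $A_j$ (which exists iff $L\neq\emptyset$; the empty case is immediate). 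This gives the inclusion $L \subseteq \{w \mid \forall i\, \exists u_i\in L,\ w\project_{\Sigma_i} = u_i\project_{\Sigma_i}\}$, and the reverse inclusion of that display into $L$ follows since $u_i\in L$ forces $u_i\project_{\Sigma_i}\in Lang(A_i)$, hence $w\project_{\Sigma_i}\in Lang(A_i)$ for all $i$, hence $w\in L$.

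For the converse direction, assume $L = \{w \mid \forall i\, \exists u_i\in L,\ w\project_{\Sigma_i} = u_i\project_{\Sigma_i}\}$; I must exhibit a direct product accepting $L$. Define, for each $i$, the language $L_i = \{ w\project_{\Sigma_i} \mid w \in L\} \subseteq \Sigma_i^*$, and let $A_i$ be any automaton (a sequential system over $\Sigma_i$) with $Lang(A_i) = L_i$ — such exists because the characterizing equation forces $L$ to be recursively described, but more to the point, for a Kleene-type theorem one only needs these component languages to be of the appropriate kind; here the cleanest route is to note that the hypothesis exactly says $L$ equals the synchronized shuffle $L_1 \shuf \cdots \shuf L_k$ restricted by the membership conditions, so I would set up $A$ as $\langle A_1,\ldots,A_k\rangle$ with product-acceptance. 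Then by the coordinatewise run decomposition established above, $Lang(A) = \{ w \mid \forall i,\ w\project_{\Sigma_i} \in L_i \}$. It remains to check $\{ w \mid \forall i,\ w\project_{\Sigma_i} \in L_i \} = L$: the inclusion $L\subseteq$ is clear from the definition of $L_i$; for $\supseteq$, if $w\project_{\Sigma_i}\in L_i$ for all $i$, then by definition of $L_i$ there is $u_i\in L$ with $u_i\project_{\Sigma_i} = w\project_{\Sigma_i}$, which is precisely the right-hand side of the hypothesis, so $w\in L$.

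The one genuine subtlety — and the step I expect to be the main obstacle to state cleanly rather than to prove — is the handling of the "witness" quantifier structure: the right-hand side of the proposition asks for a \emph{separate} witness $u_i$ per coordinate, and one must be careful that local words $w\project_{\Sigma_i}$ coming from different witnesses can be simultaneously realized, i.e. that membership in all the component languages is jointly sufficient for membership in $L$. This is exactly what the direct-product structure (full Cartesian product of moves and of final states, with no synchronization constraint beyond shared letters) buys us, and it is why the equivalence fails for synchronous products. I would make sure the coordinatewise run-decomposition lemma is stated and proved carefully, since both directions of the proposition rest on it; everything else is bookkeeping with projections and the definition of $L_i$.
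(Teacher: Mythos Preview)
The paper does not actually prove this proposition; it is quoted as a known characterization from Mukund and stated without proof, so there is no argument in the paper to compare yours against. Your sketch is essentially the standard proof and is correct in outline, with two small remarks.

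First, you overwork the inclusion $L \subseteq \{w \mid \forall i\ \exists u_i\in L,\ w\project_{\Sigma_i} = u_i\project_{\Sigma_i}\}$: for $w\in L$ simply take $u_i = w$ for every $i$. No run-splicing or appeal to nonemptiness of the other components is needed here; that machinery is only relevant (and you use it correctly) for the reverse inclusion.

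Second, in the converse direction you need each $L_i = \{w\project_{\Sigma_i} \mid w\in L\}$ to be accepted by a finite sequential system, i.e.\ to be regular, and the closure condition in the statement does not by itself force this (for $k=1$ the condition is vacuous, so any language satisfies it). The intended reading, here and in the source being cited, is that $L$ is a regular language over $\Sigma$; then each projection $L_i$ is regular and your construction of the direct product $\langle A_1,\ldots,A_k\rangle$ goes through cleanly. Your hand-wave about ``the characterizing equation forces $L$ to be recursively described'' is not the right justification and should be replaced by this regularity assumption.
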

If $L=Lang(A)$ for direct product $A=\langle A_1,\dots,A_k\rangle$
defined over distributed alphabet $\Sigma$ then 
$L=\Lang(A_1) \syncshuf \dots \syncshuf \Lang(A_k)$.
 
We also use a characterization of synchronous product languages~\cite{Muk11}. 
\begin{proposition}
\label{lab:lm:spl}
    A language over distributed alphabet $\Sigma$ is accepted by a product system  
with subset-acceptance if and only if it can be expressed as a finite union of 
direct product languages. 
\end{proposition}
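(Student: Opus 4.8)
The plan is to prove the two implications separately; the ``only if'' direction is routine, while the ``if'' direction needs a superposition construction.

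For the ``only if'' direction, suppose $L=\Lang(A)$ for a product system $A$ with subset-acceptance, and write its (finite) set of final product states as $G=\{g^{(1)},\dots,g^{(m)}\}$ with $g^{(j)}=(g^{(j)}_1,\dots,g^{(j)}_k)$. For each $j$ let $A^{(j)}$ be the product system with the same components and the same (all-combinations) global moves as $A$, but in which the set of final states of the $i$-th component is redefined to be the singleton $\{g^{(j)}_i\}$; then the set of final product states of $A^{(j)}$ is $\{g^{(j)}_1\}\times\cdots\times\{g^{(j)}_k\}$, a product, so $A^{(j)}$ is a direct product. Since the set of runs over a word $w$ depends only on the moves and the initial state, not on the acceptance set, and since a run ends in $G$ iff it ends at $g^{(j)}$ for some $j$, we get $L=\Lang(A)=\bigcup_{j=1}^{m}\Lang(A^{(j)})$, a finite union of direct product languages. (When $G=\emptyset$, $L=\emptyset$, which is the direct product language of any system whose components have no final state.)

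For the ``if'' direction, suppose $L=L_1\cup\cdots\cup L_m$ with $L_j=\Lang(B^{(j)})$ for direct products $B^{(j)}=\langle B^{(j)}_1,\dots,B^{(j)}_k\rangle$. First I would \emph{complete} each component: let $\widehat{B}^{(j)}_i$ be $B^{(j)}_i$ with a fresh absorbing, non-final ``dead'' state $\bot^{(j)}_i$ adjoined, together with a move $\langle p,a,\bot^{(j)}_i\rangle$ for every state $p$ and every $a\in\Sigma_i$. Because the dead state is absorbing and never final, adjoining it leaves the direct-product language unchanged, so $\Lang(\langle\widehat{B}^{(j)}_1,\dots,\widehat{B}^{(j)}_k\rangle)=L_j$; but now every component of $\widehat{B}^{(j)}$ has an $a$-move from every state, so $\widehat{B}^{(j)}$ admits at least one run over every word. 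Now build a single product system $A=\langle A_1,\dots,A_k\rangle$ by ``stacking'' the $m$ completed systems: the state set of $A_i$ is the Cartesian product of the $m$ completed state sets, the initial state is the tuple of initial states, and $A_i$ has an $a$-move from $(p^{(1)},\dots,p^{(m)})$ to $(r^{(1)},\dots,r^{(m)})$ exactly when $\langle p^{(j)},a,r^{(j)}\rangle$ is a move of $\widehat{B}^{(j)}_i$ for every $j$; declare every state of each $A_i$ final, and let the set $G$ of final product states of $A$ consist of those product states $R$ whose $j$-th ``layer'' (the tuple of $j$-th coordinates across all locations) lies in the product of the $G^{(j)}_i$ for at least one $j$. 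Using that a direct product admits every combination of local $a$-moves, one checks that the runs of $A$ over a word $w$ are in bijection with the tuples $(\rho_1,\dots,\rho_m)$ where each $\rho_j$ is a run of $\widehat{B}^{(j)}$ over $w$, the $j$-th layer of the $A$-run being $\rho_j$. Hence $A$ accepts $w$ iff some such tuple has $\rho_{j_0}$ ending in the product of the $G^{(j_0)}_i$ for some $j_0$ (the remaining layers always exist, by completeness), i.e. iff $\widehat{B}^{(j_0)}$ accepts $w$ for some $j_0$, i.e. iff $w\in\bigcup_j L_j=L$. Thus $\Lang(A)=L$ and $A$ is a product system with subset-acceptance, since $G$ is trivially contained in the product of the component final-state sets.

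I expect the ``if'' direction to be the main obstacle. The subtlety is that a product system cannot simply ``guess which $L_j$ to use'' the way an NFA guesses a branch: the components synchronize on shared global actions, and a choice made independently in each location could be inconsistent across locations. The stacking-plus-completion device reconciles this — completion guarantees that whichever layer is used to accept, all remaining layers can be carried along in lock-step without the joint run ever getting stuck, and ``at least one layer is accepting'' is realizable as a subset of a product of component final-state sets. The ``only if'' direction is essentially the observation that a singleton is a degenerate product; the only point to watch is to \emph{re-designate} the component final states rather than reuse the given ones.
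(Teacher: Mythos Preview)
Your proof is correct. Note, however, that the paper does not supply its own proof of this proposition: it is quoted as a known characterization from \cite{Muk11} and used without argument, so there is nothing in the paper to compare your argument against directly.

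For context, the paper \emph{does} prove the analogous statement for product systems with globals (Lemma~\ref{lab:lm:psglobalssac-lang-char}), and there the ``if'' direction is handled by taking a disjoint union of the component automata, relying on the globals relation to keep the summands from interacting during a run. That shortcut is not available in the setting of Proposition~\ref{lab:lm:spl}: a product system with subset-acceptance has, by definition, $\To=\bigcup_{a}\To_a$, i.e.\ \emph{all} combinations of local $a$-moves are global moves, so a disjoint union would permit hybrid runs that jump between summands at a synchronization. Your Cartesian-product (``stacking'') construction together with completion of the components --- so that every layer can always be carried along regardless of which layer will eventually witness acceptance --- is the standard remedy, and your identification of this as the essential difficulty in the ``if'' direction is accurate.
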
  
 
The following property of direct products from \cite{PL14}  
clubs together the places of product system which correspond to places
of a cluster in the net. 
 
\begin{definition}[\psmatchings{}]
For global $a \in \Sigma$,  \name{\matching($a$)} is a subset of tuples
$\prodover_{i \in loc(a)} P_i$ such that  
for all $i$ in $loc(a)$,  projection of these tuples is the set of  
all pre-states of $a$-moves in $\rightarrow_i^a$,
and if a state $p \in P_i$ appears in one tuple,
it does not appear in another tuple.
We say a product state $R$ is \name{in \matching($a$)} 
if its projection $R \project loc(a)$ is in the matching.

A product system is said to have \name{matching of labels} if
for all global $a \in \Sigma$, there is a 
    suitable \matching($a$). Such a system is denoted by
    \name{\psmatchings{}}.
\label{cluster-ps-nd}
\end{definition}
We have \name{\psmatchings{} with product-acceptance},
if the set of final product states of it is a product of  
final states of component machines, or  
\concept{\psmatchings{} with subset-acceptance},
if the set of final product states is a subset of product of final states  
of individual components.


%

   A run of \psmatchings{} $A$ is said to be 
\name{consistent with a matching of labels}~\cite{PL14} if 
for all global actions $a$ and every prefix of the run
$R^0 \Step{v} R \Step{a} Q$,
the pre-states $R \project loc(a)$ are in the matching.


Consistency of matchings is a behavioural property and to check if a 
\psmatchings{} $A$ has it and can be done in \class{PSPACE}
\cite{PL14,PL15}. 
 

The following property from~\cite{PL14} is used to capture free choice
property.
\begin{definition}[conflict-equivalent matchings for \psmatchings]
In a product system,
we say the local move $\langle p,a,q_1 \rangle \in \to_i$ is
\name{conflict-equivalent} to the local move
$\langle p',a,q'_1 \rangle \in \to_j$,
 if for every other local move
$\langle p,b,q_2 \rangle \in \to_i$, 
there is a local move $ \langle p',b,q'_2 \rangle  \in \to_j$  
and, conversely, for moves from $p'$ there are 
corresponding outgoing moves from $p$.
For global action a, its \matching($a$) is called 
\name{conflict-equivalent matching}, if
whenever $p,p'$ are related by the \matching($a$),
their outgoing local $a$-moves are conflict-equivalent.
\end{definition}
 

%
%
%
%
  
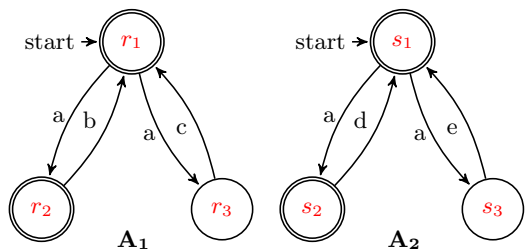
\begin{figure}[!htbp]
	\begin{minipage}{.6\textwidth}
\begin{tikzpicture}[->,>=stealth',shorten >=1pt,%
auto,node distance=2.8cm,semithick,inner sep=2pt,bend
angle=15,scale=.6]

  \tikzstyle{every state}=[fill=white!20,ellipse,draw=black!,text=red]

\begin{scope}

   \node[state,accepting]             (A) at(-2,0)  {$r_2$};
   \node[state]                        (B) at(2,0)   {$r_3$};
   \node[initial,state,above,accepting]       (E) at(0,3)  {$r_1$}; 
 
   \node[state,accepting]             (C) at(4,0)  {$s_2$};
   \node[state]                        (D) at(8,0)   {$s_3$};
   \node[initial,state,above,accepting]       (F) at(6,3)  {$s_1$}; 

   \path (A) edge[bend right]  node {b} (E)
         (B) edge[bend right]  node {c} (E)
         (E) edge[bend right]    node [left] {a} (A)
             edge [bend right]   node [left] {a} (B);

   \path (C) edge[bend right]  node {d} (F)
         (D) edge[bend right]  node {e} (F)
         (F) edge[bend right]    node [left] {a} (C)
             edge [bend right]   node [left] {a} (D);

%
      \node[below=0.8cm,text width=1cm] at (0.5,1.0){
            $\mathbf{A_1}$
        };

      \node[below=0.8cm,text width=1cm] at (6.5,1.0){
            $\mathbf{A_2}$
        };
%
\end{scope}
\end{tikzpicture}

	\caption{Product system $(A_1,A_2)$} 
	\label{ps-fig}
	\end{minipage}%
\vspace*{0.3cm}
    \begin{minipage}{0.4\textwidth}
Figure~\ref{ps-fig}, shows a product system  
defined over a distributed alphabet $\Sigma=(\Sigma_1=\{a,b,c\},
\Sigma_2=\{a,d,e\})$. It has two components $A_1$,and $A_2$ with  
final states $G_1= \{r_1,r_2\}$ and, $G_2= \{s_1,s_2\}$, respectively. 
    \end{minipage}
\end{figure}

\begin{example}[Product system with matchings]
\label{lab-ex-dpmat}
Consider product system of Figure~\ref{ps-fig} and relation
$\matching(a)=\{(r_1,s_1)\}$ relation. 
This matching is conflict-equivalent and the system is consistent with
this matching relation. 
 
We have a \psmatchings{} $\calA=(A_1,A_2)$ with product
acceptance condition, if its set of final states is $G_1 \times G_2$.
With the set of final states as
$\{(r_1,s_1),(r_2,s_2)\}  \subseteq G_1 \times G_2$, 
we have a \psmatchings{} $\calB=(A_1,A_2)$ having subset-acceptance. 
\end{example}  
 
 
%

Lemma~\ref{lab:lm:sp-not-dp} presents a language not accepted by
any direct product. 
 
 
\begin{lemma}        
\label{lab:lm:sp-not-dp} 
The language $L_4= \{(abd+adb+abe+aeb+ace+aec+acd+adc)^*(\epsilon+a)\}$  
from Example~\ref{lab:ex:fcns-dcp-sac} 
is not accepted by any direct product. 
\end{lemma}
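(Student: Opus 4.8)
The plan is to argue by contradiction using the characterization of direct product languages in Proposition~\ref{lm-ps-char}. Suppose $L_4$ were a direct product language over $\Sigma=(\Sigma_1=\{a,b,c\},\Sigma_2=\{a,d,e\})$. Then by Proposition~\ref{lm-ps-char} we would have
$L_4 = \{w \in \Sigma^* \mid w\project_{\Sigma_1} = u_1\project_{\Sigma_1} \text{ for some } u_1 \in L_4, \text{ and } w\project_{\Sigma_2} = u_2\project_{\Sigma_2} \text{ for some } u_2 \in L_4\}$, i.e. $L_4 = \Lang(A_1) \syncshuf \Lang(A_2)$ where $\Lang(A_i) = L_4\project_{\Sigma_i}$. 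So it suffices to exhibit a word $w$ that agrees componentwise with words of $L_4$ on both $\Sigma_1$ and $\Sigma_2$, yet does not itself lie in $L_4$.

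First I would compute the two projections. Observe that every factor $abd, adb, abe, aeb, ace, aec, acd, adc$ projects to $\Sigma_1=\{a,b,c\}$ as one of $ab, ab, ab, ab, ac, ac, ac, ac$, i.e. to $ab$ or $ac$; and projects to $\Sigma_2=\{a,d,e\}$ as $ad, ad, ae, ae, ae, ae, ad, ad$, i.e. to $ad$ or $ae$. Hence $L_4\project_{\Sigma_1} \subseteq (ab+ac)^*(\epsilon+a)$ and $L_4\project_{\Sigma_2} \subseteq (ad+ae)^*(\epsilon+a)$ (and in fact equality holds, but only the inclusions that produce a witness are needed). The key point: the choice of $b$ vs.\ $c$ in $\Sigma_1$ and the choice of $d$ vs.\ $e$ in $\Sigma_2$ are \emph{independent} in the projections but \emph{correlated} in $L_4$ itself — the factor $adc$ shows $(c,d)$ is allowed, but with the $c$-step happening \emph{after} the $d$-step. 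The candidate witness is $w = acd$: its $\Sigma_1$-projection is $ac$, matching $u_1 = adc \in L_4$; its $\Sigma_2$-projection is $ad$, matching $u_2 = acd \in L_4$. So $w$ passes the Proposition~\ref{lm-ps-char} test. It remains to check $w = acd \notin L_4$: a nonempty word of $L_4$ either ends in a trailing lone $a$ (so has length $\equiv 1 \pmod 3$) or is a concatenation of the eight length-$3$ factors; $acd$ has length $3$ but is not among the eight listed factors (those with first local letter $c$ are $ace, aec, acd$... — here I must be careful).

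The step I expect to be the main obstacle is precisely this last verification: one must confirm $acd$ is genuinely not in $L_4$, and if it turns out $acd$ \emph{is} a listed factor, switch to another witness. Scanning the eight factors, $acd$ \emph{does} appear in the list, so $w=acd$ fails; instead I would take $w = abd \cdot acd$... no — better, take $w = acb$: its $\Sigma_1$-projection is $acb$, which is not of the form $(ab+ac)^*(\epsilon+a)$, so that fails too. The clean witness is $w = a\,c\,e\,\cdot$ reconsidered: I should pick a letter order that no factor realizes. Take $w$ built from the $\Sigma_1$-word $ac$ paired with the $\Sigma_2$-word $ae$ but interleaved as $aec$ — then $w\project_{\Sigma_1}=ac$ (witnessed by $aec\in L_4$, whose $\Sigma_1$-proj is $ac$) and $w\project_{\Sigma_2}=ae$ (witnessed by $ace\in L_4$, whose $\Sigma_2$-proj is $ae$), yet $aec$... is itself a listed factor. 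The real obstacle, then, is that all single-factor interleavings of $\{a\}$-with-$\{b/c\}$ against $\{a\}$-with-$\{d/e\}$ happen to be covered by the eight factors; so the genuine witness must use \emph{two} factors, e.g. $w = (acd)(abe)$ reshuffled across the boundary so that the induced $\Sigma_1$-word $acab$ and $\Sigma_2$-word $adae$ each match some $u_i\in L_4$ (namely $u_1$ with $\Sigma_1$-proj $acab$, $u_2$ with $\Sigma_2$-proj $adae$) while $w$ interleaves the first block's $c$ with the second block's $d$ in an order ruled out by the factor list — concretely $w = a\,d\,c\,a\,b\,e$, which projects to $acab$ on $\Sigma_1$ and $adae$ on $\Sigma_2$, is not a concatenation of two listed factors, and has length $\not\equiv 1\pmod 3$ prefix structure. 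Verifying that this $w$ lies outside $L_4$ (checking it is not any concatenation of factors, accounting for the trailing-$a$ option) is the bookkeeping core of the argument; once done, $w$ contradicts Proposition~\ref{lm-ps-char}, so $L_4$ is not a direct product language.
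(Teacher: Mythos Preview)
Your overall approach—invoking Proposition~\ref{lm-ps-char} and exhibiting a word $w\notin L_4$ whose $\Sigma_1$- and $\Sigma_2$-projections each agree with some word of $L_4$—is exactly the paper's approach. The gap is in the execution: your final candidate $w=adcabe$ is actually in $L_4$, since it factors as $(adc)(abe)$ and both $adc$ and $abe$ appear among the eight listed factors. More generally, the eight factors are precisely all words $axy$ and $ayx$ with $x\in\{b,c\}$ and $y\in\{d,e\}$, so \emph{every} length-$3k$ word whose $\Sigma_1$-projection lies in $(ab+ac)^k$ and whose $\Sigma_2$-projection lies in $(ad+ae)^k$ is a concatenation of $k$ factors and hence already in $L_4$. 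Your search among such words cannot succeed.

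What you overlooked is the trailing $(\epsilon+a)$: since $a\in L_4$, the single letter $a$ supplies a $\Sigma_2$-projection equal to $a$, which permits a witness of length~$2$. The paper simply takes $w=ab$. Then $w\project_{\Sigma_1}=ab=u_1\project_{\Sigma_1}$ for $u_1=abd\in L_4$, and $w\project_{\Sigma_2}=a=u_2\project_{\Sigma_2}$ for $u_2=a\in L_4$; yet $ab\notin L_4$ because every word of $L_4$ has length congruent to $0$ or $1$ modulo~$3$. This gives the contradiction in one line.
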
  
\begin{proof}        
Consider a word $w=ab$ not in $L$ and,
words $u_1=abd$, $u_2=a$ which are in  $L$. 
We have projections, 
$w\project_{\Sigma_1}=ab=u_1\project_{\Sigma_1}$
$w\project_{\Sigma_2}=a=u_1\project_{\Sigma_2}$.
Therefore, by Proposition~\ref{lm-ps-char}, 
word $w$ is 
in $L$, which is a contradiction. 
\qed
\end{proof} 

We know that the class of synchronous product languages is strictly
larger than the class of direct product languages~\cite{Muk11}. 
With the matching relations this relationship is preserved.
The \psmatchings{} $\calB$ of Example~\ref{lab-ex-dpmat} accepts language $L_4$ 
which by Lemma~\ref{lab:lm:sp-not-dp}, is not accepted by  
any direct product.  
Hence, the class of languages accepted by \psmatchings{} with subset-acceptance condition, 
is strictly larger, than the class of languages
accepted by \psmatchings{} with product-acceptance.  
 
However, using Proposition~\ref{lab:lm:spl} we have the following
characterization of \psmatchings{} with subset-acceptance. 
\begin{corollary}
\label{lab:cor:fcpsmatsac-lang-char}
A language $L$ is accepted by a product system with
subset-acceptance and,  having  conflict-equivalent and consistent matchings 
if and only if  
$L$ can be expressed as a finite union of languages accepted by
product system with product-acceptance  and, having
conflict-equivalent and consistent matchings.
\end{corollary}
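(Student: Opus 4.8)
The statement refines Proposition~\ref{lab:lm:spl} to product systems carrying matchings, and the plan is to prove the two implications separately, using that proposition only for the language-level decomposition.

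For the implication from subset-acceptance to a finite union, suppose $L=\Lang(A)$ for a \psmatchings{} $A=\langle A_1,\dots,A_k\rangle$ with subset-acceptance whose matchings are conflict-equivalent and consistent, and let $G\subseteq\prod_{i\in Loc}G_i$ be its set of final product states. For each $g=(g_1,\dots,g_k)\in G$ I would form the product system $A_g$ that agrees with $A$ on components, local moves, initial product state and matchings but takes the singleton $\{g_i\}$ as the set of final states of the $i$-th component; then $A_g$ has product-acceptance, since its set of final product states is $\{g\}$, which vacuously satisfies the product condition. Because the local moves and the initial product state are unchanged, $A_g$ and $A$ have exactly the same runs, so the matchings of $A_g$ are still conflict-equivalent and consistent and $A_g$ is still a \psmatchings{}. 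Finally $w\in\Lang(A)$ iff some run of $A$ on $w$ ends in $G$, i.e. iff for some $g\in G$ some run ends in $g$, i.e. iff $w\in\bigcup_{g\in G}\Lang(A_g)$; hence $L=\bigcup_{g\in G}\Lang(A_g)$ is the desired finite union.

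For the converse, let $L=\bigcup_{j=1}^{m}\Lang(A^j)$ with each $A^j=\langle A^j_1,\dots,A^j_k\rangle$ a \psmatchings{} with product-acceptance having conflict-equivalent and consistent matchings. Forgetting the matchings, every $A^j$ is a direct product, so each $\Lang(A^j)$ is a direct product language, $L$ is a finite union of direct product languages, and Proposition~\ref{lab:lm:spl} already supplies a product system with subset-acceptance accepting $L$; the work is to produce one that additionally carries conflict-equivalent and consistent matchings. The plan is to build it locationwise: take $B_i$ to be the synchronous product of the completions $\widehat{A^j_i}$ (each $A^j_i$ made total over $\Sigma_i$ by a fresh non-final absorbing sink), so that a run of $B=\langle B_1,\dots,B_k\rangle$ simulates the runs of all the $A^j$ on one and the same word in lockstep; declare the final product states of $B$ to be exactly those whose re-indexed $j$-th coordinate is a final product state of $A^j$ for some $j$, which is a subset of the product of the component final sets, hence gives subset-acceptance. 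Then $\Lang(B)=\bigcup_j\Lang(A^j)=L$: for $\supseteq$ every move of $A^j$ lifts to $\widehat{A^j}$, so an accepting run of some $A^j$ is simulated by a run of $B$ landing in a final product state; for $\subseteq$ the sinks are non-final and absorbing, so a run of $B$ reaching a final product state has, in the coordinate witnessing acceptance, never visited a sink and hence restricts to an accepting run of the corresponding $A^j$. Finally define the matchings of $B$ by interleaving those of the $A^j$: a tuple of $B$-states is matched for $a$ exactly when each of its $j$-slices is matched for $a$ in $A^j$.

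The main obstacle is verifying that this interleaved relation is genuinely a matching of $B$ and is conflict-equivalent and consistent. Well-definedness and the no-repeated-state condition descend slicewise from the matchings of the $A^j$; consistency is slicewise as well, since a run of $B$ projects onto runs of the individual $A^j$ and a global $a$-move is taken in $B$ precisely when it is taken in every $A^j$, where by consistency there the pre-states form a matched tuple. The delicate part is conflict-equivalence: two $B$-states matched for $a$ have, in every coordinate $j$, coordinates matched for $a$ in $A^j$, whose outgoing action sets coincide by conflict-equivalence in $A^j$, and I expect the real care to go into arranging the completion — how the new sink states are folded into each matching, and which old states have to be re-matched once they acquire an $a$-move into the sink — so that these coincidences survive into the synchronous product. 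That is the step where the construction, not merely the bookkeeping, must be pinned down, and it is what the corollary contributes beyond Proposition~\ref{lab:lm:spl}.
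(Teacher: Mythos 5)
Your forward direction is essentially the paper's own argument (it is the $\Rightarrow$ half of the proof of Lemma~\ref{lab:lm:psglobalssac-lang-char}, transported from globals to matchings): one copy of $A$ per final product state, with a singleton set of component final states; since runs do not depend on final states, consistency and conflict-equivalence carry over unchanged. That half is correct.

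The backward direction has a genuine gap, and it sits exactly at the step you defer. By definition, \matching($a$) must project, in \emph{every} location $i \in loc(a)$, onto the set of \emph{all} pre-states of local $a$-moves of the $i$-th component, with no state occurring in two tuples; in particular all locations of $a$ must have the same number of such pre-states. Once you complete each $A^j_i$ with an absorbing sink, every state of $B_i$ (a tuple over $j$) has an outgoing $a$-move for every $a \in \Sigma_i$, so a matching for $B$ would have to be a perfect pairing of the full state sets of the $B_i$, whose cardinalities $\prod_j |\widehat{P^j_i}|$ differ across $i$ in general --- so no \matching($a$) for $B$ need exist at all. Your slicewise relation is certainly not one: it omits every tuple containing a sink slice, or a slice whose state only acquired its $a$-move through the completion, yet all of these are pre-states of $a$-moves of $B_i$ and are reachable (consistency then fails on runs that drive some slice into a sink). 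Dropping the completion is not an option either, since without it the lockstep product rejects words on which some $A^{j'}$ has no run. The paper's route (the $\Leftarrow$ half of Lemma~\ref{lab:lm:psglobalssac-lang-char}) sidesteps all of this by taking the \emph{disjoint union} of the components $A^j_i$ rather than their product: no completion is needed, the local transition structure of each $A^j$ is untouched, and \matching($a$) for $B$ is just the disjoint union of the \matching($a$) of the $A^j$, from which conflict-equivalence and consistency are inherited verbatim. You should replace the product-of-completions construction by that disjoint union; with it, the rest of your argument goes through.
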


Lemma~\ref{lab:lm:za-not-sp} presents a language not accepted by
any synchronous product.
 
 
\begin{lemma}        
\label{lab:lm:za-not-sp} 
The language $L_s= \{abd,adb,ace,aec\}^*(\epsilon+a)$ of Example~\ref{exfcnondcpnetpac} 
is not a synchronous product language. 
\end{lemma}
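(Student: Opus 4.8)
The plan is to derive a contradiction from the two characterisations of \cite{Muk11} recalled above. Suppose $L_s$ were a synchronous product language over the distribution $\Sigma=(\Sigma_1,\Sigma_2)$ with $\Sigma_1=\{a,b,c\}$ and $\Sigma_2=\{a,d,e\}$. By Proposition~\ref{lab:lm:spl} we could then write $L_s=\bigcup_{j=1}^{m}L^{(j)}$ with each $L^{(j)}$ a direct product language. By Proposition~\ref{lm-ps-char} and the remark following it, each $L^{(j)}$ equals $(L^{(j)}\project_{\Sigma_1})\syncshuf(L^{(j)}\project_{\Sigma_2})$; equivalently, $L^{(j)}$ is closed under ``fusion'': whenever $u,u'\in L^{(j)}$ and $w$ is a word with $w\project_{\Sigma_1}=u\project_{\Sigma_1}$ and $w\project_{\Sigma_2}=u'\project_{\Sigma_2}$, then $w\in L^{(j)}\subseteq L_s$. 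The feature of $L_s$ to exploit is that in each ``round'' the $\Sigma_1$-letter and the $\Sigma_2$-letter are correlated: a round reading $a$ then $b$ must pair with $d$, and a round reading $a$ then $c$ must pair with $e$; in particular the factor $acd$ never occurs inside a word of $L_s$, since every word of $L_s$ factors into the length-three generators $abd,adb,ace,aec$ --- uniquely, with the factors aligned at the occurrences of $a$ (and no trailing $a$ once the last letter differs from $a$).

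Next I would run a pigeonhole argument over the $m$ summands. For $0\le i\le m$ set $z_i=(abd)^{i}(ace)^{m+1-i}$; each $z_i$ lies in $L_s$ and has exactly $m+1$ occurrences of $a$. Among the $m+1$ words $z_0,\dots,z_m$ two must lie in the same $L^{(j)}$, say $z_i$ and $z_{i'}$ with $i<i'$. I would then build the witness word $w$ as a concatenation of $m+1$ length-three blocks, the $t$-th being $a x_t y_t$ where $x_t=b$ if $t\le i$ and $x_t=c$ otherwise, and $y_t=d$ if $t\le i'$ and $y_t=e$ otherwise. A short computation of projections gives $w\project_{\Sigma_1}=(ab)^{i}(ac)^{m+1-i}=z_i\project_{\Sigma_1}$ and $w\project_{\Sigma_2}=(ad)^{i'}(ae)^{m+1-i'}=z_{i'}\project_{\Sigma_2}$, so fusion-closure forces $w\in L^{(j)}\subseteq L_s$. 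But since $i<i'$ the $(i+1)$-st block of $w$ is $acd$, which is not one of the four generators, so by the unique-factorisation observation $w\notin L_s$ --- a contradiction. Hence $L_s$ is not a finite union of direct product languages, and by Proposition~\ref{lab:lm:spl} it is not a synchronous product language.

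The routine parts are the two projection computations, the verification that each $z_i\in L_s$, and the observation that words of $L_s$ have a unique generator-factorisation aligned at their $a$'s. The point requiring care is that the union in Proposition~\ref{lab:lm:spl} has only finitely many summands: that is precisely what makes the pigeonhole step go through, and it is the reason a single ``forbidden fusion'' of the kind used in Lemma~\ref{lab:lm:sp-not-dp} does not suffice here. The main design decision --- the genuine obstacle to surmount --- is to pick the family $\{z_i\}_{0\le i\le m}$ so that (a) all its members carry the same number of $a$'s, which keeps the relevant synchronised shuffles nonempty, and (b) crossing the $\Sigma_1$-projection of one member with the $\Sigma_2$-projection of another necessarily creates a forbidden factor $acd$; the family $(abd)^{i}(ace)^{m+1-i}$ is tailored to satisfy both requirements.
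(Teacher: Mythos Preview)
Your argument is correct and follows the same overall strategy as the paper: assume $L_s$ is a finite union of direct product languages, apply pigeonhole to a family of generator-words to find two in the same summand, and then fuse their projections to produce a word outside $L_s$. The one difference is your choice of witness family: you use the ``staircase'' words $z_i=(abd)^{i}(ace)^{m+1-i}$, whereas the paper uses the all-$abd$ word together with the words having a single $ace$-block. Your choice is a little slicker, since any pair $z_i,z_{i'}$ with $i<i'$ immediately yields a fused word containing an $acd$ block, so no case split is needed; the paper's family forces a two-case analysis (depending on whether one of the colliding words is the all-$abd$ word) and uses both $abe$ and $acd$ as forbidden blocks.
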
  
\begin{proof}
If $L$ is accepted by any synchronized product then, $L$ can be expressed
as a finite union of direct product languages by Proposition~\ref{lab:lm:spl}.  
Let these direct product languages be $L_1,\ldots,L_k$.
Put $0$ for word $abd$ and $1$ for word $ace$, which are in $L$. 
Let $\calU =\{00\ldots0,10\ldots0,01\ldots0,\ldots,00\ldots1\}$ be the set
of $k+1$ words of length $k$ each. 
By pigeon hole principle, there must be two words of $\calU$ which
belong to same direct product language. Let $u$ and $v$ denote these two
words, and $L_i$ be the component language to which $u, v$ and belong to, where
$i \in \{1,\ldots,k\}$. 

Now we compare $u$ and $v$ to see how they are different from each other.
Either they differ in one position or in two different positions. 
\begin{enumerate}  
\item If $u$ and $v$ differ in only one position, then
$u=0^k$ i.e. $1$ does not occur in it, and 
      $v=0^m 1 0^{k-m}$ i.e. $1$ occurs at $m$-th position.  
      Now we consider word $w=(abd)^{m-1} (abe) (abd)^{k-m}$.
Clearly this word is not in $L$. We take projection of word $w$ 
      $w\project_{\Sigma_1}=(ab)^{m-1} (ab) (ab)^{k-m}=(ab)^k=
u\project_{\Sigma_1}$ and, 
      $w\project_{\Sigma_2}=(ad)^{m-1} (ae) (ad)^{k-m}=
v\project_{\Sigma_2}$. 
      By Proposition~\ref{lab:lm:spl}, the word $w$ is in $L_i$. And since
$L_i \subseteq L$, we have $w$ in $L$, which is a contradiction. 

\item If $u$ and $v$ differ in two positions, then $u$ has a $1$, and  
$v$ also has a $1$, but at a different position. Assume that $1$ of $u$
occurs at $m$-th position and $1$ in $v$ occurs at $m'$-th position. Without loss of
generality, we can assume that $m < m'$. Therefore $1 \leq m < m' \leq
k$. 
We consider word $w=(abd)^{m-1} acd (abd)^{m'-m-1} (abe)
(abd)^{k-m'}$, which is not in $L$. Now consider 
$w\project_{\Sigma_1}=(ab)^{m-1} (ac) (ab)^{m'-m-1} (ab) (ab)^{k-m'}= 
                     (ab)^{m-1} (ac) (ab)^{k-m} =
u\project_{\Sigma_1}$.

$w\project_{\Sigma_2}=(ad)^{m-1} (ad) (ad)^{m'-m-1} (ae) (ad)^{k-m'}= 
                     (ab)^{m'-1} (ae) (ad)^{k-m'} =
v\project_{\Sigma_2}$. 
      By Proposition~\ref{lab:lm:spl}, word $w \in L_i$ and, as
$L_i \subseteq L$  we have $w \in L$, which is a contradiction. 
\end{enumerate} 
\qed
\end{proof} 
%
So we have language $L_s$ which is not accepted by any \psmatchings{} with subset-acceptance. 
This motivates the bigger class of automata over distributed
alphabets, which we discuss next.

\subsection{Product systems with globals }
Let $A=\langle A_1,\ldots,A_k \rangle$ be a product system over distribution 
$\Sigma=(\Sigma_1,\dots,\Sigma_k)$ and, let  
\name{\globals($a$)} be a subset of its global moves $\To_a$, 
and \name{$a$-global} denote an element of \globals($a$).
 
\begin{definition}      
    A \name{product system with globals (PS-globals)} is a product system
with relations $\globals(a)$,  for each global action $a$ in $\Sigma$. 
 \end{definition}  
With subset-acceptance condition these systems are called Asynchronous
(or Zielonka) automaton~\cite{Zie87,Muk11}.
Runs of a product system with globals, are defined in the same way as
for the direct products, with an additional requirement of 
$\prodover_{j \in loc(a)}(\langle R[j],a,Q[j] \rangle \in
\globals(a)$, to be satisfied when $R \step{a}$ Q is to be taken. 
With abuse of notation sometimes we use $pre(a)$ to denote the set  
$\{R \mid \exists Q, R \step{a} Q\}$.

The following property from~\cite{P16}, of product systems with globals, relates
to free choice property of nets. 
\begin{definition}[same source property]       
A product system with globals have \name{same source property} if, any two  
global moves share a pre-state then their sets of pre-states are same.
\end{definition}  
 
%
%
%
%
%

\begin{example}[Product system with globals] 
\label{lab:ex:psglobals}
Consider the product system of Figure~\ref{ps-fig}. 
Let $\globals(a)=\{$
$((r_1 \step{a} r_2),(s_1 \step{a} s_2))$,
$((r_1 \step{a} r_3),(s_1 \step{a} s_3)) \}$. 
This system has same source property.  
 
With the given \globals(a) we have a \psglobals{} $\calC=(A_1,A_2)$ with product
acceptance condition, if its set of final states is $G_1 \times
G_2$. And, for the set of final states 
$\{(r_1,s_1),(r_2,s_2)\}  \subseteq G_1 \times G_2$, 
we have a \psglobals{} $\calD=(A_1,A_2)$ with subset-acceptance. 
\end{example}  
 
The language $L_s= \{abd,adb,ace,aec\}^*(\epsilon+a)\}$  
of Lemma~\ref{lab:lm:za-not-sp}, is 
accepted by product system with globals $\calD$ of
Example~\ref{lab:ex:psglobals} with same source property.

Product systems with globals and product-acceptance are not considered in~\cite{Muk11}.  
This class of systems are strictly more expressive, as shown in 
Lemma~\ref{lab:lm:zapac-not-dp}. 
This lemma is new and was not present in \cite{P18}. 
 
\begin{lemma}        
\label{lab:lm:zapac-not-dp} 
The language $L_p= \{(abd+adb+ace+aec)^*(\epsilon+a+ab+ad)\}$  
from Example~\ref{lab:ex:fcns-pac} 
is not accepted by any direct product. 
\end{lemma}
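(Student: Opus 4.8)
The plan is to follow the proof of Lemma~\ref{lab:lm:sp-not-dp} and invoke the characterization of direct product languages from Proposition~\ref{lm-ps-char}. By that proposition it suffices to exhibit a single word $w \notin L_p$ together with words $u_1, u_2 \in L_p$ such that $w\project_{\Sigma_1} = u_1\project_{\Sigma_1}$ and $w\project_{\Sigma_2} = u_2\project_{\Sigma_2}$; this makes the set on the right-hand side of Proposition~\ref{lm-ps-char} strictly larger than $L_p$, so $L_p$ cannot be a direct product language.

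Here $k = 2$, $\Sigma_1 = \{a,b,c\}$, $\Sigma_2 = \{a,d,e\}$. Note that the witness $w = ab$ used in Lemma~\ref{lab:lm:sp-not-dp} no longer works, since $ab$ now belongs to $L_p$ via its suffix part $\epsilon+a+ab+ad$; so I would instead take $w = ac$. Writing $L_p = r^*[\epsilon+a+ab+ad]$ with $r = abd+adb+ace+aec$, every nonempty word produced by $r^*$ has length at least $3$, hence the words of $L_p$ of length at most $2$ are exactly $\epsilon, a, ab, ad$, and so $ac \notin L_p$. Since $e \notin \Sigma_1$ we have $w\project_{\Sigma_1} = ac = (ace)\project_{\Sigma_1}$ with $ace \in L_p$, and since $c \notin \Sigma_2$ we have $w\project_{\Sigma_2} = a = a\project_{\Sigma_2}$ with $a \in L_p$. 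Taking $u_1 = ace$ and $u_2 = a$, Proposition~\ref{lm-ps-char} would force $w \in L_p$ if $L_p$ were a direct product language, a contradiction.

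The only delicate point is the choice of $w$: because $L_p$ carries the longer tail $\epsilon+a+ab+ad$ (as opposed to $\epsilon+a$ for $L_4$), one must pick a short word that genuinely escapes $L_p$ while still having each of its single-component projections realised inside $L_p$. Verifying these two facts for $w = ac$ is the crux; everything else is a direct application of Proposition~\ref{lm-ps-char}.
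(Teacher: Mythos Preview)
Your proof is correct and follows the same approach as the paper's own argument: both apply Proposition~\ref{lm-ps-char} by exhibiting a word outside $L_p$ whose projections agree with those of two words inside $L_p$. The only difference is the choice of witness---the paper takes $w=abe$ with $u_1=abd$ and $u_2=ace$, whereas you take $w=ac$ with $u_1=ace$ and $u_2=a$; either choice works and the verification is equally routine.
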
  
\begin{proof}        
Consider a word $w=abe$ not in $L_p$ and,
words $u_1=abd$, $u_2=ace$ which are in $L_p$. 
We have projections, 
$w\project_{\Sigma_1}=ab=u_1\project_{\Sigma_1}$, 
$w\project_{\Sigma_2}=ae=u_2\project_{\Sigma_2}$.
Therefore, by Proposition~\ref{lm-ps-char}, 
word $w$ is 
in $L_p$, which is a contradiction. 
\qed
\end{proof}

We give in Lemma~\ref{lab:lm:psglobalssac-lang-char}, a characterization of  
class of languages accepted by product systems with globals and having  
subset-acceptance, in terms of \psglobals{} and product-acceptance.

\begin{lemma}
    \label{lab:lm:psglobalssac-lang-char}
A language is accepted by a \psglobals{} with subset-acceptance if and
only if it can be expressed as a finite union of languages  
accepted by \psglobals{} with product-acceptance. 
\end{lemma}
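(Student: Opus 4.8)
The proof should parallel the proofs of Proposition~\ref{lab:lm:spl} and Corollary~\ref{lab:cor:fcpsmatsac-lang-char}, adapted to the globals structure. Both directions must be established.

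\textbf{The easy direction} ($\Leftarrow$): Given \psglobals{} $A^1,\ldots,A^m$ with product-acceptance, each over the same distribution $\Sigma=(\Sigma_1,\ldots,\Sigma_k)$, I would form a single \psglobals{} $B$ whose $i$-th component is essentially the disjoint union of the $i$-th components of all the $A^j$, with a fresh common initial state that branches (via $\epsilon$, or by replicating outgoing moves) into the $m$ initial states; the globals of $B$ at action $a$ are the union of the globals of the $A^j$ at $a$, carried over to the disjoint copies. One subtlety: a global move of $B$ must synchronize component-moves that all come from the \emph{same} copy $j$ — this is automatic if the component state spaces are kept disjoint across copies, since a global move picks one local $a$-move per location and those local moves only exist inside one copy. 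The final set of $B$ is taken to be the subset $\bigcup_j G^j$ of the product $\prodover_i (\bigcup_j G^j_i)$, i.e.\ $B$ has subset-acceptance, and $\Lang(B)=\bigcup_j \Lang(A^j)$ because a run of $B$ lives entirely in one copy. One must check same source property is not needed here (the statement does not mention it), so no free-choice-type constraint has to be preserved — this keeps the construction clean.

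\textbf{The hard direction} ($\Rightarrow$): Given a \psglobals{} $A=\langle A_1,\ldots,A_k\rangle$ with subset-acceptance and final set $G\subseteq\prodover_i G_i$, I want to write $\Lang(A)$ as a finite union of \psglobals{}-with-product-acceptance languages. The natural move, following the synchronous-product case, is: for each tuple $\bar q=(q_1,\ldots,q_k)\in G$ build a copy $A^{\bar q}$ of $A$ with the \emph{same} components, same global moves, but product-acceptance where the $i$-th component's final set is the singleton $\{q_i\}$. Then $\Lang(A^{\bar q})$ is the set of words admitting a run of $A$ ending exactly in $\bar q$, and $\Lang(A)=\bigcup_{\bar q\in G}\Lang(A^{\bar q})$ since $G$ is finite and a run is accepting iff it ends in some $\bar q\in G$. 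The delicate point, and the main obstacle, is that a \emph{run} of a product system with globals is defined so that components outside $loc(a)$ are frozen, and one must verify that the set of accepting runs of $A^{\bar q}$ is in exact bijection with the runs of $A$ that terminate at $\bar q$ — in particular that singleton-izing each $G_i$ to $\{q_i\}$ does not inadvertently forbid intermediate visits to $q_i$ or to other final states (it does not, since finality is only tested at the end of the word), and that product-acceptance is literally satisfied because $\{q_1\}\times\cdots\times\{q_k\}=\{\bar q\}$ is trivially a product.

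\textbf{Where the work is.} The genuinely fiddly step is bookkeeping the global-move relations under the disjoint-union/copy constructions so that $\prodover_{j\in loc(a)}(\langle R[j],a,Q[j]\rangle\in\globals(a))$ transfers correctly; I would spell this out once for the $\Leftarrow$ direction and note it is trivial for $\Rightarrow$ since $A^{\bar q}$ reuses $A$'s globals verbatim. I expect no obstacle of substance beyond this — the result is the globals-analogue of Proposition~\ref{lab:lm:spl}, and the singleton-final-state trick is exactly what makes ``finite union'' finite (one summand per element of $G$).
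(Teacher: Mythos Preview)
Your proposal is correct and follows the paper's proof: for ($\Rightarrow$) both take one copy $A^{\bar q}$ of $A$ per final tuple $\bar q\in G$, with singleton component final sets $\{q_i\}$ and the globals of $A$ reused verbatim; for ($\Leftarrow$) both form the disjoint union of the $A^j$ with final set $G^B=\bigcup_j(G^j_1\times\cdots\times G^j_k)$, relying on disjointness of the copied state spaces to keep runs inside a single copy. The only difference is cosmetic---the paper handles the union's initial state by simply giving each $B_i$ the \emph{set} $\bigcup_j\{p^{j0}_i\}$ of initial states (tacitly relaxing its own single-initial-state definition), whereas you introduce a fresh common initial state; be aware that your fresh-state-with-replicated-moves device needs a small patch for words on which some component never fires (in particular $\epsilon$), since that component then ends at the fresh state rather than at any $p^{j0}_i\in G^j_i$, but this is routine to fix.
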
  
\begin{proof}
    $(\Rightarrow)$: Let $A=\langle A_1,\ldots,A_k \rangle$ be a 
    \psglobals{}  with subset-acceptance condition, and
    having $(p^0_1,\ldots,p^0_k)$ as its initial state and set of
    final states $G \subseteq \prodover_{i \in Loc} G_i$, where 
    $A_i= \langle P_i, \to_i, G_i, p^0_i \rangle$. 
    Then for each final global state $g=(g_1,\ldots,g_k)$ of $G$,
    we build a \psglobals{} with product-acceptance condition
    $A^g= \langle A^g_1, \ldots, A^g_k \rangle$ by taking
    $A^g_i= \langle P_i, \to_i,g_i, p^0_i \rangle$.
    The set of globals of $A^g$ is the set of globals of $A$.  
    So if a word is accepted by $A$ by traversing a path from initial
    global state to some final state $g$, then we can traverse the same
    path in $A^g$ to its only one final global state $g$. And, the reverse  
    direction also holds. Therefore $Lang(A)=\cupover_{g \in G} Lang(A^g)$. 
 
\noindent
    $(\Leftarrow)$: Let $L=L^1 \cup \ldots \cup L^m$ be a language
defined over $\Sigma$ where each $L^j$ is the language accepted
by \psglobals{} $A^j$ with product-acceptance condition, 
for all $j$ in $\{1,\ldots,m\}$. 
Let $A^j=\langle A^j_1,\ldots,A^j_k\rangle$ where, its $i$-th
sequential-component over $\Sigma_i$ is  
$A^j_i= \langle P^j_i, \to^j_i, G^j_i, p^{j0}_i \rangle$, for all $i$ in $Loc$.  
 
Now we construct a product system with globals  
$B=\langle B_1,\ldots,B_k \rangle$ over $\Sigma$
having subset-acceptance condition as follows: 
    Each local component of $B$ over $\Sigma_i$ is given as 
    $B_i=\langle Q_i, \to^B_i, G^B_i, B^{0}_i \rangle$ 
    with local states $Q_i= \uplus_{j \in \{1,\ldots, m\}} P^j_i$  i.e. disjoint
    union of local states of $i$-th component of each \psglobals{} $A^j$. 
Its set of initial states is taken as union of initial states of $i$-th
	  components of \psglobals{} $A^j$ i.e.,  
	  $B^0_i= \cupover_{j \in \{1,\ldots, m\}}  p^{j0}_i$;
          and let its set of local moves be the union of local moves
	  of $i$-th component of \psglobals{} $A^j$; 
	  and its set of final states as the union of final states of
	  $i$-th component of \psglobals{} $A^j$. 
    Now we define final states $G^B \subseteq \prodover_{i \in Loc} G^B_i$ 
    of \psglobals{} $B$ as follows:
    $G^B= \cupover_{j \in \{1,\ldots,m\}} (G^j_1 \times \ldots \times
    G^j_k)$. This ensures that any word accepted in $L$ is accepted
    by $B$, and in the reverse direction, if we have a word $w$ accepted
by $B$, then we have an accepting run of some $A_j$ over $w$. 
\qed
\end{proof}  
In the construction, 
transition structure of local components is preserved, and so are the
global moves, hence, we have Corollary~\ref{lab:cor:fcpsglobalssac-lang-char}, 
which is used to get syntax for product systems with subset-acceptance condition. 
\begin{corollary}
    \label{lab:cor:fcpsglobalssac-lang-char}
    A language $L$ is accepted by a \psglobals{} with subset-acceptance and having same source property if and only if $L$ can be
    expressed as a finite union of languages accepted by \psglobals{}
with product-acceptance and same source property. 
\end{corollary}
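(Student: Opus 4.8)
The plan is to derive this corollary directly from Lemma~\ref{lab:lm:psglobalssac-lang-char}, observing that the two constructions used in its proof both preserve the same source property; so essentially nothing new has to be done except a short check.

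First I would treat the forward direction. Let $A=\langle A_1,\ldots,A_k\rangle$ be a \psglobals{} with subset-acceptance and same source property, and for each final global state $g$ let $A^g$ be the \psglobals{} with product-acceptance built in the $(\Rightarrow)$ part of Lemma~\ref{lab:lm:psglobalssac-lang-char}, so that $Lang(A)=\cupover_{g}Lang(A^g)$. Since $A^g$ has precisely the same local components and precisely the same set of global moves as $A$ --- only the accepting set changes --- the relation ``shares a pre-state'' on global moves is identical for $A^g$ and for $A$. Hence each $A^g$ inherits the same source property, and $Lang(A)$ is a finite union of languages accepted by \psglobals{} with product-acceptance and same source property.

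For the converse, let $L=L^1\cup\cdots\cup L^m$ with each $L^j=Lang(A^j)$ for a \psglobals{} $A^j=\langle A^j_1,\ldots,A^j_k\rangle$ with product-acceptance and same source property, and let $B=\langle B_1,\ldots,B_k\rangle$ be the \psglobals{} with subset-acceptance obtained, as in the $(\Leftarrow)$ part of the proof of Lemma~\ref{lab:lm:psglobalssac-lang-char}, by letting the $i$-th component of $B$ be the disjoint union of the components $A^j_i$. That lemma already gives $Lang(B)=L$, so it remains only to show that $B$ has the same source property. The key point is that the local state sets $P^j_i$ are pairwise disjoint across $j$ by construction; therefore every global move of $B$ is, componentwise, a tuple of local moves all coming from a single index $j$ --- the one determined by the coordinate states --- i.e.\ it is a global move of exactly one $A^j$. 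Consequently, if two global moves of $B$ share a pre-state, that state lies in a unique $A^j$, both moves belong to $A^j$, and the same source property of $A^j$ forces their pre-state sets to coincide. Hence $B$ has the same source property.

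The only step requiring real attention is this last check in the converse direction: that no global move of $B$ mixes local moves from two different $A^j$'s. This is the crux, and it rests entirely on the disjointness of the $P^j_i$ built into the construction --- were overlaps between the local state sets allowed, the union construction could manufacture global moves whose pre-state sets disagree, breaking the same source property. Everything else is verbatim Lemma~\ref{lab:lm:psglobalssac-lang-char}, and the same remark about transition structure and global moves being preserved (noted just before the corollary) is what makes the argument immediate.
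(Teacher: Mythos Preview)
Your proposal is correct and follows essentially the same approach as the paper, which derives the corollary directly from Lemma~\ref{lab:lm:psglobalssac-lang-char} by the one-line observation that ``transition structure of local components is preserved, and so are the global moves.'' Your write-up simply unpacks that sentence: in the forward direction the $A^g$ share all globals with $A$, and in the converse direction the disjoint-union construction makes every global of $B$ a global of some single $A^j$, so sharing a pre-state forces two globals into the same $A^j$ where the hypothesis applies; this is exactly the intended justification, only stated more carefully.
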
  

In a product system with globals and having same source property,
global moves for an action $a$ can be partitioned into different
compartments : two global $a$-moves belong to same compartment if they
have the same set of pre-states. 
For any $a$-global $g$ of a same source compartment
$\To_a^{SS}$, 
we associate a \name{target-configuration}
$\pi(g)= \prodover_{i \in loc(a)}\post{g} \cap P_i $.  
Let $\post{\To_a^{SS}} = \{\pi(g) \mid g \in \To_a^{SS}\}$. 
We define  $\To_a^{SS}[i]=\post{\To_a^{SS}} \cap P_i$ and 
$\postdecomp(\To_a^{SS})= \prodover_{i \in loc(a) } \To_a^{SS}[i]$. 
We may call  $\To_a^{SS}[i]$ as \name{post-projection} and  
$\postdecomp(\To_a^{SS})$ as \name{post-decompoistion} of a compartment.

The following property relates to distributed choice property of nets.  
 
\begin{definition}      
\label{lab:def:pmp}
A product system with globals and having same source property,
is said to have \concept{product moves property}, if for all $a$ in
$\Sigma$, and
for all same source compartments $\To_a^{SS}$ of $a$-globals, 
$\postdecomp(\To_a^{SS}) \subseteq \post{\To_a^{SS}}$. 
\end{definition}  
 
Product systems $\calC$ and $\calD$ of Example~\ref{lab:ex:psglobals} do not have 
product moves property.  
Product system of Example~\ref{lab:ex:pmp} has product moves property. 
 
\begin{example}[Product system with globals and product moves property] 
\label{lab:ex:pmp}
Consider product system $\calA$ of Example~\ref{lab-ex-dpmat}
where the set of final states is $G_1 \times G_2$.
With $\globals(a)=\{$
$((r_1 \step{a} r_2),(s_1 \step{a} s_2))$,
$((r_1 \step{a} r_2),(s_1 \step{a} s_3))$,
$((r_1 \step{a} r_3),(s_1 \step{a} s_2))$ 
$((r_1 \step{a} r_3),(s_1 \step{a} s_3))$ 
\}  
relation, we have \psglobals{} $\calA'$ which has product moves property. 
This also has same source property.

Now consider the product system $\calB$ of Example~\ref{lab-ex-dpmat},
with $\{(r_1,s_1),(r_2,s_2)\} \subseteq G_1 \times G_2$ as its final
states, and having the $\globals(a)$ relation as above, we get a  
\psglobals{} $\calB'$ with subset-acceptance condition, having 
product moves and same source property. 
\end{example}

%
%
%
%
 
A product system with globals is said to be \name{live}, if for any
global move $g$ and any reachable product state $R$, there exists a
product state $Q$ such that $g$ is enabled at $Q$.

\subsection{Relating product systems with matchings and globals
}
\label{lab:sec:pmp2mat}

First we show, in Theorem~\ref{lab:tm:pmp2mat}, how to construct a product system with  
consistent and conflict-equivalent matchings from a \psglobals{} with same source
property.  
 
\begin{theorem}      
\label{lab:tm:pmp2mat}
Let $\Sigma$ be a distributed alphabet and $A$ be a product system with  
globals defined over it.
Then we can construct a product system $B$ with matchings, linear in the size of
product system $A$ with globals such that, 
\begin{enumerate}  
\item if $A$ has same source property then $B$ has conflict-equivalent
matchings, 
\item in addition, if $A$ is live then 
\begin{enumerate}  
\item $B$ is consistent with matchings, and
\item $\Lang(A)=\Lang(B)$. 
\end{enumerate} 
\end{enumerate}  
\end{theorem}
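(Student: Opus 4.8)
The plan is to read off the matching of $B$ directly from the same-source structure of $A$, keeping the transition structure essentially unchanged so that language preservation is (almost) built in, and then to check the conflict-equivalence, consistency and language claims in turn.

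\emph{Construction.} First I would normalise $A$: delete every local move that occurs in no global move (such a move can never be fired along a run, so $\Lang(A)$ is unchanged), then discard the states that become unreachable. For each global action $a$, the same source property partitions $\globals(a)$ into compartments $\To_a^{SS}$, and each compartment has one common set of pre-states having exactly one state per component of $loc(a)$, so that set reads as a tuple of $\prodover_{i\in loc(a)}P_i$. I would let $B$ have the same sequential components, the same global moves, the same initial and final product states as $A$, and put $\matching(a)=\{\,\mathit{pre}(\To_a^{SS})\mid \To_a^{SS}\ \text{a same-source compartment of }a\text{-globals}\,\}$. If a pre-state of a compartment also carries local moves or moves of other pre-state classes, I would split it, introducing one fresh copy that carries only that compartment's global moves, making the copy the entry of the matching tuple, and re-routing the relevant incoming edges to it; this adds at most a bounded number of copies of each state and at most one edge per original edge, so $B$ is linear in $A$.

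\emph{$B$ is a \psmatchings{}, and the conflict-equivalence claim.} The relation $\matching(a)$ is suitable: two distinct pre-state classes are disjoint --- a shared state would, by the same source property, force equal pre-state sets, hence the same class --- so no state of $P_i$ appears in two tuples, and after normalisation the $i$-th projection of $\{\mathit{pre}(\To_a^{SS})\}$ is exactly the set of pre-states of $a$-moves of $B_i$. For part~1, assume $A$ has the same source property and let $p\in P_i$ and $p'\in P_j$ be related by $\matching(a)$, i.e.\ both in $\mathit{pre}(K)$ for one compartment $K$. By the same source property applied across all labels, every global move of $A$ having $p$ among its pre-states has pre-state set $\mathit{pre}(K)$, hence also has $p'$ among its pre-states and the same location set as the moves of $K$; so in $B$ the states $p$ and $p'$ have outgoing moves for exactly the same labels (those of the moves with pre-state set $\mathit{pre}(K)$) and, thanks to the splitting, for no others. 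Therefore their outgoing $a$-moves are conflict-equivalent, and $\matching(a)$ is a conflict-equivalent matching.

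\emph{The live case, and the main obstacle.} Assume now in addition that $A$ is live. Consistency (2a) is then nearly immediate: a run of $B$ takes a global $a$-move only through a global move of $A$, whose source projected to $loc(a)$ equals its set of pre-states, i.e.\ a tuple of $\matching(a)$; so every run of $B$ is consistent with the matching. For $\Lang(A)=\Lang(B)$ (2b) I would exhibit a step-for-step correspondence between runs: a run of $A$ transports verbatim, up to the copies possibly introduced by the splitting, to a run of $B$; conversely, because the tuples of $\matching(a)$ are state-disjoint, whenever a run of $B$ fires a global $a$-move its participating components sit simultaneously at the pre-states of a single compartment, so the step projects back to a global move of $A$; reconciling initial and final product states then gives both inclusions. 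I expect the real work to be this interaction with liveness: the matching of $B$ is a looser, synchronous-product-style device than the explicit relation $\globals(a)$, so I would have to show that the re-routing forced by the state splitting cannot let one component slip into a matched copy out of step with its partners, and that every matched tuple is actually reached as a genuine joint synchronisation point --- and it is precisely here that liveness of $A$ (every global move is re-enabled from every reachable product state) is used, mirroring the role of liveness in the net-theoretic counterparts of this construction.
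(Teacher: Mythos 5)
Your construction is essentially the paper's: keep the component automata of $A$ unchanged and set $\matching(a)=\{\pre{g}\mid g\in\globals(a)\}$, one tuple per same-source compartment; your normalisation step (discarding local moves that occur in no global move) is a sensible addition that the paper leaves implicit, and your conflict-equivalence argument via the same source property is the paper's argument. The state-splitting, however, is a red herring: under the same source property a state lies in the pre-state set of at most one compartment across \emph{all} labels (two globals sharing a pre-state have equal pre-sets), so in the only case where claims 1--2 apply there is nothing to split, and the "re-routing" difficulty you flag as the main obstacle does not arise.

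The genuine gap is the premise on which you base both 2a and 2b: that "a run of $B$ takes a global $a$-move only through a global move of $A$" and that a step of $B$ "projects back to a global move of $A$". In a \psmatchings{} the global $a$-steps are \emph{all} tuples of enabled local $a$-moves, one per component of $loc(a)$, combined independently; the matching by itself forbids nothing (consistency is a property of runs, which is what you must prove). So $B$ can mix local moves drawn from different $a$-globals of $A$. For 2a this is repairable, and the paper repairs it: if the components of $loc(a)$ sat at pre-states of two distinct compartments, the same source property would leave those components permanently blocked, contradicting liveness. For 2b the problem is fatal as you have argued it: even with the pre-states forming a single matched tuple, $B$ may combine the post-states of two different globals of the same compartment into a product state that $A$ cannot reach. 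Concretely, for the live, same-source system $\calD$ of Example~\ref{lab:ex:psglobals}, your $B$ fires $(r_1\step{a}r_2,\ s_1\step{a}s_3)$ and reaches $(r_2,s_3)$, which $\calD$ cannot; indeed $\Lang(\calD)=L_s$ is not accepted by \emph{any} \psmatchings{} (Lemma~\ref{lab:lm:za-not-sp}). Closing this inclusion requires the product moves property of $A$ (every tuple of $\postdecomp(\To_a^{SS})$ is realised by some global of the compartment), which is exactly what the paper's proof of 2b invokes at the corresponding step --- a hypothesis absent from the theorem statement but indispensable, and one your argument neither assumes nor can do without.
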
  
 
\begin{proof}
Let $A=(A_1,\ldots,A_k)$ be a product system with globals. 
We construct product system $B=(B_1=A_1,\ldots,B_k=A_k)$ with
matchings, where for each label $a$ in $\Sigma$, 
$\matching(a)=\{\prodover_{i \in loc(a)} \pre{g[i]} \mid g \in
\globals(a)\}$. 
The size of this $\matching(a)$ relation is at most the
size of $\globals(a)$ relation for $e$.

\begin{enumerate}  
\item For a label $a$ in $\Sigma$, without loss of generality, we assume
that $loc(a)=\{1,2\}$. Let $(p,q)$ is in $\matching(a)$. Let 
$\langle p,a,p_1 \rangle$ be a local $a$-move and 
$\langle p,b,p_2 \rangle$ be a local $b$-move in $B_1$.
Also let $\langle q,a,q_1 \rangle$ be a local $a$-move of $B_2$.
To prove conflict-equivalence of matchings, we have to show existence
of a local $b$-move from state $q$ of $B_2$. 
Since, $(p,q)$ is in $matching(a)$ then we have a global $a$-move 
$g=( \langle p,a,p_1 \rangle, \langle q,a,q_1 \rangle)$ of $A$.  
And, since $p$ has an outgoing $b$-move there must be some other
global $b$-move $g'$ in $A$, with $g'[1]= \langle p,b,p_2 \rangle$.  
But, we have  $\{p\} \subseteq \pre{g'} \cap \pre{g}$.  
Therefore, by same source property of $A$, 
we get $\pre{g'} = \pre{g}$.
Hence, $q$ is in $\pre{g'}$ implying existence of a local
$b$-move from state $q$, as required.

\item Now in addition, we assume that $A$ is live. 
\begin{enumerate}  
\item Consider a run $R_0 \step{v} R \step{a}Q$ of product system $B$,
where $R_0$ is a initial global state of $B$. 
Inductively, we assume that the run $R_0 \step{v}R$ is 
consistent with the constructed matchings.  
Without loss of generality, we assume that $loc(a)=\{1,2\}$.  
At global state $R$ some global $a$-move 
$g=(\langle p,a,p_1 \rangle, \langle q,a,q_1 \rangle, \ldots \langle
r_n,a,r'_n \rangle)$ of $B$ is enabled to reach $Q$. Other component
moves of $g$ from $A_3$ to $A_4$ do not take part in thie step.  
To show that this run is consistent with matchings,  
we have to prove, tuple $R\project_{loc(a)}=(p,q)$ is in $\matching(a)$. 
If global move of $g$ of $B$ is an $a$-global in system $A$, then
clearly by construction $(p,q)$ being in $\pre{g}$ also appears in
$\matching(a)$.     
If global move $g$ of $B$ is not an $a$-global in system $A$, then  
we must be having $a$-globals $g$ and $g'$ of $A$, such that
$g[1]=\langle p,a,p_1 \rangle$, and 
$g'[2]=\langle q,a,q_1 \rangle$.  
At global state $R$ of $A$, local state of $A_1$ is $p$ and
local state of $A_2$ is $q$. 
Because of same source property of $A$, there is no global move on any other
label having $p$ and $q$ in their preset, which is enabled i.e., in these two
components control will not be able to move forward from $p$ and $q$,
which contradicts the fact that $A$ is live. 
 
\item We show language equivalence of $A$ and $B$ by showing a stronger
property that graphs of reachable states of $A$ and $B$ are isomorphic. 
States of $A$ are mapped to themselves in $B$ and vice versa. 
As base step, initial states of $A$ and $B$ are isomorphic. 

To prove that $Lang(A) \subseteq Lang(B)$, 
we show that if at any reachable state $R$ of $A$,
some $a$-global $g$ is taken to reach $Q$ then  
there exists a global move on label $a$ in $B$ which is enabled at $R$ and
when taken we reach global state $Q$. 
Since $g$ is enabled at $R$ in system $A$, 
for all $i$ in $loc(a)$, we have  $g[i]= \langle p_i,a,q_i \rangle$ where 
$p_i$ a locate state of $A_i$ is in $\pre{g}$ and part of $R$ and
similarly, $q_i$ a local state of $A_i$ is in $\post{g}$ and part of $Q$. 
Therefore, we can take $a$-global $g$ of $A$ itself as the required
global move of $B$ taken at $R$ to reach $Q$. 
 
In the reverse direction, 
we assume that we have reached state $R$ in system $B$, after taking
an $a$-labelled global move $h$ to reach state $Q$. 
Inductively, we assume that we are at state $R$ in $A$. 
Now we have to show that there exists an $a$-global $g$ in $A$, which is
enabled at $R$ and $R \step{g} Q$ in $A$.  
Let $loc(a)=\{1,\ldots,m\}$. 
We have $\pre{h}$ appearing in $R$ is also in $\matching(a)$ of $B$,
due to consistency of matchings for $B$, proved above using liveness of $A$. 
Therefore, by construction of $B$, we must have some 
$a$-global $g$ in $A$ such that $\pre(g)=\pre{h}$.
Let $\To_a^{SS}$ be the set of global moves of $A$ which have same set of
pre-states as $g$.  
For all $i$ in $loc(a)$, $h[i]=\langle p_i,a,q_i \rangle$ in each component
$B_i$ of $B$ and hence in each $A_i$ of $A$. 
Therefore, we must have global moves $g_1,g_2,\ldots, g_m$ in
$\To_a^{SS}$ such that $g_i[i]=h[i]$, for all $i$ in $loc(a)$. 
Hence $\prodover_{i \in loc(a)} q_i$ is in $\postdecomp(\To_a^{SS})$ in
$A$.
This tuple $\prodover_{i \in loc(a)} q_i$ is also in $\post(\To_a^{SS})$
because $A$ has product moves property.
So there exists a global move $g'$ in $\To_a^{SS}$ with $\pi(g')=
\prodover_{i \in loc(a)} q_i$. 
Therefore, this global $g'$ in $A$ can be fired at $R$ to reach $Q$,
as required, to complete induction. 
\end{enumerate}   
\end{enumerate}  
\qed 
\end{proof}  

Now, from a \psmatchings{} with consistent and
conflict-equivalent matchings we construct a product system with
globals having same source property. 
\begin{theorem}
\label{lab:tm:mat2pmp}
Let $\Sigma$ be a distributed alphabet and let $B$ be a product system
with conflict equivalent and consistent matchings. 
Then for the language of $B$ we can construct a product system $A$ with globals  
over $\Sigma$ having same source and product moves property. The 
constructed product system $A$ with globals is exponential in the size of
system $B$ having matching of labels. 
\end{theorem}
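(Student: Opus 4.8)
The plan is to obtain $A$ by \emph{refining} each sequential component of $B$ so that every local state records, besides its $B$-state, a \emph{commitment} to the next move in which that component will take part. A state of the $i$-th component of $A$ is a pair $(p,g)$ where $p\in P_i$ and $g$ is a move of $B$ whose $i$-th local component starts at $p$; for global actions we retain only those $g$ whose pre-states form a matched tuple, these being the only global moves a run of $B$ consistent with its matchings can ever use. From $(p,g)$ with $g[i]=\langle p,a,q\rangle$ the $i$-th component performs only the $a$-move $(p,g)\step{a}(q,h)$, for every admissible commitment $h$ at $q$; so the commitment fixes which action is done next, and the post-state draws a \emph{fresh} commitment nondeterministically. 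For an action $a$, the global moves of $A$ are those tuples of local $a$-moves in which \emph{all} participating components carry one and the same commitment $g$, which is then necessarily a global $a$-move of $B$ out of the matched $a$-tuple; the induced move of $A$ realises $g$ and reaches the post-states equipped with arbitrary admissible commitments. The initial product state is handled by a commitment-free first round at the components' initial states; that this is sound uses that, by consistency of $B$, the initial local states of the locations of any first global move already form a matched tuple.

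With this definition the \emph{same source property} is immediate: if two global moves of $A$ share a pre-state $(p,g)$, then by construction every participating component of each of them carries the commitment $g$, so both pre-state sets equal $\{(p_j,g)\mid j\in loc(a)\}$. The \emph{product moves property} of Definition~\ref{lab:def:pmp} holds because inside a same-source compartment the committed $B$-move $g$ is fixed, hence so are the post-$B$-states $q_j$, while each post-commitment $h_j$ ranges \emph{independently} over the admissible commitments at $q_j$; therefore every tuple of the post-decomposition is realised by some move of the compartment, and in fact $\postdecomp(\To_a^{SS})=\post{\To_a^{SS}}$. Conflict-equivalence of the matchings of $B$ is what guarantees that matched states offer the same menu of labels and, in particular, share a common location set, so that the refined pre-states fall cleanly into compartments and no ``mixed'' pre-state set can appear.

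Language equivalence is proved by a correspondence between runs. Given an accepting run of $B$, decorate each occurrence of a local state by \emph{look-ahead}: the commitment carried by component $i$ at a given point is the next move of $B$ in which $i$ participates (a dummy commitment if there is none). Consistency of $B$ ensures that every global move so decorated is a global $a$-move out of a matched tuple, hence a legal move of the refined system, so the decorated run is an accepting run of $A$ on the same word. Conversely, erasing the commitments maps any run of $A$ to a run of $B$ whose global moves, being exactly the committed ones, use matched pre-states; acceptance is preserved if one declares the final product states of $A$ to be precisely the commitment-decorations of final product states of $B$. For the size, a component of $A$ is only polynomially larger in its state set, but a same-source compartment must contain one global move for \emph{every} combination $(h_j)_{j\in loc(a)}$ of post-commitments, i.e., $\prod_{j\in loc(a)} c_j$ moves where $c_j$ is the number of admissible commitments at the post-state $q_j$; this is exponential in the number of locations and hence in the size of $B$, which is the origin of the exponential blow-up.

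The step I expect to be the main obstacle is reconciling the single, fixed initial product state of $A$ with the run-dependent ``first move'' of $B$, together with the proof that the refinement creates no spurious synchronisation: a global move of $A$ must not be enabled at a reachable product state unless the underlying move of $B$ is. The heart of the argument is therefore a reachable-state invariant stating that along every run of $A$ the commitments held by the locations of any enabled action already agree and point to a genuine matched $a$-tuple of $B$; it is precisely here that consistency and conflict-equivalence of the matchings of $B$ are indispensable.
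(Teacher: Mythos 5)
Your construction is correct in its essentials, but it takes a genuinely different route from the paper's. The paper leaves the sequential components untouched ($A_i=B_i$) and simply defines $\globals(a)$ to be the set of \emph{all} tuples of outgoing local $a$-moves whose pre-states form a tuple of $\matching(a)$; there the same source property is not automatic but is derived from conflict-equivalence together with consistency (a shared pre-state of an $a$-global and a $b$-global forces, via conflict-equivalence, a $b$-move at every matched partner, and consistency plus the fact that a state occurs in at most one tuple of a matching forces the two pre-state sets to coincide), while the product moves property is immediate because every combination of local $a$-moves out of a matched tuple is included. You instead refine the local states with commitments so that same source and product moves hold by construction, and you shift all the work into the language-equivalence argument (look-ahead decoration of $B$-runs, erasure of $A$-runs); notably, your argument makes almost no use of conflict-equivalence. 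What each approach buys: the paper's version keeps the components fixed, which matters downstream because Theorem~\ref{lab:tm:pmp2dcp} turns exactly these components into the S-cover of the constructed net, and it confines the exponential blow-up to $\globals(a)$; your version makes the two target properties trivial but enlarges the components themselves (a commitment is a global move of $B$, of which there are exponentially many in the number of locations, so ``only polynomially larger in its state set'' is not accurate, though the total size still stays within the theorem's exponential bound). Two points of your write-up need tightening rather than repair: the ``commitment-free first round'' is not a state of the refined components as you defined them, and whatever concrete fix you adopt (say, a designated initial commitment symbol with outgoing moves for every admissible first global move) must be re-checked against the same source property at the initial product state; and the dummy commitment must be admitted among the fresh post-commitments so that accepting runs can actually terminate in your declared final states.
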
  
 
\begin{proof}
Let $B=(B_1,B_2,\ldots,B_n)$ be the given product system with matchings. 
We construct a \psglobals{} $A=(A_1,A_2,\ldots,A_n)$ by 
taking component systems $A_j=B_j$.
It remains to construct the global moves of $A$. 
We build the set of global moves $\globals(a)$ for $A$ from a given matching relation
$matching(a)$ of $B$ as follows. 
Let $loc(a)=\{1,\ldots,m\}$.  
\[\begin{array}{l}
        \globals(a)= \{( (p_1,\ldots,p_m),(q_1,\ldots,q_m)) \mid 
                (p_1,\ldots,p_m) \in \matching(a) \\
	\hspace*{6.4cm} ~\mbox{where}~(p_k \to_k^a q_k) 
	~\mbox{for all}~ k~\in loc(a) \}.
    \end{array}
\]
 
For tuple $(p_1,\ldots, p_m)$ in $\matching(a)$ of $B$, let us assume that
$p_1$ has $k_1$ outgoing local $a$-moves in $B_1$,
$p_2$ has $k_2$ outgoing local $a$-moves in $B_2$,
and so on. Let $k$ be the minimum of $\{k_1,\ldots,k_m\}$. 
Then we have $k^m$ number of global $a$-moves, which is exponential in
the number of locations. 
 
(\emph{Proof of $A$ having same source property}):\\
Assume not,then we have two global moves $g$ and $g'$  
such that intersection of their sets of pre-states is not empty and
their sets of pre-states are not equal also. 
Let $g$ is an $a$-global move and $g'$ be a $b$-global move of $A$.  
For the sake of simplifying this discussion, we take $m=2$, hence
$loc(a)=\{1,2\}$. We have $\{p,q\} \subseteq \pre{g}$ and 
$\{p,q'\} \subseteq \pre{g'}$, and $q \neq q'$, where states $p$ is 
a local state of $A_1$ and $q,q'$ are local states of $A_2$.  
Therefore, $g = ((p,a, p_1),(q, a, q_1))$ and $g' = ((p, b, p_2),(q', b, q_2))$. 
It means that $(p,q') \in \matching(b)$ and $(p,q) \in \matching(a)$ in $B$.  
Since $B$ has conflict-equivalent matching, it implies existence of a local $b$-move  
with source state $q$  i.e., $(q,b,q_3)$ in $B_2$, for some state
$q_3$ in $B_2$. 
 
Now at any global state $R$ of $B$ where $g$ is enabled, tuple
$(p,q)$ which is in $\matching(a)$ relation, also appears in state $R$,  
Therefore, the global $b$-move  $h= ((p,b, p_2),(q, b, q_3))$ is also enabled in $B$. 
But, $B$ has consistent matching of labels, so $(p,q) \in \matching(b)$.  
This is a contradiction, as now we have two tuples $(p,q)$ and $(p,q')$ in
$\matching(b)$ relation in which state $p$ appears.  
Therefore, $q=q'$ to get a contradiction. 
 
(\emph{Proof of $A$ having product moves property}):\\
Let  $\To_a^{SS}$ be a same source compartment of global $a$-moves of
the constructed product system $A$ and 
$loc(a)=\{1,\ldots,m\}$. 
Now we have to prove 
$\postdecomp(\To_a^{ss}) \subseteq \post(\To_a^{SS})$. 
Consider a tuple of states $(p_1,p_2,\ldots,p_m)$ in $\postdecomp(\To_a^{ss})$.
Then there exist global moves $g_1,  g_2, \ldots, g_m$  in
$\To_a^{SS}$ such that 
$\post{g_{i}[i]}= p_i$, for all $i$ in $loc(a)$, implying that we have 
local $a$-moves $g_{j}[j]=\langle q_j,a,p_j \rangle$ in $A_j$, for all
$j$ in $loc(a)$. 
Since 
$(q_1,\ldots,q_m)= \prodover_{j \in loc(a)} \pre{g_{j}[j]}$, 
we have tuple $(q_1,\ldots,q_m)$ in $\matching(a)$ of product
system $B$. Therefore, we have $((\prodover_{j \in loc(a)} q_j),
(\prodover_{j \in loc(a)} p_j))$ as a global $a$-move in the
constructed system $A$. 
Hence, tuple $\prodover_{j \in loc(a)} p_j))$ being a post-configuration of
this global $a$-move is also in $\post(\To_a^{SS})$, as required. 
 
(\emph{Proof of $\Lang(B)=\Lang(A)$}):\\
We prove this by showing isomorphism of state reachability graphs of
$A$ and $B$. For states it is identity mapping. 
Proving $Lang(A) \subseteq \Lang(B)$ is straightforward, as global
moves of $A$ are also global moves of $B$. 

In the reverse direction, 
we assume that we have some global move $h$ of $B$ taken at reachable state $R$,  
to reach $Q$. Inductively, we have reached global state $R$ in $A$.  
We have $\pre{h}$ in $R$ and $\post{h}$ in target state $Q$.
As B is consistent with matching of labels, we have $\pre{h}$ in
$\matching(a)$. Since $A$ has product moves property, proved above using  
consistency of matching, for each tuple of $\matching(a)$, we have an
$a$-global, for this fixed pre-states and each possible post-configuration.  
So we have an $a$-global in $A$, consisting of $\post{h}$ as its
set of post-states and \pre{h} as its set of pre-states.  
We can take this at $R$ to reach $Q$ in $A$ as required.  
\qed
\end{proof}  
  

\section{Nets and Product systems}
\label{sec-nets}

We first present a generic construction of  
a $1$-bounded S-decomposable 
labelled net systems, from product systems with globals. %

 
\begin{definition}[\psglobals{} to nets]
\label{def:generic}
Given a \psglobals{} $A=\langle A_1,\ldots,A_k \rangle$ 
over distribution $\Sigma$, a net system  
$(N=(S,T,F,\lambda),M_0,\calG)$ is constructed as follows:
The set of places is $S=\bigcup_i P_i$, 
the set of transitions is $T=\bigcup_{a \in\Sigma} \globals(a)$.
Define $T_i = \{\lambda^{-1}(a) \mid a \in \Sigma_i\}$.
The labelling function $\lambda$ labels by action $a$ the
transitions in $\globals(a)$.
The flow relation is
$F= \{(p,g),(g,q)\mid g \in T_a, g[i] = \langle p,a,q \rangle, \mbox{~for all~} i \in
loc(a) \}$, define $F_i$ as its restriction to the transitions $T_i$ for $i \in loc(a)$.
See that $F$ is union of all $F_i$s. 
Let $M_0= \{p^0_1,\dots,p^0_k\}$, be the initial product state and 
$\calG = G$ as the set of final global states.
\end{definition}

We get one to one correspondence between 
reachable states of product system and reachable markings of nets
because the set of transitions of resultant net is same as the set of
global moves in the product system, and construction preserves pre as well as
post places.  

\begin{lemma}
The constructed net system $N$ from a \psglobals{} $A$,  
as in Definition~\ref{def:generic},
is S-decomposable and $Lang(N,M_0,\calG)=\Lang(A)$.
The size of constructed net is linear in the size of product system. 
\end{lemma}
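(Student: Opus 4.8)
The plan is to establish three assertions separately: S-decomposability of the constructed net $N$, the language equality $Lang(N,M_0,\calG)=\Lang(A)$, and the linear size bound. First I would verify S-decomposability by exhibiting the required S-cover. For each location $i \in Loc$, consider the subnet generated by $S_i = P_i$ and $T_i = \bigcup_{a \in \Sigma_i} \globals(a)$ (i.e.\ those global moves whose action lies in $\Sigma_i$). By the construction of the flow relation $F$, each such global move $g$ with $\lambda(g)=a$ has, for every $i \in loc(a)$, exactly one pre-place $\pre{g[i]} \in P_i$ and one post-place $\post{g[i]} \in P_i$; hence the subnet generated by $S_i \cup T_i$ is an $S$-net. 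I would check it is connected (or restrict to connected components as the paper's weaker notion of component allows) and that $\bigcup_i P_i = S$ so every place is covered, giving an $S$-cover. Finally, $M_0 = \{p^0_1,\dots,p^0_k\}$ marks exactly the single place $p^0_i$ in each $S_i$, satisfying the last clause of Definition~\ref{s-decomposable}. So $N$ is S-decomposable with this $S$-cover respecting the distribution.

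Next I would prove the language equality by establishing a step-by-step bijection between runs of $A$ and firing sequences of $N$. The key observation, already flagged in the paragraph preceding the lemma, is that transitions of $N$ \emph{are} the global moves of $A$, and the flow relation records exactly the pre- and post-states of each global move. Concretely, I would show by induction on the length of a run/firing sequence that a product state $R=(R[1],\dots,R[k])$ is reachable in $A$ iff the marking $\{R[1],\dots,R[k]\}$ (using the $1$-bounded tuple representation of markings of an S-decomposable net) is reachable in $N$, and moreover that $R \step{a} Q$ via global move $g$ in $A$ iff $\{R[1],\dots,R[k]\} \step{g} \{Q[1],\dots,Q[k]\}$ in $N$. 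The forward direction of a single step: if $g \in \globals(a)$ is enabled at $R$, then for each $i \in loc(a)$, $R[i] = \pre{g[i]}$, so all pre-places of transition $g$ are marked, so $g$ fires in $N$; firing $g$ removes tokens from $\pre{g[i]}$ and adds to $\post{g[i]}$ for $i \in loc(a)$ and leaves other components untouched, exactly matching the product-system step. The reverse direction is symmetric since $F$ was defined precisely to encode $g[i]=\langle p,a,q\rangle$. Then $\lambda(\sigma)$ for a firing sequence $\sigma$ equals the label sequence of the corresponding run, and $M \in \calG = G$ iff the corresponding product state is accepting, yielding $Lang(N,M_0,\calG)=\Lang(A)$.

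For the size bound, I would count: $|S| = \sum_i |P_i|$, $|T| = \sum_{a \in \Sigma} |\globals(a)| \le |{\To}|$, and $|F| \le \sum_{g} 2|loc(\lambda(g))|$, each of which is bounded by a constant times the size of $A$ (states plus global moves), so the net is linear in $|A|$.

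I expect the main obstacle to be the bookkeeping around the marking-versus-tuple identification and the $loc(a)$ restrictions: one must be careful that a transition $g \in \globals(a)$ is genuinely enabled in $N$ exactly when its $loc(a)$-components agree with $R$ (places outside $\bigcup_{i \in loc(a)} P_i$ are irrelevant to $g$), and that firing it changes only those components — this is where the S-decomposable structure (each $S_i$ always carrying exactly one token) is doing the real work, letting us freely pass between the marking of $N$ and the tuple form of a product state. The connectedness requirement for components of the $S$-cover is a minor point; if some $P_i$ induces a disconnected subnet one falls back on the paper's weaker component notion or on a connectedness assumption inherited from $A$'s sequential systems, and this should be remarked but does not affect the language argument.
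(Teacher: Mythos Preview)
Your proposal is correct and follows essentially the same approach as the paper. The paper does not give a detailed proof of this lemma at all: it offers only the one-sentence remark preceding the lemma, namely that there is a one-to-one correspondence between reachable product states and reachable markings because the transitions of $N$ are exactly the global moves of $A$ and the construction preserves pre- and post-places. Your inductive bijection between runs and firing sequences, your explicit exhibition of the S-cover with $S_i=P_i$, and your size count are precisely the fleshing-out of that sketch; even your caveat about connectedness of the components is appropriate, since the paper silently relies on its weaker (non-strongly-connected) notion of component.
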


Applying the generic construction above to product systems with
same source property, we get a free choice net,
because any two global moves having same set of 
pre-places are put into one cluster. 

\begin{theorem}
\label{lab:tm:pmp2dcp}
    Let $(N,M_0,\calG)$ be the net system constructed from
    \psglobals{} $A$ as in Definition~\ref{def:generic}. 
\begin{itemize}  
\item If $A$ has same source property 
      then $N$ is a free choice net,
\item  In addition if $A$ has product moves property, then $N$ has
distributed choice.  
\end{itemize} 
\end{theorem}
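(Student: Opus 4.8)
The plan is to prove the two bullets in turn, using the explicit correspondence from Definition~\ref{def:generic} between global moves of $A$ and transitions of $N$, and between places $P_i$ of $A$ and places $S_i = P_i$ of $N$. By the preceding lemma, the transitions of $N$ are exactly the global moves $\bigcup_{a\in\Sigma}\globals(a)$, the flow relation preserves pre- and post-places, and the net is S-decomposable with S-components $(S_i,T_i,F_i)$ where $S_i = P_i$. So $\preset{g} = \pre{g}$ (pre-states of the global move $g$) and $\postset{g} = \post{g}$ as subsets of $S = \bigcup_i P_i$, and for a transition $g \in \globals(a)$ its post-place in $S_i$ is exactly $\post{g[i]}$, for $i \in loc(a)$.

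\textbf{First bullet (free choice).} First I would recall that a cluster $C$ of $N$ is free choice iff all its transitions have the same pre-set. Fix a cluster $C$ and two transitions $g, g'$ of $C$; I must show $\preset{g} = \preset{g'}$. Since $C$ is a cluster, $g$ and $g'$ are connected through a chain of places and transitions where each place's post-set and each transition's pre-set stays inside $C$; concretely it suffices to treat the case where $g$ and $g'$ share a pre-place $p$, and then propagate. If $g \in \globals(a)$ and $g' \in \globals(b)$ share the pre-place $p$, then $p \in \pre{g} \cap \pre{g'}$, so by the same source property of $A$ (any two global moves sharing a pre-state have the same set of pre-states) we get $\pre{g} = \pre{g'}$, i.e. $\preset{g} = \preset{g'}$. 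A standard induction along the cluster then gives that all transitions of $C$ share a common pre-set, so every cluster is free choice and $N$ is a free choice net.

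\textbf{Second bullet (distributed choice).} Assume in addition that $A$ has product moves property. I must show that for every cluster $C$ of $N$ and every $a \in \Sigma$, $\postdecomp(C_a) \subseteq \post{C_a}$. Here $C_a$ is the set of $a$-labelled transitions of $C$, i.e. a set of global $a$-moves all sharing the same pre-set (by the first bullet, all transitions of $C$, hence all of $C_a$, share one pre-set); so $C_a$ is contained in a single same-source compartment $\To_a^{SS}$ of $a$-globals of $A$. In fact I would argue $C_a$ \emph{equals} that compartment: any global $a$-move $g$ with $\pre{g}$ equal to the common pre-set of $C$ has all its pre-places in $S_C$ (since $S_C$ is closed: for a place $s \in C$, $\postset{s} \subseteq C$, and any such $g$ is in $\postset{p}$ for $p$ a pre-place of $C$), hence $g \in T_C$ and $g \in C_a$. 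Now the definitions match up termwise: $\pi(g) = \prodover_{i\in loc(a)}(\postset{g}\cap S_i) = \prodover_{i\in loc(a)}(\post{g}\cap P_i)$, which is exactly the target-configuration of $g$ as a global move; hence $\post{C_a} = \post{\To_a^{SS}}$. Likewise $C_a[i] = \postset{C_a}\cap S_i = \post{\To_a^{SS}}\cap P_i = \To_a^{SS}[i]$, so $\postdecomp(C_a) = \postdecomp(\To_a^{SS})$. Then $\postdecomp(C_a) = \postdecomp(\To_a^{SS}) \subseteq \post{\To_a^{SS}} = \post{C_a}$ by the product moves property, which is exactly distributed choice for cluster $C$; since $C$ was arbitrary, $N$ has distributed choice.

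\textbf{Main obstacle.} The routine part is the S-net bookkeeping (that every transition has at most one post-place per component, so the products and post-projections are well defined). The one step needing genuine care is the identification $C_a = \To_a^{SS}$ --- i.e. that the transitions of a cluster of the constructed net are exactly the global moves of one same-source compartment, not merely a subset. This uses the closure condition $\postset{s}\subseteq C$ for places of a cluster together with the construction's preservation of pre-sets; if one only had $C_a \subseteq \To_a^{SS}$ one would get $\postdecomp(C_a) \subseteq \postdecomp(\To_a^{SS})$ but the reverse containment $\post{\To_a^{SS}} \subseteq \post{C_a}$ could fail, so establishing the equality is the crux.
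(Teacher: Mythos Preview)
Your proposal is correct and follows the same natural approach as the paper, which in fact gives only a one-line justification before the theorem statement (that global moves sharing pre-states are put into one cluster) and no formal proof afterward. Your detailed argument---in particular the identification $C_a = \To_a^{SS}$ for the distributed-choice part---spells out precisely what the paper leaves implicit.
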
  

In the construction, if the product system has subset-acceptance then we get a net  
with a set of final markings, which may not have product condition. 
Since $A$ has subset-acceptance, we generalize the results obtained in \cite{P16}. 
 
For the product system $\calD$ of Example~\ref{lab:ex:psglobals},
accepting language $L_s$ we can construct 
the net system of Example~\ref{exfcnondcpnetpac}.
 
In the case that the product system $A$ has matchings, 
transitions of net constructed are reachable global moves of system $A$~\cite{PL14,PL15}.  
 
 
%
%
%
%

 

Now we describe a linear-size construction of a product system
from a net which is S-decomposable.
 
 \begin{definition}[nets to \psglobals]
\label{def:direct}
Given a $1$-bounded labelled and an S-decomposable net system $(N,M_0,\calG)$, with $N=(S,T,F,\lambda)$ 
the underlying net and $N_i=(S_i, T_i, F_i)$ the components in the S-cover, 
for $i$ in $\{1,2,\ldots,k\}$, we define a product system $A=\langle
A_1,\ldots,A_k \rangle$, 
as follows. 
Take $P_i= S_i$, and $p_i^0$ the unique state in $M_0 \cap P_i$.
Define local moves $\to_i= \{\langle p,\lambda(t),p' \rangle \mid t \in T_i ~\mbox{and}~
(p,t), (t,p') \in F_i, ~\mbox{for}~p,p' \in P_i \}$.
So we get sequential systems $A_i= \langle P_i,\to_i,p_i^0 \rangle$, 
and the product system $A=\langle A_1,A_2,\ldots,A_k \rangle$
over alphabet $\Sigma$.
Global moves are $\globals(a) = \{\prodover_{i \in loc(a)} t[i] \mid t
\in T_a \} $.
And, the set of final states is
$G = \calG$. 
\end{definition}

\begin{lemma}
From net system $N$ with a final set of markings, the construction 
of the \psglobals{} $A$ in Definition~\ref{def:direct} above preserves language.
The product system $A$ is linear in the size of net, and
product system has subset-acceptance. 
\end{lemma}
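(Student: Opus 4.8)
The plan is to establish the Lemma in three parts, following the structure of Definition~\ref{def:direct}: first that the construction yields a well-defined product system with globals, second that it preserves the language, and third the size and acceptance-type claims. Since the net $(N, M_0, \calG)$ is $1$-bounded and S-decomposable, I can invoke the observation already made in the preliminaries that every reachable marking $M$ of such a net is a $k$-tuple $\langle M\project S_1, \ldots, M\project S_k\rangle$ with exactly one place marked in each S-component $N_i = (S_i, T_i, F_i)$. The sequential components $A_i = \langle P_i, \to_i, p_i^0\rangle$ with $P_i = S_i$ are S-nets by hypothesis, so each transition $t \in T_i$ has a unique pre-place and post-place in $S_i$, which makes $t[i] = \langle p, \lambda(t), p'\rangle$ well-defined; hence $\to_i$ is a genuine set of local moves and $\globals(a) = \{\prodover_{i \in loc(a)} t[i] \mid t \in T_a\}$ is a well-defined subset of $\To_a$. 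The initial state $p_i^0$ is the unique element of $M_0 \cap P_i$, which exists and is unique precisely because $M_0$ marks exactly one place of each S-component (S-decomposability). Setting $G = \calG$ completes the construction.

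For the language equivalence $Lang(N, M_0, \calG) = \Lang(A)$, the key step is a one-to-one correspondence between firing sequences of $N$ and runs of $A$, proved by induction on sequence length. The base case matches $M_0$ with $R^0 = (p_1^0, \ldots, p_k^0)$, which agree by the previous paragraph. For the inductive step, suppose a reachable marking $M$ corresponds to a reachable product state $R$, i.e.\ $M = \langle R[1], \ldots, R[k]\rangle$. If $M \step{t} M'$ with $\lambda(t) = a$, then $t$ is enabled at $M$, so for each $i \in loc(a)$ the unique pre-place of $t$ in $S_i$ is marked, i.e.\ equals $R[i]$, and firing $t$ moves that token to the unique post-place of $t$ in $S_i$; for $i \notin loc(a)$, $t$ has no place in $S_i$ (since $T_i$ collects exactly the transitions labelled by $\Sigma_i$ and $a \notin \Sigma_i$), so $M'\project S_i = M\project S_i$. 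This is exactly the condition for the global move $g = \prodover_{i \in loc(a)} t[i] \in \globals(a)$ to be taken at $R$ reaching the product state $Q$ with $Q = \langle M'[1], \ldots, M'[k]\rangle$. Conversely, any global move $g \in \globals(a)$ taken at $R$ arises from some $t \in T_a$ with $g = \prodover_{i \in loc(a)} t[i]$; the fact that $R[i]$ is the pre-state of $g[i]$ for all $i \in loc(a)$ means exactly that all pre-places of $t$ are marked at $M$, so $t$ is enabled and $M \step{t} M'$ with $M'$ matching $Q$. Since $\lambda$ of the firing sequence equals the word read along the run, and $\calG = G$, a firing sequence reaches a marking in $\calG$ iff the corresponding run reaches a product state in $G$; hence the languages coincide.

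Finally, the size bound is immediate: $P_i = S_i$ so $\sum_i |P_i| = |S|$ (S-components partition or at least cover the places, and in the S-decomposition of Definition~\ref{s-decomposable} they partition), and $\bigcup_a \globals(a)$ is in bijection with $T$ via $t \mapsto \prodover_{i \in loc(a)} t[i]$, so the number of global moves is $|T|$; thus $A$ is linear in the size of $N$. The acceptance type is subset-acceptance by construction, since $G = \calG$ is an arbitrary finite set of final markings, each a tuple in $\prodover_i S_i$, with no product condition imposed. The main obstacle I anticipate is bookkeeping around the components $S_i$: strictly, one must confirm that for $a \notin \Sigma_i$ no transition in $T_a$ touches $S_i$, and that the S-cover components genuinely partition $S$ (not merely cover it) so that the tuple representation of markings is unambiguous — but both follow from Definition~\ref{s-decomposable}, where $S = S_1 \cup \cdots \cup S_k$ with ${}^\bullet S_i \cup S_i^\bullet = T_i$, so this is a matter of citing the setup carefully rather than a genuine difficulty.
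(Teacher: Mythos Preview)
Your proposal is correct and follows essentially the same approach as the paper: the paper's own justification (given as a short paragraph immediately after the lemma rather than a formal proof) simply observes the bijection between $a$-labelled transitions of $N$ and global $a$-moves of $A$ with identical pre- and post-places, and concludes the one-to-one correspondence between reachable markings and reachable product states. Your argument spells this out rigorously via induction on the length of the firing sequence, which is exactly the right elaboration.

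One small caution: you assert that the S-components \emph{partition} $S$, but Definition~\ref{s-decomposable} only guarantees a cover $S = S_1 \cup \cdots \cup S_k$; the $S_i$ need not be disjoint. This does not harm your argument---the tuple representation of a marking as an element of $S_1 \times \cdots \times S_k$ is still well-defined because each component carries exactly one token, regardless of overlap---but your size computation $\sum_i |P_i| = |S|$ should be relaxed to $\sum_i |P_i| \leq \sum_i |S_i|$, which is still linear in the size of the input (the S-decomposition is part of the data).
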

For each $a$-labelled transition of the net we get one global $a$-move in the  
product system having same set of pre-places and post-places.   
And, for each global $a$-move in product system we have an $a$-labelled transition  
in the net having same pre and post-places.
We get one to one correspondence between reachable states of product system and  
reachable markings of the net we started with. Therefore, if we begin with a  
free choice net, we get same source property in the obtained product system.   
And, for each transition in the net we have a global trasition hence, $A$ has  
product moves property if the net has distributed choice. 

\begin{theorem}
\label{lab:tm:dcp2pmp}
Let $(N,M_0,\calG)$ be a $1$-bounded, and an  
S-decomposable 
labelled net with a set of final markings $\calG$. 
Then 
\begin{itemize}  
\item  if $N$ has free choice property, then  constructed product system $A$ 
with globals, has same source property,  
\item  in addition, if the net has distributed choice, then 
       $A$ has product moves. 
\end{itemize} 
\end{theorem}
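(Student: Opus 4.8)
The plan is to verify both claims by following the correspondence between the net $N$ and the product system $A$ constructed in Definition~\ref{def:direct}, using the one-to-one correspondence between reachable markings of $N$ and reachable product states of $A$, and the transition-by-transition correspondence between $a$-labelled transitions of the net and global $a$-moves of $A$. The key preliminary observation, already noted after Definition~\ref{def:direct}, is that each $a$-labelled transition $t$ of $N$ yields a global move $g_t = \prodover_{i \in loc(a)} t[i]$ with $\pre{g_t} = \preset{t} \cap (\bigcup_i S_i)$ restricted to the relevant components and $\post{g_t} = \postset{t}$ similarly; conversely every global $a$-move of $A$ arises this way. Because $S$-decomposability guarantees that each transition $t \in T_a$ has exactly one pre-place and one post-place in each $S_i$ for $i \in loc(a)$, the pre- and post-configurations are well defined.

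For the first bullet, I would show that free choice of $N$ implies same source property of $A$. Suppose two global moves $g$ and $g'$ of $A$ share a pre-state $p \in P_i = S_i$; say $g$ is an $a$-move and $g'$ a $b$-move (possibly $a=b$). Let $t, t'$ be the corresponding transitions of $N$ with $p \in \preset{t} \cap \preset{t'}$. Then $t$ and $t'$ lie in the same cluster $[p]$ of $N$: indeed $p^{\bullet}$ contains both $t$ and $t'$, so both belong to the cluster of $p$. Since $N$ is free choice, all transitions in this cluster have the same pre-set, so $\preset{t} = \preset{t'}$. Projecting back through the $S$-cover, $\pre{g} = \pre{g'}$, which is exactly same source property. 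I would spell out the one subtlety: $\pre{g}$ is the tuple of pre-states over $loc(a)$, and $\pre{g'}$ over $loc(b)$, but since a local pre-state $p \in S_i$ of the transition forces $i \in loc(a) \cap loc(b)$ and the pre-set of the common cluster determines exactly one place in each participating $S_i$, the two sets of pre-states coincide as sets of places.

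For the second bullet, assume additionally that $N$ has distributed choice; I must show $A$ has product moves. Fix $a$ and a same source compartment $\To_a^{SS}$ of $a$-globals; by the correspondence this compartment corresponds to exactly the set $C_a$ of $a$-labelled transitions of a single free choice cluster $C$ of $N$ (all sharing the same pre-set). Then $\post{\To_a^{SS}} = \{\pi(g_t) \mid t \in C_a\} = \post{C_a}$ under the identification of product states with places, and similarly $\To_a^{SS}[i] = C_a[i]$ so $\postdecomp(\To_a^{SS}) = \postdecomp(C_a)$. The distributed choice property of $N$ gives $\postdecomp(C_a) \subseteq \post{C_a}$, which translates directly to $\postdecomp(\To_a^{SS}) \subseteq \post{\To_a^{SS}}$, i.e. product moves property.

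The main obstacle, such as it is, is bookkeeping: making precise the identification between clusters of $N$ and same source compartments of $A$, and checking that the post-product $\pi$ on the net side and the target-configuration $\pi$ on the product-system side agree under the $S$-cover identification $P_i = S_i$. One must also confirm that the compartment genuinely corresponds to a \emph{single} cluster and not a union of clusters that happen to share pre-places across components — but free choice and $S$-decomposability together pin this down, since two transitions with a common pre-place are forced into one cluster and hence have identical pre-sets. Once this identification is in hand, both statements are immediate rephrasings of the free choice and distributed choice definitions, so no genuine computation remains.
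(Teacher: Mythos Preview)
Your proposal is correct and follows essentially the same approach as the paper: the paper's argument (given in the paragraph preceding the theorem) simply invokes the one-to-one correspondence between $a$-labelled transitions of $N$ and global $a$-moves of $A$ with matching pre- and post-places, and observes that free choice translates to same source and distributed choice to product moves. Your write-up is considerably more detailed than the paper's sketch, in particular making explicit the cluster-to-compartment identification and the agreement of the two $\pi$ maps, but the underlying route is the same.
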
 
In construction of Definition~\ref{def:direct}, 
we start with a net having distributed choice and a final set of markings then  
we get product system with
matching with subset-acceptance condition. Note that in this case we do not have
to construct globals~\cite{PL14,PL15}.
 
For the net system of Example~\ref{exfcnondcpnetpac} 
and accepting language $L_s$ we can construct 
the product system $\calD$ of Example~\ref{lab:ex:psglobals}.
Therefore, we generalize the results from~\cite{PL14,PL15}.  
\begin{theorem}
For a $1$-bounded, S-decomposable labelled net having distributed choice and  
given with a set of final markings. 
Then one can construct a product system with 
conflict-equivalent and consistent matchings and 
having subset-acceptance. 
\label{tm-dc-n2p2} 
\end{theorem}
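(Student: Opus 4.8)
\noindent\emph{Proof plan.}
The plan is to run the generic net-to-product-system construction on $N$ and then recognise the result as a product system with matchings, the distributed choice hypothesis being exactly what makes the passage to matchings behaviour-preserving.

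First I would apply Definition~\ref{def:direct} to $(N,M_0,\calG)$, obtaining a product system with globals $A=\langle A_1,\dots,A_k\rangle$ whose components are read off from the $S$-cover of $N$. By the lemma following Definition~\ref{def:direct}, $A$ is linear in the size of $N$, has subset-acceptance with $G=\calG$, and $\Lang(A)=Lang(N,M_0,\calG)$. Since the phrase ``$N$ has distributed choice'' presupposes that $N$ is free choice, Theorem~\ref{lab:tm:dcp2pmp} applies and tells us that $A$ has the same source property and, because $N$ has distributed choice, also the product moves property.

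Next I would equip $A$ with the matching relation $\matching(a)=\{\,\prodover_{i\in loc(a)}\pre{g[i]}\mid g\in\globals(a)\,\}$ for each global $a$ --- the tuples of pre-states of the globals; concretely, since $N$ is free choice, these are the pre-sets of the clusters of $N$ that contain an $a$-labelled transition, projected onto the components --- keeping the components $B_i=A_i$ and the acceptance set $G=\calG$; call the result $B$. One checks that this is an admissible matching of labels: for $i\in loc(a)$ the $i$-th projection of $\matching(a)$ is the set of all pre-states of $a$-moves of $A_i$, and the same source property of $A$ guarantees that a given local state occurs as the $i$-th component of at most one tuple. Part~(1) of Theorem~\ref{lab:tm:pmp2mat} then gives, from the same source property, that $\matching(a)$ is conflict-equivalent, and subset-acceptance of $B$ is inherited from $A$.

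The two remaining points --- that $B$ is consistent with its matchings and that $\Lang(B)=Lang(N,M_0,\calG)$ --- are exactly the clauses of Theorem~\ref{lab:tm:pmp2mat} established there only under a liveness assumption, which we do not have here; working around this is the main obstacle, and I would handle it with the product moves property in place of liveness. In $B$ the admissible behaviour is governed by the matchings alone: at a product state whose $loc(a)$-projection is a tuple of $\matching(a)$ one may take any combination of local $a$-moves out of that tuple. The product moves property of $A$, i.e. $\postdecomp(\To_a^{SS})\subseteq\post{\To_a^{SS}}$ on every same source compartment $\To_a^{SS}$, says precisely that $\globals(a)$ already realises every such combination, so allowing all of them adds no behaviour; hence $B$ and $A$ have the same reachable-state graph and $\Lang(B)=\Lang(A)=Lang(N,M_0,\calG)$. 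Finally, every move available in $B$ has, by this description, its tuple of pre-states in $\matching(a)$, so along any run $R^0\Step{v}R\Step{a}Q$ of $B$ we have $R\project loc(a)\in\matching(a)$; thus $B$ is consistent with its matchings. Putting these together, $B$ is a product system with conflict-equivalent and consistent matchings and subset-acceptance whose language is $Lang(N,M_0,\calG)$, as required, generalising the results of~\cite{PL14,PL15}.
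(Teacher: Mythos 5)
Your overall route---apply Definition~\ref{def:direct}, invoke Theorem~\ref{lab:tm:dcp2pmp} for same source and product moves, and then read the matchings off as the tuples of pre-states of the globals (equivalently, the pre-sets of the clusters)---is essentially the route the paper takes; the paper gives no detailed argument beyond this and a citation to~\cite{PL14,PL15}, and the conflict-equivalence part of your argument (via part~(1) of Theorem~\ref{lab:tm:pmp2mat}, which needs only same source) is fine, as is your check that the constructed relation is an admissible \matching($a$).

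The gap is in your replacement of the liveness hypothesis by the product moves property for the two remaining obligations. Your key claim, ``in $B$ the admissible behaviour is governed by the matchings alone,'' contradicts the paper's semantics: a \psmatchings{} is a direct (or synchronous) product, whose runs are defined purely from the local moves --- \emph{every} combination of local $a$-moves is available at every product state whose $loc(a)$-components each have an outgoing local $a$-move, whether or not that state's projection lies in $\matching(a)$. The matching never restricts behaviour; it is only the yardstick against which consistency of runs is measured. Consequently the dangerous moves are exactly those combining local $a$-moves of globals from \emph{different} same source compartments (different clusters of $N$): at a reachable product state $R$ whose components sit on pre-places of $a$-transitions belonging to two distinct clusters, $B$ fires an $a$-move with $R\project loc(a)\notin\matching(a)$, so consistency fails, and since $N$ enables no $a$-transition at the corresponding marking, $\Lang(B)\subseteq\Lang(A)$ can fail as well. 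The product moves property cannot exclude this: it constrains only the post-state combinations \emph{within} a single compartment $\To_a^{SS}$ ($\postdecomp(\To_a^{SS})\subseteq\post{\To_a^{SS}}$) and says nothing about cross-compartment combinations of pre-states. That is precisely the situation that the liveness hypothesis is used to rule out in part~2(a) of Theorem~\ref{lab:tm:pmp2mat}, and a concrete net with distributed choice but a reachable marking mixing pre-places of $a$-transitions from two clusters defeats your argument. So either a liveness-type hypothesis must be reinstated (as in~\cite{PL14,PL15}), or consistency must be argued directly from reachability properties of the net; the product moves property alone does not supply it.
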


Given below is the converse result. 
\begin{theorem} 
\label{lab:tm:ps2n-dcp1}
For a product system with conflict-equivalent, consistent
matchings, and subset-acceptance, we get language equivalent free choice net
with distributed choice and having a set of final markings. 
\end{theorem}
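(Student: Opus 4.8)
The plan is to chain together the constructions and theorems already established in the excerpt rather than build anything from scratch. Given a product system $B$ with conflict-equivalent and consistent matchings and subset-acceptance, I would first invoke Theorem~\ref{lab:tm:mat2pmp} to pass from $B$ to a product system $A$ with globals having same source property \emph{and} product moves property, with $\Lang(A)=\Lang(B)$. (Note Theorem~\ref{lab:tm:mat2pmp} is stated for $B$ with conflict-equivalent and consistent matchings; the subset-acceptance is carried along because the final-state set of the constructed $A$ is taken to be that of $B$, as in the proof.) Then I would apply the generic construction of Definition~\ref{def:generic} to $A$ to obtain a net system $(N,M_0,\calG)$, and invoke Theorem~\ref{lab:tm:pmp2dcp}: same source property of $A$ gives that $N$ is free choice, and product moves property of $A$ gives that $N$ has distributed choice. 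The accompanying lemma after Definition~\ref{def:generic} gives $Lang(N,M_0,\calG)=\Lang(A)$ and that $N$ is $1$-bounded and S-decomposable. Composing the two language equalities yields $Lang(N,M_0,\calG)=\Lang(B)$, and since $A$ carried subset-acceptance, the resulting $\calG$ need not satisfy the product condition, i.e.\ we genuinely land in the subset-acceptance class of free choice nets with distributed choice. That is exactly the claimed converse.

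A subtlety I would want to address explicitly is the \emph{liveness} hypothesis. Theorem~\ref{lab:tm:pmp2mat} needed liveness of $A$ to get consistency and language equivalence in the matchings-to-globals direction; here we are going the other way, but I should double-check that neither Theorem~\ref{lab:tm:mat2pmp} nor Theorem~\ref{lab:tm:pmp2dcp} nor the Definition~\ref{def:generic} lemma secretly relies on liveness. Reading their statements, Theorem~\ref{lab:tm:mat2pmp} is unconditional (it only asks for conflict-equivalent consistent matchings), Theorem~\ref{lab:tm:pmp2dcp} is unconditional, and the generic-construction lemma is unconditional. So no liveness assumption is needed, and I would remark on this to reassure the reader that the correspondence is clean in this direction.

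The main obstacle, such as it is, is bookkeeping rather than mathematics: I must make sure the subset-acceptance final-state set $G\subseteq\prodover_{i\in Loc}G_i$ is faithfully transported through both constructions. In Theorem~\ref{lab:tm:mat2pmp} the construction sets $A_j=B_j$ componentwise, so $G$ is literally unchanged; in Definition~\ref{def:generic} one sets $\calG=G$. Hence $\calG$ is exactly the original subset-acceptance set, and the only thing to verify is that an arbitrary subset of a product of component final-state sets is a legitimate set of final markings for an S-decomposable net, which is immediate from the definition of a labelled net system. I would also note the size blow-up: since Theorem~\ref{lab:tm:mat2pmp} is exponential in $B$ and Definition~\ref{def:generic} is linear, the resulting net is exponential in the size of $B$; this is worth stating for completeness but is not part of the claimed statement. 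With these observations in place the proof is just: apply Theorem~\ref{lab:tm:mat2pmp}, then Definition~\ref{def:generic} with Theorem~\ref{lab:tm:pmp2dcp} and its companion lemma, and compose language equalities. \qed
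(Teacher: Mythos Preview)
Your proposal is correct. The paper does not supply an explicit proof of Theorem~\ref{lab:tm:ps2n-dcp1}; it treats the result as a generalization of the constructions in \cite{PL14,PL15}, where the net is built directly from a \psmatchings{} by taking as transitions the \emph{reachable} global moves of the system (see the remark just above Definition~\ref{def:direct}). Your route is slightly different: you first pass from the \psmatchings{} $B$ to a \psglobals{} $A$ via Theorem~\ref{lab:tm:mat2pmp}, and only then apply the generic net construction of Definition~\ref{def:generic} together with Theorem~\ref{lab:tm:pmp2dcp}. This detour has the virtue of being entirely self-contained within the present paper and of making the appearance of distributed choice transparent (it comes from the product-moves property of $A$, which Theorem~\ref{lab:tm:mat2pmp} guarantees). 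The price is the exponential blow-up you note, since Theorem~\ref{lab:tm:mat2pmp} materializes \emph{all} global moves compatible with the matchings rather than only the reachable ones; the paper's implicit direct construction may produce a smaller net. Your observations on liveness (not needed in this direction) and on the faithful transport of the subset-acceptance set $G$ through both constructions are correct and worth stating.
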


\section{Expressions} 
\label{lab:sec:expr} 
 
First we define regular expressions and its 
derivatives. 

\subsection{Regular expressions and their properties}
\label{lab-sec-dfn}
\noindent
A \name{regular expression} over alphabet $\Sigma_i$ 
such that constants $0$ and $1$ are not in $\Sigma_i$ is given by:
\[
s ::= 0 \mid 1 \mid a \in \Sigma_i \mid s_1 \cdot s_2 \mid s_1 + s_2 \mid s_1^*
\]
The language of constant $0$ is $\emptyset$ and 
that of $1$ is $\{\epsilon\}$.  
For a symbol $a \in \Sigma_i$, its language is $\Lang(a)=\{a\}$.  
For regular expressions $s_1+s_2, s_1 \cdot s_2 ~\mbox{and}~
s^*_1$, its languages are defined inductively
as union, concatenation and Kleene star 
of the component languages respectively.

As a measure of the size of an expression we will use
$wd(s)$ for its \name{alphabetic width}---the total number of occurrences of
letters of $\Sigma$ in $s$.  
 
For each regular expression $s$ over $\Sigma_i$,  
let $\Lang(s)$ be its language and  
its \name{initial actions} 
form the set $\Init(s) = \{a\mid \exists v\in \Sigma_i^{*}~\mbox{and}~av \in \Lang(s)~\}$
which can be defined syntactically.
We can syntactically check whether the empty word
$\epsilon \in \Lang(s)$.  
 
We use derivatives of regular expressions
which are known since the time of Brzozowski \cite{Brz64},
Mirkin \cite{Mir66} and Antimirov \cite{Ant96}.
 
\begin{definition}[Antimirov derivatives~\cite{Ant96}]
Given regular expression $s$ and symbol $a$, the set of partial \name{derivatives} 
of $s$ with respect to $a$, written $\Der_{a}(s)$ are defined as follows.

\begin{math}
\begin{array}{rcl}
\Der_{a}(0) & = & \emptyset	\\
\Der_{a}(1) & = & \emptyset	\\
\Der_{a}(b) & = & \{\epsilon\} \If~b=a,~~~\emptyset~\mbox{otherwise}\\
\Der_{a}(s_1+s_2) & = & \Der_{a}(s_1)\cup \Der_{a}(s_2) \\
\Der_{a}(s_1^{*}) & = & \Der_{a}(s_1) \cdot s_1^{*}\\
\Der_{a}(s_1 \cdot s_2) & = & \left \{
 \begin{array}{l l}
    \Der_{a}(s_1) \cdot s_2 \cup \Der_{a}(s_2), 
		& \If \epsilon \in \Lang(s_1)\\
    \Der_{a}(s_1) \cdot s_2 & \Otherwise
 \end{array}  \right. 
\end{array}
\end{math}
\end{definition}
Inductively $\Der_{aw}(s) = \Der_w(\Der_a(s))$.\\
The set of all partial derivatives $\Der(s) = \cupover_{w \in \Sigma_i^*} \Der_w(s)$,
where $\Der_{\epsilon}(s)=\{s\}$.
We have derivatives $\Der_{a}(ab+ac)=\{b,c\}$ and
$\Der_{a}(a(b+c))=\{b+c\}$.
 
A derivative $d$ of $s$ with action $a \in \Init(d)$ is 
called an \name{$a$-site} of $s$. 
An expression is said to have \name{equal choice} 
if for all $a$, its $a$-sites have the same set of initial actions.
For a set $D$ of derivatives, we collect all initial actions to form $\Init(D)$.
Two sets of derivatives have equal choice if their \Init~sets are same.
 
As in \cite{PL14} we put together derivatives which may correspond to the same state 
in a finite automaton.

\begin{definition}[\cite{PL14}]
Let $s$ be a regular expression and $L=\Lang(s)$.
For a set $D$ of $a$-sites of regular expression $s$ and an action $a$, 
we define the relativized language 
$L^D_a = \{xay \mid xay \in L, \exists d \in \Der_{x}(s) \cap D,
\exists d' \in  \Der_{ay}(d)~\mbox{with}~ \epsilon \in \Lang(d')\}$,
and the prefixes $\Pref^D_a(L) = \{x \mid xay \in L^D_a \}$,
and the suffixes $\Suf^D_a(L) = \{y \mid xay \in L^D_a \}$.
We say that the derivatives in set $D$ \name{$a$-bifurcate} $L$
if $L^D_a = \Pref^D_a(L)~a~\Suf^D_a(L)$.
\label{lab:defi:bif}
\end{definition}
 
We use \name{partition}s of the $a$-sites of $s$ into blocks such that 
each \name{block} (that is, element of the partition) $a$-bifurcates $L$~\cite{PL14}.  

\begin{definition}[\cite{PL14}]
\label{def-part}
Let $X_1$ be a partition of $a$-sites of $s_1$ and $X_2$ be a partition of
$a$-sites of $s_2$, where regular expression $s=s_1 \cdot s_2$ or $s=s_1 + s_2$.
For partitions $X_1,X_2$ with blocks $D_1,D_2$
containing elements $d_1,d_2$ respectively, we use the notation
$(X_1 \cup X_2)[d/d_1,d_2]$ for the modified partition
$((X_1 \setminus \{D_1\}) \cup (X_2 \setminus \{D_2\}) \cup 
\{(D_1 \cup D_2 \cup \{d\}) \setminus \{d_1,d_2\}\}$. 
And, for partition $X$ with block $D_1$ in it,
having $d_1$ in it,
$X[d/d_1]$ is the modified partition 
$X \setminus \{ D_1 \} \cup \{(D_1 \setminus \{d_1\})\cup \{d \} \}$.

\begin{center} 
\begin{math}
\begin{array}{rcl}
\Part_a(b) & = & \emptyset \If a \neq b \\
\Part_a(a) & = & \{\{a\}\} \\
\Part_a(s_1^*) & = & (\Part_a(s_1) \cdot s_1^*)[s_1^*/s_1 \cdot s_1^*]\\
\Part_a(s_1+s_2) & = &  Z_1 \cup Z_2 \cup \{s_1 + s_2 \} \If a \in \Init(s_1 + s_2)\\
\Part_a(s_1 \cdot s_2) & = & \left \{
\begin{array}{l l}
 	\Part_a(s_1)\cdot s_2 \cup \Part_a(s_2)[s_1 \cdot s_2/s_2]
		& \If \epsilon \in \Lang(s_1) \\
                & \rmand  \epsilon \notin \Lang(s_2)\\
       	\Part_a(s_1) \cdot s_2 \cup \Part_{a}(s_2) & \Otherwise
\end{array}  \right. 
\end{array} 
\end{math}
\Where,\\
\begin{math}
\begin{array}{l l}
Z_1 =\Part_a(s_1)\setminus\{s_1\} & \If s_1 \notin \Der_a(s_1 + s_2),~\Part_a(s_1) \Otherwise\\
Z_2=\Part_a(s_2)\setminus\{s_2\}&  \If s_2 \notin \Der_a(s_1 +
s_2),~\Part_a(s_2) \Otherwise.
\end{array} 
\end{math}
\end{center}
\end{definition}

For an action $a$, let $\Part_a(s)$ denote such a partition. 
In addition to thinking of 
blocks of the partition as places of an automaton, we can think of 
pairs of blocks and their effects as local moves. 

\begin{definition}[\cite{P16}]
Given an action $a$, and a set of $a$-sites $B$ of regular expression
$s$, and a specified set of \name{$a$-effects} $E \subseteq Der_{a}(B)$,
we define the relativized languages 

\[\begin{array}{l}
L^{(B,E)}_a = \{xay \in L\mid \exists d \in Der_{x}(s) \cap B,
\exists d' \in  Der_{a}(d) \cap E,~\mbox{and}~ \\
 \hspace*{6.4cm} \exists d'' \in  Der_{y}(d')~ \mbox{with}~ \epsilon \in Lang(d'')\}.
\end{array}
\]

We define the prefixes $Pref^{(B,E)}_a(L) = \{x \mid xay \in L^{(B,E)}_a \}$
and the suffixes $Suf^{(B,E)}_a(L) = \{y \mid xay \in L^{(B,E)}_a \}$. 
We say that 
a tuple $(B,E)$ \name{$a$-funnels} $L$ 
if $L^{(B,E)}_a = Pref^{B}_a(L)\cdot a\cdot Suf^{(B,E)}_a(L)$.
In such a pair $(B,E)$, if $B$ is a block in the $\Part_a(s)$ 
and $E$ is a nonempty subset of $a$-effects of $B$,
then it is called as an \name{$a$-duct}. 
\label{lab:defi:bif-ext}
\end{definition}
For an $a$-duct $(B,E)$, we define its set of initial actions  
$\Init(B,E)$ as $\Init(B,E)=\Init(B)$,
call $B$ as its \name{pre-block} 
and call $E$ as its \name{post-effect}. 
For all $i$ in $loc(a)$ let \name{$a$-ducts}($s_i$) denote the set
of all $a$-ducts of regular expression $s_i$.  
For any two $a$-ducts $(B,E)$ and $(B',E')$ in $a$-ducts($s_i$),
define $(B,E)=(B',E')$ if $B=B'$ and $E=E'$. 
Given an $a$-duct $d=(B,E)$ its post-effect $E$ 
is sometimes denoted by $\postset{d}$ and its pre-block $B$ can be denoted
as $\preset{d}$. For a collection of ducts $z$, the set of all their  
post-effects (resp. pre-blocks) is denoted as $\postset{z}$ (resp. $\preset{z}$). 
In a similar way, we define the set of post-effects of an  $a$-cable $D$, 
as $\postset{D}= \{\postset{D[i]} \mid i \in loc(a)\}$ and
its set of pre-blocks as $\preset{D}= \{ \preset{D[i]}\mid i \in loc(a)\}$. 
 
\subsection{Connected expressions over a distributed alphabets}
\label{lab:section-exp-def}
 
The syntax of \name{connected expressions} defined over 
a distribution $(\Sigma_1,\Sigma_2,\ldots,\Sigma_k)$ of alphabet
$\Sigma$ is given below.
\[
e ::= \zero |\fsync(s_1,s_2,\ldots,s_k), \mbox{~where~}s_i \mbox{~is a regular
expression over~} \Sigma_i
\]
When $e=\fsync(s_1,s_2,\ldots,s_k)$ and $I \subseteq Loc$,
let the \name{projection} $e \project I = \prodover_{i \in I} s_i$.

A connected expression $e=\fsync(s_1,s_2,\dots,s_k)$ over $\Sigma$,
is said to have \name{equal choice} if, 
for all global actions $a$ in $\Sigma$
and for any $i$, $j$ in $loc(a)$,
any $a$-site of $s_i$ have same \Init~set as of any $a$-site of $s_j$.  
 
 
For a connected expression defined over distributed alphabet
its derivatives and semantics were given in \cite{PL14}, and
are given as follows.  
For the connected expression $\zero$, we have 
$Lang(\zero)=\emptyset$.
For the connected expression $e=\fsync(s_1,\ldots,s_k)$,
its language is 
$Lang(e)=Lang(s_1) \syncshuf Lang(s_2)\syncshuf \ldots \syncshuf
Lang(s_k)$. 
 
The definitions of derivatives
extended to connected expressions \cite{PL14} is as follows. 
The expression $\zero$ has no derivatives on any action.
Given an expression $e=\fsync(s_1,s_2,\ldots,s_k)$, its derivatives 
are defined by induction using the derivatives of the $s_i$ on action $a$:

\begin{center}
$Der_a(e)=\{\fsync(r_1,\ldots,r_k) \mid \forall i \in loc(a), 
r_i \in Der_a(s_i); \Otherwise~r_j=s_j\}.
$
\end{center}

\subsection{Connected expressions with pairings}
\label{lab:subsec:prop-ce}
We recall some properties of connected expressions over a
distribution, which were, useful in construction of free choice nets.  
This property relates to matchings of direct products~\cite{PL14}. 
 
 
\begin{definition}[\cite{PL14}]
Let $e=\fsync(s_1,s_2,\dots,s_k)$ be a connected expression over $\Sigma$.
For a global action $a$, \name{\pairing(a)} is a subset of tuples
$\prodover_{i \in loc(a)} \Part_a(s_i)$ such that the projection of these
tuples includes all the blocks of $\Part_a(s_i)$, and if a block
of $\Part_a(s_j), j \in loc(a)$ appears in one tuple of the pairing,
it does not appear in another tuple.
(For convenience we also write $\pairing(a)$ as a subset of 
$\prodover_{i \in loc(a)} \Der(s_i)$ which respects the partition.)
We call $\pairing(a)$ \name{equal choice} if for every tuple
in the pairing, the blocks of derivatives in the tuple have equal choice.
\end{definition}
Derivatives for connected expressions with \pairing{} are defined as
follows. 
A derivative $\fsync(r_1,\dots,r_k)$ is \name{in \pairing(a)} if 
there is a tuple $D \in \pairing(a)$ 
such that $r_i \in D[i]$ for all $i \in loc(a)$.
For convenience we may write a derivative as an element of $\pairing(a)$.
Expression $e$ is said to have \name{(equal choice) pairing of actions} 
if for all global actions $a$, there exists an (equal choice) pairing of $a$.
Expression $e$ is said to be \name{consistent with a pairing of actions}
if every reachable $a$-site $d \in \Der(e)$ is in $\pairing(a)$.
Expression $e$ is said to have \name{equal choice} property if
it has equal choice pairing of actions for all global actions $a$ in $\Sigma$.

    Given a connected expression $e$ with pairings, checking if it is
    consistent with pairing of actions can be done in
\class{PSPACE}~\cite{PL14,PL15}.

\subsection{Connected expression with cables (\cecables)}
We give some properties of connected expressions over a
distribution, which extend the notion of pairing,  
and have been related to product systems with globals \cite{P16}. 
The notion of cables corresponds to notion of globals of product
systems, and hence it corresponds to transitions of a net.

\begin{definition}[\cite{P16}]
Let $e=\fsync(s_1,s_2,\dots,s_k)$ be a connected expression over $\Sigma$.
For each action $a$ in $\Sigma$, we define
\name{$a$-cables(e)} = $\prodover_{ i \in loc(a)}$$a$-$ducts(s_i)$.  
For an action $a$, an \name{$a$-\cable} is an element of the set $a$-$cables(e)$.  
We say that a block $B$ of $\Part_a(s_i)$ \name{appears} in an
$a$-\cable $D$ if there exists $j$ in $loc(a)$ and there exists $Y \subseteq
\Der_a(B)$ such that $D[j]=(B,Y)$, i.e. if $B$ is a pre-block of a
component $a$-duct of $D$. 
For any $a$-\cable  $D$, its set of pre-blocks $^\bullet D
= \cup_{i \in loc(a)} \{ B_i \mid B_i ~\mbox{appears in}~ D\}$, i.e.
the set of pre-blocks of all the of its component $a$-ducts. 

\noindent
For expression $e$, let \name{\cables(a)} $\subseteq
a$-$cables(e)$, such that for all $i$ in $loc(a)$  
\begin{enumerate}  
\item Each block $B$ in $\Part_a(s_i)$, appears in at least one  
$a$-\cable of it.  
\item for all $(B,E)$ and $(B',E')$ in $a$-$ducts(s_i)$ with $(B,E) \neq
(B',E')$, if $B=B' \implies E \cap E' = \emptyset$, i.e.
if any two distinct $a$-ducts of $s_i$ appearing in it have same pre-block then,  
they must have disjoint post-effects.
\end{enumerate}  
 
Connected expressions with cables were defined in \cite{P16}, as follows. 

\noindent 
A \name{connected expression with cables (\cecables{})} is a connected expression 
with relations $\cables(a)$ of it,   
for each global action $a$ in $\Sigma$. 
\end{definition} 

Derivatives of a connected expression with cables are \cite{P16}
defined as follows.
The \cecables{} $\zero$ has no derivatives on any action.
For expression $e=\fsync(s_1,s_2,\ldots,s_k)$,
we define its derivatives on action $a$,
by induction, using $a$-ducts and the derivatives of $s_j$ as:

\begin{center} 
$\Der_a(e)=\{\fsync(r_1,r_2,\ldots,r_k) \mid ~r_j \in \Der_a(s_j)$
if there exists an $a$-\cable  $D$ in $\cables(a)$
such that, for all $j$ in $loc(a)$,
$s_j$ is in pre-block $B_j$ and $r_j$ is in $X_j$ of 
$a$-duct $D[j]=(B_j,X_j)$ of $s_j$,
$
 \Otherwise~r_j=s_j\}.
$
\end{center}
  
We use the word \name{derivative} for expressions such as 
$d=\fsync(r_1,\dots,r_k)$ given above. The \name{reachable} derivatives are
$Der(e) = \{d \mid d \in Der_x(e), x \in \Sigma^*\}$.
A \cecables{} is said to have \name{equal source property} if for any pair of
two \cables~sharing a common pre-block have same set of pre-blocks.
This property corresponds to same source property of product systems
and relates to transitions belonging to same cluster of nets.

Language of $e$ is the set of words over $\Sigma$ defined using derivatives as below. 
\begin{center}
$Lang(e)=\{w \in \Sigma^*\mid \exists e' \in \Der_w(e) ~\mbox{such that}
~\epsilon~ \in Lang(r_i),~\mbox{where}~ e'[i]=r_i\}$.
\end{center}
 
So we can have next derivative on action $a$, if it is allowed by  
the $\cables(a)$ relation.  
The number of derivatives may be exponential in $k$. 
Let $\Sigma=(\Sigma_1=\{ a,b,c\},\Sigma_2=\{ a,d,e\})$ be a distributed alphabet.  
\begin{example}[\cepairings{} and \cecables] 
\label{ex-ce1pairing} 
Let $e=\fsync((ab+ac)^*, (ad+ae)^*)$ be a connected expression defined over $\Sigma$. 
Here, $r_1=(ab+ac)^*$ 
and $s_1=(ad+ae)^*$. 
The set of derivatives of $r_1$ is $\Der(r_1)= \{ r_1, r_2=b r_1,
r_3=c r_1 \}$ and for  $s_1$ it is $\Der(s_1)= \{ s_1, s_2=d s_1, s_3=e s_1 \}$.  
We have $a$-sites$(r_1)=r_1$ and $\Part_a(r_1) = \{D=r_1 \}$. 
Similarly, $a$-sites$(s_1)=r_1$ and $\Part_a(s_1) = \{D'=s_1\}$. 
The only possible pairing relation is $pairing(a)=\{(D,D')\}$. 
We have $\Der_a(e)= \{\fsync(r_i,s_j) \mid i,j \in \{2,3\}\}$. 
Expression $e$ satisfies equal choice property. 

Now we associate a cable relation with $e$. 
The set of $a$-effects of $D$ is $\Der_a(D)= \{ r_2 , r_3 \}$.  
The set of $a$-ducts of $r_1$ is $\{(D,r_2), (D,r_3),(D,\{r_2,r_3\}) \}$. 
The set of $a$-effects of $D'$ is $\Der_a(D')= \{ s_2 , s_3 \}$ and 
the set of $a$-ducts of component expression $s_1$ is $\{(D',s_2), (D',s_3),(D',\{s_2,s_3\})\}$. 
A possible $\cables(a)$ relations for expression $e$ is
$\{((D, r_2), (D',s_2)), ((D, r_3), (D',s_3))\}$. 
See that each block in the $\Part_a(r_1)$ and 
$\Part_a(s_1)$ appears at least once in the $\cables(a)$ relation. 
And two $a$-ducts of $r_1$ appearing in this relation, have 
same pre-block $D$, so their set of post-effects ${r_2}$  
and ${r_3}$ are disjoint. This condition also holds for $a$-ducts
of $s_1$.

For both $a$-cables set of pre-blocks is identical, 
therefore $\cables(a)$ satisfies equal source property. 
We have $\Der_a(e)= \{\fsync(r_2,s_2), \fsync(r_3,s_3)\}$, 
but expression $\fsync(r_2,s_3)$ is not in $\Der_a(e)$, because
only post-effect of $D$ containing $r_2$ is the set $\{r_2\}$ and 
similarly, only post-effect of $D'$ containing $s_3$ is the set $\{s_3\}$ and 
there does not exist an $a$-cable with $(D,\{r_2\})$ and $(D',
\{s_3\})$ as its components. 
We have $\Der(e)=\{e, (r_2,s_2),(r_3,s_3),(r_1,s_2),(r_1,s_3),(r_2,s_1),(r_3,s_1) \}$. 
\end{example} 
 
Another such example of connected expression with pairings (resp. cables)  
is given below.
 
\begin{example}[\cepairings{}] 
\label{ex-ce2pairing} 
Let $e=\fsync((ab+ac)^*a, (ad+ae)^*a)$ be a  
connected expression defined over $\Sigma$. 
Let $p_1=(ab+ac)^*a$ 
with language $L_1$
and $q_1=(ad+ae)^*a$ 
with language $L_2$. 
The set of derivatives are  
$\Der(p_1)= \{p_1, p_2=b p_1, p_3=c p_1 , p_4=\epsilon\}$ and 
$\Der(q_1)= \{ q_1, q_2=d q_1, q_3=e q_1 , q_4=\epsilon\}$.  
The partitions of $a$-sites are $\Part_a(p_1) = \{B=\{p_1\} \}$ and 
$\Part_a(q_1) = \{B'=\{q_1\} \}$. 
A pairing relation is $pairing(a)=\{(B,B')\}$ and 
with respect to that $\Der_a(e)= \{\fsync(p_i,q_j) \mid i,j \in \{2,3,4\}\}$. 
Expression $e$ has equal choice property. 
 
Now we associate a cabling relation with $e$.  
The set of $a$-effects of $B$ is  $\Der_a(B)= \{ p_2 , p_3,p_4 \}$
and  $\Der_a(B')= \{ q_2 , q_3,q_4 \}$.  
The set of $a$-ducts for $p_1$ is  $\{(B,\{p_2\}),
(B,\{p_3\}),(B,\{p_2,p_4\}), (B,\{p_3,p_4\}) (B,\{p_2,p_4,p_3\}) \}$
and for $q_1$ is  $\{(B',q_2),
(B',q_3),(B',\{q_2,q_4\}), (B',\{q_3,q_4\}) (B',\{q_2,q_4,q_3\}) \}$. 
A \cables(a) relation is 
$\{((B, p_2), (B',q_2)), ((B, p_3), (B',q_3)),((B,
p_2), (B',q_2))\}$. 
Expression $e$ has equal source property and its set of derivatives
with respect to letter $a$ is 
$\Der_a(e)= \{\fsync(p_2,q_2), \fsync(p_3,q_3),\fsync(p_4,q_4)\}$. 
\end{example}

Now we give two new properties of connected expressions. 
In a connected expression with cables and having equal source property,
cables for an global action a can be partitioned into different
compartments : two $a$-cables belong to same compartment if they
have equal source. 
To any such $a$-cable $D$ belonging to an equal source compartment $\ES_a$ of $a$-cables, 
we can associate a set of its \name{post-blocks} listed in some order as 
$\pi(D)= \prodover_{i \in loc(a)} (\postset{D} \cap \chi_i)$, 
where $\chi_i = \{ \Part_a(s_i)\mid a \in \Sigma \} $.
Let $\post{\ES_a} = \{\pi(D) \mid D \in \ES_a \} $. 
Then we define $\ES_a[i]= \post{\ES_a ^i} \cap \chi_i$
and $\postdecomp(\ES_a)= \prodover_{i \in loc(a) } \ES_a[i]$. 
We call $\ES_a[i]$ as \name{post-projection} and  
$\postdecomp(\ES_a)$ as \name{post-decomposition} of the compartment $\ES_a$. 
 
The following property of connected expressions will later be related to  
product-moves property of direct
products and hence to distributed-choice of nets.  
\begin{definition}[product-derivatives property]
\label{lab:def:pdp}
A connected expression with cables and having equal source property,
is said to have \concept{product-derivatives property}, if for all $a$  
in $\Sigma$, and for all equal source compartments $\ES_a$ of
$a$-cables $\postdecomp(\ES_a) \subseteq \post{\ES_a}$. 
\end{definition}  
  
\begin{example}        
The connected expression $e=\fsync((ab+ac)^*, (ad+ae)^*)$ 
of Example~\ref{ex-ce1pairing}
does not have product-derivative property with the given cabling
relation $\{((D, r_2), (D',s_2)), ((D, r_3), (D',s_3))\}$. 
If we associate the cabling relation $\{((D, r_2), (D',s_2)), ((D, r_3), (D',s_3),
((D, r_2), (D',s_3)), ((D, r_3), (D',s_2))\}$ with $e$ 
then it has product-derivative property.
\end{example}


\begin{definition}      
A connected expression $e$ is \concept{action-live} if for all actions
$a$ in $\Sigma$, from any reachable derivative of $e$, we can reach an
$a$-derivative of $e$. 
\end{definition}  

 
\subsection{Relating connected-expressions with pairings and with cables} 
\label{lab:subsec:pdp2pair} 

First, we show how connected expressions with equal choice and
consistent pairings can be seen as connected expression with cables
and having  equal source and product-derivatives property. 
 
\begin{theorem}
\label{lab:tm:pair2pdp}
Let $\Sigma$ be a distributed alphabet and $e$ be a connected
expression having equal choice and consistent pairing of actions, defined over $\Sigma$.  
Then for the language of $e$, we can construct a connected expression
$e'$ with cables having equal source and product-derivatives property.  
The constructed expression $e'$ with cables is exponential
in the size of expression $e$ with pairings. 
\end{theorem}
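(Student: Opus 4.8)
The plan is to transport the construction in the proof of Theorem~\ref{lab:tm:mat2pmp} from product systems to connected expressions. Given $e=\fsync(s_1,\dots,s_k)$ with equal choice and consistent pairing of actions, I would keep the component regular expressions unchanged, $e'=\fsync(s_1,\dots,s_k)$ — so the partitions $\Part_a(s_i)$, the derivative sets $\Der(s_i)$ and the synchronized-shuffle semantics are literally the same — and only add a cabling relation obtained by ``saturating'' the pairing. Concretely, for a global action $a$ with $loc(a)=\{i_1,\dots,i_m\}$, for every tuple $(B_{i_1},\dots,B_{i_m})\in\pairing(a)$ and every choice of a single $a$-effect $d_{i_j}\in\Der_a(B_{i_j})$ for each $j$, I would place the $a$-cable $\bigl((B_{i_1},\{d_{i_1}\}),\dots,(B_{i_m},\{d_{i_m}\})\bigr)$ into $\cables(a)$. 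The first, routine step is to verify that this is a legitimate $\cables(a)$: each block of $\Part_a(s_i)$ appears in some cable because the pairing covers all blocks and $a\in\Init(B)$ makes $\Der_a(B)$ nonempty, and two distinct ducts with a common pre-block have disjoint post-effects because all post-effects used are singletons. One has to note along the way that each such pair $(B,\{d\})$ is genuinely an $a$-duct, i.e. $a$-funnels $\Lang(s_i)$; this follows from the fact that blocks of $\Part_a(s_i)$ $a$-bifurcate $\Lang(s_i)$, as in \cite{PL14,P16}.

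The heart of the proof, and what I expect to be the main obstacle, is equal source property. If two cables share a pre-block and both are $a$-cables, they are built from the same pairing tuple (a block lies in at most one tuple of $\pairing(a)$), so they trivially have the same set of pre-blocks. The delicate case is an $a$-cable $D$ and a $b$-cable $D'$, $a\neq b$, sharing a pre-block $B$. Here $B$ is simultaneously a block of $\Part_a(s_i)$ and of $\Part_b(s_i)$ for a single component $i\in loc(a)\cap loc(b)$; since every element of such a block has both $a$ and $b$ in its initial actions, equal choice of the pairing forces $\Init$ of every component of the two tuples through $B$ to equal $\Init(B)\supseteq\{a,b\}$, and hence $loc(a)=loc(b)$. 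It remains to show the two pairing tuples containing $B$ — one in $\pairing(a)$, one in $\pairing(b)$ — have the same components in every coordinate. This is where consistency of the pairing of $e$ has to be used: I would argue that on any run the derivative occupying a coordinate determines its block independently of the synchronising action, so the two tuples must coincide. Making this cross-action coherence of the \emph{canonical} partitions $\Part_a(\cdot)$ precise is the technically demanding part; it mirrors the use of ``consistent matching'' in the same-source argument of Theorem~\ref{lab:tm:mat2pmp}, where the corresponding statement was clean because automaton states carry no syntactic partition. An alternative, should translations between connected expressions with pairings/cables and product systems with matchings/globals be available, is to route the whole argument through Theorem~\ref{lab:tm:mat2pmp}.

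Product-derivatives property is then the easy half and copies the ``product moves'' argument of Theorem~\ref{lab:tm:mat2pmp}. Since distinct tuples of $\pairing(a)$ use disjoint blocks, each equal-source compartment $\ES_a$ of $\cables(a)$ consists exactly of the cables built from one fixed pairing tuple $(B_{i_1},\dots,B_{i_m})$. Thus for any $(d_{i_1},\dots,d_{i_m})\in\postdecomp(\ES_a)$ each $d_{i_j}$ is a post-block of some cable in $\ES_a$, hence $d_{i_j}\in\Der_a(B_{i_j})$, and so the cable $\bigl((B_{i_1},\{d_{i_1}\}),\dots,(B_{i_m},\{d_{i_m}\})\bigr)$ is in $\cables(a)$, lies in $\ES_a$, and has $\pi$-value $(d_{i_1},\dots,d_{i_m})$; therefore $(d_{i_1},\dots,d_{i_m})\in\post{\ES_a}$, as required.

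Finally, for $\Lang(e')=\Lang(e)$ I would show that the graphs of reachable derivatives of $e$ and $e'$ are isomorphic via the identity on $k$-tuples $\fsync(r_1,\dots,r_k)$. Any $a$-step legal for the pairing of $e$ is realised by one of the saturating cables, giving $\Lang(e)\subseteq\Lang(e')$; conversely, a cable of $e'$ fixes pre-blocks that, by consistency of the pairing of $e$, already form a pairing tuple, and whatever post-effect it selects yields a $k$-tuple that is still a legal pairing-derivative of $e$ (using the product-derivatives property just proved) — the reverse direction of Theorem~\ref{lab:tm:mat2pmp} again. A minor wrinkle is the difference between ``a derivative of the block $B$'' and ``a derivative of the component currently inside $B$'', resolved by tracking which element of $B$ the current expression is. For the size bound, each pairing tuple contributes $\prod_j|\Der_a(B_{i_j})|$ cables, at most $(\max_i wd(s_i))^{k}$, so $e'$ is exponential in the size of $e$, matching the claim.
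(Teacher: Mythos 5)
Your proposal is correct and takes essentially the same route as the paper: keep the component expressions unchanged, saturate each pairing tuple with the full product of (mutually disjoint) post-effects — the paper allows arbitrary families of mutually disjoint post-effects where you use singletons, an inessential difference — then derive equal source from equal choice plus consistency of the pairing exactly as you sketch, and obtain product-derivatives and language equality directly from the product construction. The only point to watch is your claim that each $(B,\{d\})$ is genuinely an $a$-duct (i.e.\ $a$-funnels the component language); the paper sidesteps this by building only from pairs that are already $a$-ducts, so you should either verify the funneling condition for singletons or restrict to those singletons that are ducts.
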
  
 
\begin{proof}
Let $e=(s_1,\ldots,s_k)$ be a connected expression with pairings.
We take $e'= (s'_1=s_1,\ldots,s'_k=s_k)$ as our connected expression
with cables. We construct $\cables(a)$ relation for each action $a$ in
Let $loc(a)=\{1,\ldots,m\}$.  
In each $s_i$ and for all $a$ in $\Sigma$, for each block $B$ in $\Part_a(s_i)$  
we consider the set $X$-post-effects($B$) as the set of post-effects of $B$  
which are mutually disjoint.  
We build the set of $a$-cables $\cables(a)$ for $e'$ from the given
pairing relation $\pairing(a)$ of $e$ as follows. 
 
\[\begin{array}{l}
        \cables(a)= \{( (B_1,\ldots,B_m),(E_1,\ldots,E_m)) \mid 
                (B_1,\ldots,B_m) \in \pairing(a) ~\mbox{where}~\\
	\hspace*{0cm}(B_i,E_i)~\mbox{is an a-duct of}~
s_i,~\mbox{and} ~E_i ~\mbox{is in}~X\mbox{-post-effects}(B_i), 
	~\mbox{for all}~ i~\in loc(a) \}.
    \end{array}
\]
See that each block $B$ appears in at least one $a$-cable as it appears
in at least one tuple of $\pairing(a)$, and when $B$ appears more than once
i.e., it appears in two distinct $a$-ducts then the post-effects are
disjoint. Therefore, the relation constructed above satisfies
definition of cables relation. 
If we have $k_i$ mutually exclusive post-effects for $B_i$, for each
$i$ in $loc(a)$, then we have $k_1 \times k_2 \times k_m$ cables in
$\cables(a)$ having $B_1,\ldots,B_m$ as their set of pre-blocks,  
which is exponential in the number of locations. 

(\emph{Proof of $e'$ having equal source property}): 
Assume not i.e., we  have an $a$-cable $c$ and a $b$-cable $c'$ having
at least one common pre-block but their sets of pre-blocks are not equal. 
Without loss of generality we assume that $loc(a)=\{1,2\}$. 
Let $c= ( (B_1,E_1), (B_2,E_2))$ and $c'= ( (B_1,E'_1), (B_3,E_3))$, where 
$B_2 \neq B_3$. So we must have had $(B_1,B_2)$ in $\pairing(a)$ and $(B_1,B_3)$ in
$\pairing(b)$ of expression e with pairings. 
Therefore, init actions of $B_1$ includes $a$  and $b$.
We know that $e$ has equal choice pairing,  therefore
the sets of init actions of $B_1$ and $B_2$ are equal, hence 
$b$ is also an init action of $B_2$, which means that $B_2$ is also a block
in $\Part_b(s_2)$. Let $E'_2$ be one of the post-efffects of $B_2$
with respect to action $b$. 

Now consider a reachable derivative $e''=(r_1,r_2,\ldots,r_n)$ of $e$ in
which blocks $B_1$ and $B_2$ appear i.e., $r_1$ is in $B_1$ and $r_2$
is in $B_2$. 
We take an $a$-derivative $r'_1$ of $r_1$  and $r'_2$ of $r_2$ respectively, so
that $r'_1$ is in $E_1$ and $r'_2$ is in $E_2$.  
We also have a b-derivative of  $r''_1$ of $r_1$ in $E'_1$. 
But because of equal choice of
$B_1$ and $B_2$ we must also have a $b$-derivative $r''_2$ of
$r_2$ which reside in $E'_2$. 
Since connected expression  e has consistent pairing of actions,
$(B_1,B_2)$ must appear in $\pairing(b)$. 
This is a contradiction as we have $B_1$ appears twice in
$\pairing(b)$ relation, and is paired with distinct blocks of
$\Part_b(s_2)$.

(\emph{Proof of $e'$ having product-derivatives property}):
Let ES-$\cables(a)$ be a set of $a$-cables having equal source property 
and let $(e_1,\ldots,e_m)$ be a tuple in \postdecomp(ES-$\cables(a)$). 
To show that $e'$ has product-derivatives property we have to show that
$(e_1,\ldots,e_m)$ is in \post(ES-$\cables(a)$) i.e., we have to prove 
existence of some $a$-cable $c$ in ES-$\cables(a)$ such that $\pi(c)=
(e_1,\ldots,e_m)$. Since $(e_1,\ldots,e_m)$ is in
\postdecomp(ES-$\cables(a)$),  
there exist $a$-cables $c_1,\ldots,c_m$ such that $\postset{c_i[i]}=e_i$
for all $i$ in $loc(a)$.  Hence we must have $a$-ducts $c_i[i]=(B_i,E_i)$ and 
$(B_1,\ldots,B_m) = \prodover_{i \in loc(a)} \preset{c_i[i]}$, the tuple
of source blocks of all cables in ES-$cables(a)$. 
By construction, it must be the case that $(B_1,\ldots,B_m)$ is in
$\pairing(a)$ of expression $e$ with pairings and $E_i$ are X-post-effects  
of $b_i$. Therefore, we must have constructed an $a$-cable  
$\prodover_{i \in loc(a)} (B_i,E_i)$, whose set of
post-effects is $\prodover_{i \in loc(a)} (E_i)$ as required.

(\emph{Proof of $\Lang(e)=\Lang(e')$}):
Let $e=(s_1,\ldots,s_k)$ and $w \in \Lang(e)$.  
We prove that $w \in Lang(e')$ by doing induction on length of $w$.  
For each prefix of $w$ prove that if each $w$-derivative of $e$ is also 
an $w$-derivative of $e'$. 
Let $f=(u_1,\ldots,u_k)$ be $w$-derivative of $e$ and 
$f'=(v_1,\ldots,v_k)$ is in $\Der_a(f)$ therefore, $f'$ is in
$\Der_{wa} (e)$. 
Inductively, we assume that $f$ is also a valid $w$-derivative of $e'$ and
now we have to prove that $f'$ is a valid $a$-derivative of $f$ to show that
$f'$ is in $\Der_{wa}(e')$. Therefore we have to show that there exits an
$a$-cable whose pre-blocks appear in $f$ and whose post-effects appear in
$f'$. 

Since $e$ is consistent with pairing of actions, there exist blocks 
$B_i$ in $\Part_a(s_i)$ where each regular expression $u_i$ is in
$B_i$ for all $i$ in $loc(a)$, 
and $(B_1,\ldots,B_m)$ is a tuple in $\pairing(a)$. 
We have $v_i \in Der_a(u_i)$ for $i$ in $loc(a)$. 
Therefore, we must have X-post-effects $E_i$ of $B_i$ containing $u_i$ in
it. 
So, $(B_i,E_i)$ is an $a$-duct of $s_i$ for all $i$ in $loc(a)$. 
But $e'$ has product-derivatives property as proved above, 
therefore, for $B_1,\ldots,B_m$ appearing in $\pairing(a)$ and 
$E_1,\ldots,E_m$ their X-post-effects respectively, we have an $a$-cable 
$\prodover_{i \in loc(a)} (B_i,E_i)$. Since each $u_i$ is a component expression
of $f'$, each $E_i$ appears in $f'$. 
Therefore, $f'$ is a valid $a$-derivative of connected expression $f$
which completes the induction. 
\end{proof}  
 
Now we give a language preserving construction of connected expression with equal choice and
consistent pairings from a \cecables{} 
having equal source and product-derivatives property. 
 
\begin{theorem}      
\label{lab:tm:pdp2pair}
Let $\Sigma$ be a distributed alphabet and $e'$ be a connected expression  
with cables defined over it.
Then we can construct a connected expression $e$  
with pairings linear in the size of $e'$. 
And, 
%
if $e'$ has equal source then $e$ has equal choice property. 
In addition, if $e'$ is action-live then
if $e'$ has product moves property then $e$ has consistency of pairing.
$Lang(e)=Lang(e')$.  
\end{theorem}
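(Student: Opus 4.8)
The plan is to mirror the construction and proof of Theorem~\ref{lab:tm:pmp2mat} (product systems with globals $\to$ with matchings) one level up, at the level of expressions. Given $e'=\fsync(s'_1,\dots,s'_k)$ together with its relations $\cables(a)$, I take $e=\fsync(s_1=s'_1,\dots,s_k=s'_k)$ with the same component expressions and set, for each global action $a$ with $loc(a)=\{1,\dots,m\}$,
\[
\pairing(a)=\{\,\preset{D}\mid D\in\cables(a)\,\},
\]
so the tuple $(B_1,\dots,B_m)$ is kept precisely when some $a$-cable $D$ satisfies $D[i]=(B_i,E_i)$ for all $i\in loc(a)$. First I would verify this is a legal pairing of linear size: clause~1 of the cables relation gives that every block of $\Part_a(s_i)$ is a pre-block of some $a$-cable, hence occurs in some tuple of $\pairing(a)$; and when $e'$ has equal source, two $a$-cables sharing a pre-block have equal sets of pre-blocks, so distinct tuples of $\pairing(a)$ are block-disjoint. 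Since $|\pairing(a)|\le|\cables(a)|$, the expression $e$ is linear in $e'$ --- the cheap direction, in contrast with the exponential blow-up of Theorem~\ref{lab:tm:pair2pdp}.

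For \emph{equal choice} I would copy part~1 of Theorem~\ref{lab:tm:pmp2mat}. Fix a tuple $(B_1,\dots,B_m)\in\pairing(a)$ witnessed by an $a$-cable $D$; taking $loc(a)=\{1,2\}$ without loss of generality, let $b\in\Init(B_1)$. Since $B_1$ has an outgoing $b$-site it is a pre-block of some $b$-cable $D'$, so $B_1\in\preset{D}\cap\preset{D'}$, and equal source of $e'$ forces $\preset{D}=\preset{D'}$; hence $B_2\in\preset{D'}$ and $b\in\Init(B_2)$. By symmetry, and the obvious extension to all coordinates, every tuple of $\pairing(a)$ has blocks with a common $\Init$ set, so $\pairing(a)$ is an equal choice pairing and $e$ has the equal choice property.

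For \emph{consistency of pairing} I would use action-liveness exactly as liveness is used in part~2(a) of Theorem~\ref{lab:tm:pmp2mat}: if along a derivation some reachable $a$-site $f=\fsync(u_1,\dots,u_k)$ with each $u_i$ in a block $B_i$ had $(B_1,\dots,B_m)\notin\pairing(a)$, then action-liveness of $e'$ (an $a$-derivation being reachable from $f$ inside $e'$) would produce an $a$-cable and a $b$-cable sharing a pre-block but with distinct sets of pre-blocks, contradicting equal source; hence every reachable $a$-site of $e$ lies in $\pairing(a)$. For the language, $Lang(e')\subseteq Lang(e)$ is immediate: every step of a derivation of $e'$ through an $a$-cable $D$ has pre-blocks $\preset{D}\in\pairing(a)$ by construction and component effects in $\Der_a$, so it is available in $e$, and the accepting condition ($\epsilon\in Lang$ of each component) is unchanged.

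The reverse inclusion $Lang(e)\subseteq Lang(e')$ is the delicate point, and it is here that the \textbf{product-derivatives property} of $e'$ enters, paralleling \emph{product moves} in Theorem~\ref{lab:tm:pmp2mat}(2b). Given a step $f=\fsync(u_1,\dots,u_k)$, $f'\in\Der_a(f)$, $f'=\fsync(v_1,\dots,v_k)$, with $u_i$ in block $B_i$, $(B_1,\dots,B_m)\in\pairing(a)$, and $v_i\in\Der_a(u_i)$, I would (i) pick for each $i$ an $a$-duct $(B_i,E_i)$ of $s_i$ occurring in the equal-source compartment $\ES_a$ that contains the cable with pre-blocks $(B_1,\dots,B_m)$, chosen with $v_i\in E_i$; (ii) note $(E_1,\dots,E_m)\in\postdecomp(\ES_a)$; (iii) apply $\postdecomp(\ES_a)\subseteq\post{\ES_a}$ to obtain a single $a$-cable $D'\in\cables(a)$ with pre-blocks $(B_1,\dots,B_m)$ and $\pi(D')=(E_1,\dots,E_m)$, whence $v_i\in E_i=\postset{D'[i]}$ for all $i$ and $D'$ realises the step in $e'$; an induction on $|w|$ over reachable derivatives then yields $Lang(e)\subseteq Lang(e')$, hence $Lang(e)=Lang(e')$. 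The main obstacle I expect is exactly step~(i)--(iii): unlike in the product-system analogue the ``effects'' of ducts are sets of derivatives rather than single targets, so at each coordinate one must choose a compartment-duct that simultaneously carries $v_i$ and is one of the mutually disjoint post-effects recorded in $\ES_a[i]$, so that the tuple $(E_1,\dots,E_m)$ genuinely lands in $\postdecomp(\ES_a)$; the bookkeeping is heavier than in Theorem~\ref{lab:tm:pmp2mat} but the argument goes through.
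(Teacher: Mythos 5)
Your construction ($\pairing(a)$ as the pre-block tuples of the cables), your use of equal source for equal choice, of action-liveness for consistency, and of the product-derivatives property to realise each step of $e$ by a single $a$-cable in the reverse language inclusion all coincide with the paper's own proof, so this is essentially the same argument (with the welcome extra check that the tuples are block-disjoint, which the paper leaves implicit). The only small divergence is in the consistency step: the paper derives the contradiction directly from action-liveness (an unpaired reachable tuple $(B_1,\dots,B_m)$ forces $\Der_a(f)=\emptyset$ in $e'$, so no $a$-derivative is reachable from $f$), rather than from equal source as you phrase it, but the intent and structure are the same.
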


\begin{proof}
Let $e'=(s'_1,\ldots,s'_k)$ be a connected expression with cables. 
We construct an expression with pairings $e=(s'_1=s_1,\ldots,s'_k=s_k)$ by
taking 
$\pairing(a)=\{\prodover_{i \in loc(a)} \preset{c[i]} \mid c \in
\cables(a)\}$, for each $a$ in $\Sigma$. 
The size of constructed $\pairing(a)$ relation is at most the
size of $\cables(a)$ relation for $e$.
Without loss of generality we assume that $loc(a)=\{1,2\}$. 

(\emph{Proof of $e$ having equal choice property}): 
Let $(B_1,B_2)$ be in $\pairing(a)$, and actions $a,b$ are
in $\Init(B_1)$, and action $a$ is in $\Init(B_2)$.  
Now we have to prove that $b$ is also present in $\Init(B_2)$. 
Since $(B_1,B_2)$ is in $\pairing(a)$ we must have had 
an $a$-cable $c$ with $B_1, B_2$ as its set of pre-blocks, so let 
$c=( (B_1,E_1),(B_2,E_2))$. 
And since $b$ is in $\Init(B_1)$, block $B_1$ is also present in the
$\Part_b(s_1)$, the set of partitions of $b$-derivatives of $s_1$. 
Let $E'_1$ be post-effects($B_1$) of $b$-derivatives of $B_1$. 
Then $(B_1,E'_1)$ is a $b$-duct of $s_1$. 
Therefore, we should have $b$-cable $c'$ having $c'[1]=(B_1,E'_1)$. 
Hence, we have two cables $c$ and $c'$ of expression $e'$ sharing a
pre-block $B_1$. But $e'$ has equal source property, therefore, $B_2$  
is also a pre-block of some $b$-cable $c'$ i.e., $B_2$ is a partition of  
$b$-derivatives of $s_2$ and hence it belongs to $\Part_b(s_2)$,  
implying that $b$ is present in $\Init(B_2)$ as required.

(\emph{Proof of $e$ having consistent pairing of actions}): 
Let $e=(s_1,s_2,\ldots,s_k)$ with 
$f$ in $\Der_w(e)$ and $f'$ in $\Der_a(f)$ i.e., $f' \in \Der_{wa}(e)$. 
Inductively we assume that all intermediate derivatives excluding  $f$
are valid, i.e., their projections on $loc(a)$ appear in $\pairing(a)$
which we have constructed from $e$. 

Let $f=(r_1,r_2,\ldots,r_k)$ and $f'=(r'_1,r'_2,\ldots,r'_k)$,
so that $r'_k$ is in $\Der_a(r_i)$ for all $i$ in $loc(a)$. 
Let $B_j$ be the block in $\Part_a(s_i)$ containing $r_j$ 
and, $E_j$s be the post-effects of $B_j$ containing $r'_j$, 
for all $j$ in $\{1,\ldots,k\}$.

To complete the induction step, 
we have to prove that $f\project_{loc(a)}=(B_1,B_2)$ appears in
$\pairing(a)$. 
In the case that $c'=((B_1,E_1),(B_2,E_2))$ is already an $a$-cable in
$\cables(a)$ relation of $e'$ then we are done, because then by
construction of $\pairing(a)$ relation we have $(B_1,B_2)$ in
$\pairing(a)$. 

Now suppose that $c'$ is not an $a$-cable. 
Since $B_1$ is block of partition of $a$-derivatives of $s_1$, it 
must appear in at least one $a$-cable of $e'$ and similarly it holds for
$B_2$ also. 
Therefore, there must exist $a$-cables $c=((B_1,E_1),(B'_2,E'_2))$
and $d=((B'_1,E'_1),(B_2,E_2))$ in which $B_1$ and $B_2$ appear separately. 
At this point one could aruge that why $(B_1,E_1)$ is the chosen $a$-duct
to participate in $\cables(a)$ relation? We could have some $E''_1$ a
post-effect of $B_1$ containing $r'_1$ and that $(B_1,E''_1)$ could be
part of $a$-cable $c$ and not $(B_1,E_1)$. If $r'_1 \in E_1 \cap
E''_1$ then both $a$-ducts $(B_1,E_1)$ and  $(B_1,E''_1)$ can not
appear in $\cables(a)$ relation simultaneously. So at this point we could have chosen  
$(B_1,E''_1)$ and argument does not change, because all we require is $B_1$  
being a pre-block of some $a$-cable, which is guranteed by definition of $\cables(a)$  
relation. 


If $c'$ is not $a$-cable then we have $B_2 \neq B'_2$, and hence 
$r_2 \notin B'_2$ and therefore, $f' \notin \Der_a(f)$.  
It also implies that  $B_1 \neq B'_1$, and hence 
$r_1 \notin B'_1$ and therefore, $f' \notin \Der_a(f)$. 
Hence $\Der_a(f)= \emptyset$. Importantly it implies that for 
any $w'$ containing action $a$, $\Der_{ww'}=\emptyset$. 
Therefore, we can not reach an $a$-derivative of $e'$ from $f$, which is in
contradiction with the fact that $e'$ is action-live. 
 
(\emph{Proof of $\Lang(e)=\Lang(e')$}):  
The direction of proving $\Lang(e') \subseteq \Lang(e)$ is easy because
each reachabale derivative of $e'$ is also valid derivative of $e$, since
$\pairing(a)$ relation is constructed from pre-blocks of cables in
$\cables(a)$ relation, for all $a$ in $\Sigma$. 
 
To prove that $\Lang(e) \subseteq \Lang(e')$, consider a word $w$ in
language of $e$, i.e., we reach some expression 
$e''$ in $\Der_w(e)$ having empty word in the language of its each component expression.  
Let $w=ua$. 
Let $f$ be a $u$-derivative of $e'$ i.e., it is in $\Der_{u}(e')$. 
Inductively we assume that $f' \in \Der_{ua}(e)$. 
For induction step, we have to prove that $f' \in \Der_{ua}(e')$,
i.e., $f' \in \Der_{a}(f)$.
Let $f=(r_1,r_2,\ldots,r_k)$, and 
$f'=(r'_1,r'_2,\ldots,r'_k)$, where $r'_1$ (resp. $r'_2$) is an
$a$-derivative of $r_1$ (resp. of  $r_2$). 
Since $e$ is consistent with $\pairing(a)$, as proved above, 
$f\project_{loc(a)}$ appears in $\pairing(a)$.  
Let $f\project_{loc(a)}=(B_1,B_2)$. 
So by construction $e'$ must have cables $c_1$ and $c_2$ such that 
$\preset{c_1}[1]=B_1$ and $\preset{c_2}[2]=B_2$, but since they have equal
source, $\preset{c1}=\preset{c2}=(B_1,B_2)$. 
Let $E_1$ (resp. $E_2$) be the post-effect of $B_1$ (resp. $B_2$) containing 
$r'_1$ (resp. $r'_2$). 
Since $e'$ have product-derivatives property, 
there exist an $a$-cable $c$ with $B_1,B_2$ as its pre-blocks and 
$\postset{c}=(E_1,E_2)$, as required. 
 
\qed
\end{proof}

 
Expressions given in the following subsection correspond to product
systems with subset acceptance condition.  
 
\subsection{Sum of Connected Expressions (\sce)}
 \label{lab:subsec:sce} 
 
We give syntax for \name{sum of connected expressions (\sce)} defined over 
a distribution $(\Sigma_1,\Sigma_2,\ldots,\Sigma_k)$ of alphabet $\Sigma$.   
\[
e ::= \zero |e_1 + \cdots + e_m , \mbox{~where~}e_i \mbox{~is a
connected expression (\ce) over~} \Sigma
\]

For an \sce{} $e$ 
its semantics is given as follows:  
For the \sce{} $\zero$, we have $Lang(\zero)=\emptyset$.
For the \sce{} $e=e_1+\cdots+e_m$, its language is given as
$Lang(e)=Lang(e_1) \cup Lang(e_2)\cup \cdots \cup Lang(s_m)$. 
The definitions of derivatives extended to \sce s is as given below.
The expression $\zero$ has no derivatives on any action.
Derivative of an \sce{} with respect to a letter $a$ is defined as: 
$\Der_a(e)=\Der_a(e_1) \cup \Der_a(e_2)\cup \cdots \cup \Der_a(s_m)  
$. Inductively $\Der_{aw}(e) = \Der_w(\Der_a(e))$.
The set of all derivatives $\Der(e)$ is union of sets of derivatives
    of $e$ over all words $w$ in $\Sigma_i^*$ for all $i$ in
    $\{1,\ldots,m\}$. 
    

 
\begin{definition}[\sce{} with pairings)]       
An \sce{} $e=e_1 + \cdots + e_m$ where each $e_i$ is a \cepairings{}, 
is called a sum of connected expressions with pairing (\scepairings). 
An \scepairings{} $e$ is said to have equal choice property if each
component \cepairings{} $e_i$ of the sum also has it. 
\end{definition}  

\begin{example}
\label{lab:ex:scepairing} 
The expression $e=e_1+e_2$ where $e_1=\fsync((ab+ac)^*, (ad+ae)^*)$
is the \cepairings{} of Example~\ref{ex-ce1pairing} 
and $e_2=\fsync((ab+ac)^*a, (ad+ae)^*a)$ 
is the \cepairings{} of Example~\ref{ex-ce2pairing}
is an \scepairings{} with equal choice property, as both $e_1$ and $e_2$ have it. 
\end{example}

\begin{definition}[\sce{} with cables)]      
An \sce{} $e=e_1 + \cdots + e_m$ where each $e_i$ is a \cecables, then $e$ is called 
a sum of connected expressions with cables (\scecables). 
An \scecables{} $e$ is said to have \concept{equal source property} if each
component \cecables{} $e_i$ of the sum also has it. 
An \scecables{} $e$ has \concept{product-derivatives property} if each
component \cecables{} $e_i$ of the sum has it. 

\end{definition} 
 
\begin{example}
\label{lab:ex:scecables} 
The expression $e'=e_3+e_4$ where $e_3=\fsync((ab+ac)^*, (ad+ae)^*)$
is the \cecables{} of Example~\ref{ex-ce1pairing} and $e_4=\fsync((ab+ac)^*a, (ad+ae)^*a)$ 
is the \cecables{} of Example~\ref{ex-ce2pairing}.
 
As an example of how derivatives of 
\scecables{} from derivatives of \scepairings{} differ even while having identical
    components, see that $fsync(r_2,s_3) \in Der_a(e)$ (of
    Example~\ref{lab:ex:scepairing}) but it does not belong to $\Der_a(e')$. 
\end{example}

\section{Connected Expressions and Product Systems}
\label{lab:section-exp-to-prod}

To get a product system with globals having subset-acceptance,  
from a sum of connected expression, we use an earlier result from \cite{P16},  
where construction of \psglobals{} with product-acceptance
was given from a \cecables{}. 

For each set of derivatives (pre-blocks and
after-effects), of a component regular expression, we constructed an unique state  
of a local component, which gives us product moves property in the constructed product  
system, if we have product-derivatives property for given \cecables{}. 
 
 
\begin{lemma}[\cecables{} to \psglobals{} with product-acceptance \cite{P16}]
\label{lab:trm:cecables-to-pspac}
Let $e$ be a \cecables{}, defined over a distribution $\Sigma$. Then for  
the language of $e$, we can compute a \psglobals{} with product-acceptance linear in the size of expression $e$. 
Further, 
if $e$ had equal-source, then system $A$ has same source property; and,
if $e$ had product-derivatives then $A$ has product moves property.
\end{lemma}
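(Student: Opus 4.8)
The plan is to give an explicit construction of the product system $A=\langle A_1,\ldots,A_k\rangle$ from the connected expression with cables $e=\fsync(s_1,\ldots,s_k)$, and then verify the three claims (language equality, transfer of equal-source to same-source, transfer of product-derivatives to product-moves) by tracking how the syntactic data of $e$ corresponds to the semantic data of $A$. The natural choice of state set for $A_i$ is the set of blocks appearing in partitions $\Part_a(s_i)$ of $a$-sites, together with a designated initial state $p_i^0$ corresponding to (the block containing) $s_i$ itself, and a set of final states $G_i$ corresponding to those derivatives/blocks whose component language contains $\epsilon$; the local moves $\to^i_a$ are read off from the $a$-ducts of $s_i$, namely from a pre-block $B$ to an after-effect derivative $d\in E$ we put a local $a$-move $\langle B,a,d\rangle$ (or more precisely to the block of $\Part$ on the next action containing $d$). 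The global moves $\globals(a)$ of $A$ are taken to be exactly the image of $\cables(a)$ under this translation: an $a$-cable $D=\prod_{i\in loc(a)}(B_i,E_i)$ becomes the global $a$-move whose $i$-th component is the local move determined by $(B_i,E_i)$. Linearity in $\lvert e\rvert$ is then immediate, since each block and each duct contributes boundedly many states and moves.

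First I would fix the precise bijection between reachable derivatives of $e$ and reachable product states of $A$: a derivative $\fsync(r_1,\ldots,r_k)$ maps to the tuple whose $i$-th entry is the block (of the relevant partition) containing $r_i$. Using the derivative rule for $\cecables$ given in the excerpt — a step on $a$ is allowed precisely when there is an $a$-cable $D\in\cables(a)$ with each $s_j$ (or $r_j$) in the pre-block of $D[j]$ and each $r'_j$ in the corresponding after-effect — this matches exactly the run condition for product systems with globals, where a global $a$-move may be taken iff $\prod_{j\in loc(a)}\langle R[j],a,Q[j]\rangle\in\globals(a)$. So by induction on word length the two reachability graphs are isomorphic, and since finality is defined in both by "$\epsilon$ in the component language / component state is final", the accepting runs coincide, giving $Lang(N\text{-side})$, i.e. $Lang(A)=Lang(e)$.

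For the structural transfers: if $e$ has equal source, then any two $a$- and $b$-cables sharing a common pre-block $B_i$ have equal sets of pre-blocks; under the translation $\preset{D}$ becomes exactly $\pre(g)$ for the corresponding global move $g$, so two global moves of $A$ sharing a pre-state have equal pre-state sets — this is same source property. For product-derivatives $\Rightarrow$ product-moves, I would fix an action $a$ and a same-source compartment $\To^{SS}_a$ of $A$; it corresponds under the translation to an equal-source compartment $\ES_a$ of $\cecables$, and the translation is compatible with the $\pi(-)$, $\post{-}$, $[i]$ and $\postdecomp$ operations (each local after-effect of a duct maps to the post-state of the corresponding local move), so $\postdecomp(\To^{SS}_a)$ is the image of $\postdecomp(\ES_a)$ and $\post{\To^{SS}_a}$ the image of $\post{\ES_a}$; the inclusion $\postdecomp(\ES_a)\subseteq\post{\ES_a}$ then transports to $\postdecomp(\To^{SS}_a)\subseteq\post{\To^{SS}_a}$, which is product moves property.

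The main obstacle I anticipate is purely bookkeeping rather than conceptual: getting the translation of "blocks of partitions across different actions" into a single consistent local state space, since a derivative of $s_i$ may lie in an $a$-site for several actions $a$ simultaneously and hence in several blocks, and one must fix one canonical state per derivative (as the excerpt remarks, "for each set of derivatives we constructed a unique state"). Once that canonical identification is pinned down, the derivative rule for $\cecables$ and the run rule for $\psglobals$ are visibly the same rule written in two notations, and all three claims follow by the induction and the compatibility of the post-operations described above; I would therefore present the construction in detail and then appeal to this correspondence, keeping the verification of the post-decomposition identities brief since they are componentwise and routine.
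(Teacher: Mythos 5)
The paper itself does not prove this lemma---it is imported from \cite{P16} with only the one-line hint that ``for each set of derivatives (pre-blocks and after-effects) \ldots we constructed an unique state of a local component''---and your plan reconstructs exactly that construction: ducts become local moves, cables become global moves, the derivative rule for a \cecables{} coincides with the run rule for a \psglobals{}, and the pre-block/post-effect operations transport equal source to same source and product-derivatives to product moves. The one point to pin down, which you already flag, is that the target of the local move arising from a duct $(B,E)$ should be the after-effect set $E$ itself taken as a state (not ``the block of the partition on the next action containing $d$'', which is ill-defined when a derivative is an $a$-site for several actions); with that choice, which is what the paper's hint prescribes, your correspondence and the three verifications go through as described.
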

 
Using Lemma~\ref{lab:trm:cecables-to-pspac}, we get \psglobals{} with subset-acceptance,  
from sum of connected expressions. 

\begin{theorem}[\sce-cables{} to \psglobals{} with subset-acceptance]
\label{lab:trm:scecables-to-pssac}
Let $e$ be an sum of connected expression defined over $\Sigma$.  
Then we can construct a \psglobals{} $A$ with subset-acceptance
for the language of $e$. 
If e had equal source property, then  has same source property.
In addition, if $e$ has product-derivatives property then 
$A$ has product moves property.
\end{theorem}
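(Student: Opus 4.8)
The plan is to reduce to the single-component construction of Lemma~\ref{lab:trm:cecables-to-pspac} and then glue the resulting systems together exactly as in the $(\Leftarrow)$ direction of Lemma~\ref{lab:lm:psglobalssac-lang-char}. Write $e = e_1 + \cdots + e_m$ with each $e_j$ a \cecables{} over $\Sigma$. First I would apply Lemma~\ref{lab:trm:cecables-to-pspac} to each component $e_j$ to obtain a \psglobals{} $A^j = \langle A^j_1,\ldots,A^j_k\rangle$ with product-acceptance, linear in the size of $e_j$, with $\Lang(A^j) = \Lang(e_j)$. Moreover, since by definition an \scecables{} has equal source (resp. product-derivatives) property precisely when every component \cecables{} $e_j$ has it, the lemma yields that each $A^j$ has same source (resp. product moves) property whenever $e$ does.

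Next I would form the \psglobals{} $B = \langle B_1,\ldots,B_k\rangle$ with subset-acceptance built in the proof of the $(\Leftarrow)$ direction of Lemma~\ref{lab:lm:psglobalssac-lang-char}: the local states of $B_i$ are the disjoint union $\biguplus_{j} P^j_i$ of the local states of the $i$-th components of the $A^j$, the local moves of $B_i$ are the union of the local moves of the $A^j_i$, the local final states of $B_i$ are the union of the $G^j_i$, the global moves of $B$ are the union of the global moves of the $A^j$, and the set of final product states is $G^B = \bigcup_{j}(G^j_1 \times \cdots \times G^j_k)$. As shown there, $B$ is a \psglobals{} with subset-acceptance and $\Lang(B) = \bigcup_{j}\Lang(A^j) = \bigcup_{j}\Lang(e_j) = \Lang(e)$; its size is linear in $\sum_j |e_j|$, hence linear in the size of $e$.

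It then remains to transfer the structural properties from the $A^j$ to $B$, and this is the only point requiring real care. The key observation is that the local state sets of distinct components $A^j$ are kept disjoint inside $B$, so every global move of $B$ is, verbatim, a global move of a unique $A^j$ and its whole set of pre-states lies inside that $A^j$. Consequently, if two global moves of $B$ share a pre-state they belong to the same $A^j$; since $A^j$ has same source property their sets of pre-states coincide, so $B$ has same source property (this also follows directly from Corollary~\ref{lab:cor:fcpsglobalssac-lang-char}). For the product moves property, fix a global action $a$ and a same source compartment $\To_a^{SS}$ of $a$-globals of $B$; by the same observation all moves of $\To_a^{SS}$ live in one $A^j$, so the target-configurations, the post-projections $\To_a^{SS}[i]$, and hence $\post{\To_a^{SS}}$ and $\postdecomp(\To_a^{SS})$, are computed entirely within $A^j$, where $\postdecomp(\To_a^{SS}) \subseteq \post{\To_a^{SS}}$ holds because $A^j$ has product moves property. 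Thus $B$ has product moves property, as required.

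I expect this ``confinement to one component'' observation to be the crux of the argument: it is what guarantees that the disjoint-union gluing introduces no new cross-component global moves and no new same source compartments straddling two of the $A^j$. Everything else is either a direct appeal to Lemma~\ref{lab:trm:cecables-to-pspac} or a verbatim restatement of the gluing construction already verified in Lemma~\ref{lab:lm:psglobalssac-lang-char} and Corollary~\ref{lab:cor:fcpsglobalssac-lang-char}.
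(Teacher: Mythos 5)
Your proof is correct and follows essentially the same route as the paper: apply Lemma~\ref{lab:trm:cecables-to-pspac} componentwise and then glue via the union construction underlying Lemma~\ref{lab:lm:psglobalssac-lang-char} and Corollary~\ref{lab:cor:fcpsglobalssac-lang-char}. Your explicit ``confinement to one component'' argument for transferring same source and product moves to the glued system is in fact more detailed than the paper's appeal to the corollary preserving global moves, but it is the same underlying idea.
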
 
\begin{proof}        
    Let $e=e_1+\ldots+e_m$ be an sum of \cecables{} having equal
    source and product-derivatives property. Language of expression e
    is $\Lang(e)=\Lang(e_1) \cup \ldots \cup \Lang(e_m)$. 
 
    Using Lemma~\ref{lab:trm:cecables-to-pspac}, we construct \psglobals{} $A_i$  
    with product-acceptance condition, for the language of each \cecables{} $e_i$  
    having the same source and product-derivatives property. That is 
    $\Lang(A_i)=\Lang(e_i)$ for all $i$ in $\{1,\ldots,m\}$. 
    
    Using language characterization of \psglobals{} with subset-acceptance conditions  
    given in Corollary~\ref{lab:cor:fcpsglobalssac-lang-char}, we get an \psglobals{} $A$  
    with subset-acceptance condition, over $\Sigma$ such that  
    $\Lang(A)=\Lang(A_1) \cup \ldots \cup \Lang(A_m)$. 
Since Corollary~\ref{lab:cor:fcpsglobalssac-lang-char}, preserves global moves of
component \psglobals{} and, as underlying \cecables{} had equal source
and product-derivatives property, we get same source property and 
product moves property for each of the component $A_i$. Hence $A$ has
both these properties as required. 
\qed 
\end{proof}

\begin{example}
For \scecables{} $e'$ of Example~\ref{lab:ex:scecables}, we can construct 
\psglobals{} with same source property and subset-acceptance $\calD$ of  
Example~\ref{lab:ex:psglobals}, accepting language $L_s$, using 
Theorem~\ref{lab:trm:scecables-to-pssac}.
\end{example}

A language preserving construction of connected expressions  
with equal source property, from a \psglobals{} with same source
property and product-acceptance, was given in~\cite{P16}. 
Since each local state--either a source state or a target state of
local move--is mapped uniquely to a set of derivatives of component
regular expression, we have product-derivatives property for the
expression, if the product system had product-moves property. 
 
\begin{lemma}[\psglobals{} with product-acceptance to \ce-cables{} \cite{P16}]       
\label{lab:tm:psglobalpac2cecables}
Let $\Sigma$ be a distributed alphabet and, $A$ be a product system with  
globals and product-acceptance, 
defined over $\Sigma$.  
For the language of $A$, we can construct a connected expression $e$
with cables, exponential in the size of the given product system. 
Furthermore, if product system has same source property then connected
expression with cables has equal source property,
in addition, if it has product-moves property then connected expression
with cables has product-derivatives property.
\end{lemma}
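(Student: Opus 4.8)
The plan is to build $e$ componentwise and then transcribe the globals of $A$ into a cable relation. Write $A=\langle A_1,\ldots,A_k\rangle$ with $A_i=\langle P_i,\to_i,G_i,p_i^0\rangle$; since $A$ has product-acceptance its set of final product states is $\prodover_{i\in Loc}G_i$. For each $i$ I would first run the single-component Kleene construction used for expressions in \cite{PL14}, solving the language equations of $A_i$, to obtain a regular expression $s_i$ over $\Sigma_i$ with $\Lang(s_i)=\Lang(A_i)$; crucially this is carried out so that it also returns a map $\beta_i$ from the states of $A_i$ to derivatives of $s_i$ such that (i) every state $p$ with an outgoing $a$-move corresponds to a block $\beta_i(p)$ of $\Part_a(s_i)$, (ii) every local move $\langle p,a,q\rangle\in\to_i$ corresponds to an $a$-duct with pre-block $\beta_i(p)$ and a post-effect $E_{\langle p,a,q\rangle}\subseteq\Der_a(\beta_i(p))$, and (iii) moves out of the same state with distinct targets receive disjoint post-effects. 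The size of $s_i$ is exponential in $|A_i|$, which is where the exponential bound in the statement comes from. Put $e=\fsync(s_1,\ldots,s_k)$; as a bare connected expression this has language $\Lang(s_1)\syncshuf\cdots\syncshuf\Lang(s_k)$, which in general is strictly larger than $\Lang(A)$, and the role of the cables is precisely to cut it down.

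Next I would define, for each global action $a$ with $loc(a)=\{i_1,\ldots,i_m\}$, the cable relation $\cables(a)=\{D_g\mid g\in\globals(a)\}$, where for $g$ with $g[i]=\langle p_i,a,q_i\rangle$ the $a$-cable $D_g$ has $D_g[i]=(\beta_i(p_i),E_{g[i]})$. One checks this is a legitimate cable relation: after discarding local moves that are not components of any global (these can never fire, so neither $\Lang(A)$ nor the construction is affected), every block of $\Part_a(s_i)$ is $\beta_i(p)$ for a state $p$ that is the $i$-th pre-state of some global $a$-move, so it appears in some cable; and two distinct $a$-ducts sharing a pre-block $\beta_i(p)$ come from moves $\langle p,a,q\rangle\neq\langle p,a,q'\rangle$ with $q\neq q'$, whence their post-effects $E_{\langle p,a,q\rangle}$ and $E_{\langle p,a,q'\rangle}$ are disjoint by (iii).

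The remaining claims then transfer through $(\beta_1,\ldots,\beta_k)$ by bookkeeping. If $A$ has same source property, two cables (on possibly different labels) sharing a pre-block $\beta_i(p)$ come from global moves sharing the pre-state $p$, hence with equal pre-state sets, hence the two cables have the same tuple of pre-blocks --- so $e$ has equal source property. The $\beta_i$ carry a same-source compartment $\To_a^{SS}$ of globals to an equal-source compartment $\ES_a$ of cables, commuting with $\pi(\cdot)$, hence with post-projections and with $\postdecomp(\cdot)$ and $\post{\cdot}$; so if $A$ has the product-moves property of Definition~\ref{lab:def:pmp}, then $\postdecomp(\ES_a)\subseteq\post{\ES_a}$ for every compartment, i.e.\ $e$ has the product-derivatives property (Definition~\ref{lab:def:pdp}). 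Finally $\Lang(e)=\Lang(A)$ follows by induction on word length: the product of the $\beta_i$ identifies the reachable derivatives of $e$ (under $\cables$) with the reachable product states of $A$, because a cable-permitted step $f\step{a}f'$ of $e$ exists exactly when some $a$-cable of $\cables(a)$ has its pre-blocks in $f$ and its post-effects in $f'$, which by construction is exactly when the corresponding global $a$-move is enabled at the matching product state and leads to the matching successor; and a reached derivative is accepting (empty word in every component) iff the matching product state lies in $\prodover_i G_i$ --- this last equivalence is exactly where product-acceptance is used.

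The main obstacle is the first step: exhibiting, for each component, a regular expression whose derivative-and-partition structure \emph{faithfully} mirrors $A_i$ (property (iii) in particular), rather than merely a language-equivalent expression. That is the only place where genuine work --- and the exponential blow-up --- is needed; once the $s_i$ and $\beta_i$ are in hand, the cable relation and all four assertions of the lemma drop out by the bookkeeping above.
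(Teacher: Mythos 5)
Your proposal is correct and follows essentially the same route as the paper, which does not reprove this lemma but imports it from \cite{P16} with exactly the gloss you reconstruct: each local state is mapped uniquely to a set of derivatives of the component expression (your $\beta_i$), globals are transcribed into cables through these maps, and same-source/product-moves transfer to equal-source/product-derivatives by that correspondence. The one point you rightly flag as the real work --- producing $s_i$ whose blocks, ducts and disjoint post-effects faithfully mirror the states and local moves of $A_i$ --- is precisely what the cited construction of \cite{PL14,P16} supplies, so your sketch is faithful to the intended argument.
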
 
 
Now using Lemma~\ref{lab:tm:psglobalpac2cecables}, we get a sum of connected
expressions for \psglobals{} with subset-acceptance.
 
\begin{theorem}[\psglobals{} with subset-acceptance to \sce-cables{}]       
\label{lab:trm:pssac-to-scecables}
Let $A$ be a product system with  
globals and subset-acceptance, 
defined over distribution $\Sigma$.  
For the language of $A$, we can construct a sum of connected expression $e$
with cables. And, if product system has same source property then $e$ has equal source property.
Also, in addition, if $A$ has product moves property then  
$e$ has product-derivatives property.
\end{theorem}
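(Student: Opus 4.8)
The plan is to mirror the proof of Theorem~\ref{lab:trm:scecables-to-pssac}, but running the chain of constructions in the opposite direction: first break $\Lang(A)$ into a finite union of product-acceptance pieces, then turn each piece into a \cecables{}, then take the sum. Concretely, I would first invoke the language characterization of \psglobals{} with subset-acceptance. By Corollary~\ref{lab:cor:fcpsglobalssac-lang-char} (or Lemma~\ref{lab:lm:psglobalssac-lang-char} in the plain case), since $A$ has same source property its language can be written as $\Lang(A)=\Lang(A^1)\cup\cdots\cup\Lang(A^m)$, where each $A^j$ is a \psglobals{} with product-acceptance which again has same source property, and moreover has product moves property whenever $A$ does. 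The construction underlying that corollary only restricts the set of final states while keeping the local transition structure and the global moves intact, so both structural properties carry over to every $A^j$.

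Next, for each $j \in \{1,\ldots,m\}$ I would apply Lemma~\ref{lab:tm:psglobalpac2cecables} to $A^j$. This yields a connected expression with cables $e_j$, exponential in the size of $A^j$, with $\Lang(e_j)=\Lang(A^j)$, and such that $e_j$ has equal source property because $A^j$ has same source property, and $e_j$ has product-derivatives property whenever $A^j$ has product moves property. Finally I would set $e=e_1+\cdots+e_m$. This is a \scecables{} by definition, and $\Lang(e)=\Lang(e_1)\cup\cdots\cup\Lang(e_m)=\Lang(A^1)\cup\cdots\cup\Lang(A^m)=\Lang(A)$. The transfer of the two properties is then immediate from the definitions of equal source and product-derivatives for \scecables{}: an \scecables{} has equal source (resp. product-derivatives) exactly when every summand does, and each $e_j$ has been arranged to have the corresponding property under the stated hypotheses.

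The only genuine content sits in checking that the union-decomposition of Corollary~\ref{lab:cor:fcpsglobalssac-lang-char} really preserves same source and product moves; but this is already asserted there (and in the remark following that proof that the transition structure of the local components and the global moves are preserved), so the present argument is essentially a bookkeeping composition of two established constructions. I therefore expect no real obstacle — at most one should note that the exponential blow-up of Lemma~\ref{lab:tm:psglobalpac2cecables} is incurred once per summand, so the size bound for $e$ is a sum of exponentials in the sizes of the $A^j$'s rather than anything worse.
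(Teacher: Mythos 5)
Your proposal is correct and follows essentially the same route as the paper: decompose $\Lang(A)$ into a finite union of product-acceptance pieces via Corollary~\ref{lab:cor:fcpsglobalssac-lang-char}, apply Lemma~\ref{lab:tm:psglobalpac2cecables} to each piece to obtain a \cecables{}, and sum the results, noting that same source and product moves are preserved by the decomposition. No gaps.
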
  
\begin{proof}
    Let $A$ be an \psglobals{} with subset-acceptance condition and
same source property.
    Then using Corollary~\ref{lab:cor:fcpsglobalssac-lang-char}, there
exist \psglobals{} $A_1,\ldots,A_m$ with product-acceptance conditions
such that 
    $\Lang(A)=\Lang(A_1) \cup \ldots \cup \Lang(A_m)$.
    Note that each $A_i$ has same source property. 
 
    For the language of each \psglobals{} $A_i$ with product-acceptance condition, 
    we can construct \cecables{} $e_i$ with equal source property, using  
    Lemma~\ref{lab:tm:psglobalpac2cecables}.

    From these we construct a sum of \cecables{} $e=e_1+\ldots+e_m$ which has equal
source property and language $\Lang(e_1) \cup \ldots \cup \Lang(e_m)$
which is $\Lang(A)$.
\qed 
\end{proof}

\begin{example}
For a \psglobals{} with same source property and subset-acceptance $\calD$ of  
Example~\ref{lab:ex:psglobals}, accepting language $L_s$, we can
construct an \scecables{} $e'$ of Example~\ref{lab:ex:scecables} using
Theorem~\ref{lab:trm:pssac-to-scecables}. 
\end{example}
 
 
Using equivalence of \psmatchings{} with product-acceptance and \cecables{},
from~\cite{PL14,PL15},  
and Corollary~\ref{lab:cor:fcpsmatsac-lang-char}
we get language equivalent 
\scepairings{} for \psmatchings{} with subset-acceptance, and
vice-versa.

\begin{theorem}[\psmatchings{} with subset-acceptance to \sce-pairings{}]       
\label{lab:trm:psmatsac-to-scepairings}
Let $A$ be a product system with conflict-equivalent and consistent
matchings, having subset-acceptance. 
For the language of $A$, we can construct a sum of connected expression $e$
with equal choice and consistent pairings.  
\end{theorem}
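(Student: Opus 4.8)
The plan is to follow the same template as the proofs of Theorems~\ref{lab:trm:scecables-to-pssac} and~\ref{lab:trm:pssac-to-scecables}: decompose the subset-acceptance language into a finite union of product-acceptance languages, handle each summand by the already-established product-acceptance equivalence, and then take the formal sum. Here the intermediary is \psmatchings{}/\cepairings{} rather than \psglobals{}/\cecables{}.

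First I would apply Corollary~\ref{lab:cor:fcpsmatsac-lang-char} to the given product system $A$. Since $A$ has subset-acceptance and conflict-equivalent, consistent matchings, the corollary gives $\Lang(A)=\Lang(A_1)\cup\cdots\cup\Lang(A_m)$ where each $A_i$ is a \psmatchings{} with product-acceptance that also has conflict-equivalent and consistent matchings. As in the construction behind Lemma~\ref{lab:lm:psglobalssac-lang-char}, this decomposition is obtained by keeping each component automaton and its local and global moves unchanged and only restricting the set of accepting product states to a single element; hence both the conflict-equivalence of the matching and its consistency along runs are inherited by every $A_i$.

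Next, for each $i$ I would invoke the equivalence from \cite{PL14,PL15} between \psmatchings{} with product-acceptance and connected expressions with pairings: from $A_i$ one obtains a \cepairings{} $e_i$ with $\Lang(e_i)=\Lang(A_i)$, and moreover conflict-equivalence of the matching of $A_i$ yields the equal-choice property of the pairing of $e_i$, while consistency of the matching of $A_i$ yields consistency of that pairing. Forming $e=e_1+\cdots+e_m$ produces an \scepairings{}; by the definition of the language of an \sce{} and of the equal-choice property of an \scepairings{} (which asks only that each summand have it), we conclude $\Lang(e)=\bigcup_i\Lang(e_i)=\Lang(A)$ and that $e$ has equal choice and consistent pairings, as required.

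The only step that needs genuine care --- and the one I expect to be the main obstacle --- is verifying that the cited product-acceptance equivalence really transports both qualitative side conditions, namely that conflict-equivalence of matchings corresponds to equal choice of pairings and consistency of matchings corresponds to consistency of pairings, rather than merely preserving the accepted language. Once that correspondence is in hand (it is exactly what \cite{PL14,PL15} supply for the product-acceptance case), the sum construction and the bookkeeping of languages are entirely routine, paralleling the earlier theorems.
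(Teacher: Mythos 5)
Your proposal matches the paper's own argument: the paper proves this theorem exactly by combining Corollary~\ref{lab:cor:fcpsmatsac-lang-char} (decomposition of the subset-acceptance language into a finite union of product-acceptance \psmatchings{} languages, with conflict-equivalence and consistency preserved) with the product-acceptance equivalence between \psmatchings{} and \cepairings{} from \cite{PL14,PL15}, and then summing the resulting connected expressions. Your added care about the side conditions being transported (conflict-equivalence to equal choice, consistency to consistency) is precisely what the cited earlier work supplies, so the proposal is correct and essentially identical in route to the paper's.
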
  
 
The converse result follows. 
 
\begin{theorem}[\sce-pairings{} to \psmatchings{} with subset-acceptance]       
\label{lab:tm:scepairings2psmatsac}
Let $\Sigma$ be a distributed alphabet and a sum of connected expression
$e$ defined over it, with equal choice and consistent pairings. Then 
for its language we can construct a product system with conflict-equivalent and consistent
matchings, having subset-acceptance. 
\end{theorem}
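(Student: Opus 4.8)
The plan is to reduce this to two facts already at hand, applied in the reverse order to Theorem~\ref{lab:trm:psmatsac-to-scepairings}: the equivalence between \psmatchings{} with product-acceptance and \cepairings{} from~\cite{PL14,PL15}, used componentwise, and the ($\Leftarrow$) direction of Corollary~\ref{lab:cor:fcpsmatsac-lang-char}, used to glue the components back into a single product system with subset-acceptance. So the proof is the mirror image of the argument sketched for Theorem~\ref{lab:trm:psmatsac-to-scepairings}.

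First I would write $e = e_1 + \cdots + e_m$, where by definition of an \scepairings{} each $e_i$ is a \cepairings{} carrying its own pairing; since $e$ has equal choice and consistent pairings, each summand $e_i$ is equal-choice and consistent with its pairing. Then, for each $i$, I would invoke the construction of~\cite{PL14,PL15} on the \cepairings{} $e_i$ to obtain a product system $A_i$ with conflict-equivalent and consistent matchings and product-acceptance such that $\Lang(A_i) = \Lang(e_i)$. Because the language of an \scepairings{} is by definition the union of the languages of its summands, $\Lang(e) = \Lang(e_1) \cup \cdots \cup \Lang(e_m) = \Lang(A_1) \cup \cdots \cup \Lang(A_m)$, which exhibits $\Lang(e)$ as a finite union of languages accepted by product systems with product-acceptance having conflict-equivalent and consistent matchings. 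Corollary~\ref{lab:cor:fcpsmatsac-lang-char} then yields a single product system $A$ with subset-acceptance and conflict-equivalent and consistent matchings with $\Lang(A) = \Lang(e)$, which is exactly the claim.

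I expect the routine-but-delicate part to be checking that the union construction underlying Corollary~\ref{lab:cor:fcpsmatsac-lang-char} really preserves conflict-equivalence and consistency of the matchings; this is what that corollary asserts, but to be safe I would spell it out. Exactly as in the proof of Lemma~\ref{lab:lm:psglobalssac-lang-char}, one takes, for each location $i$, the disjoint union of the $i$-th local components of the $A_j$, with the union of their local moves and initial states, and final product states $\bigcup_j (G^j_1 \times \cdots \times G^j_k)$. Since this copies each local transition structure verbatim and never merges moves coming from different summands, every matching relation, together with its conflict-equivalence and its consistency, carries over unchanged, and an accepting run of $A$ is precisely an accepting run of some $A_j$. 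The remaining points are bookkeeping: that the equal-choice and consistency hypotheses genuinely hold for each summand (immediate from how these properties are defined for an \scepairings{}) and that the \cite{PL14,PL15} construction is applied in the direction expression~$\to$~product system, whose hypotheses (equal choice, consistent pairing) are met by each $e_i$. An alternative, longer route would pass through \cecables{} via Theorem~\ref{lab:tm:pair2pdp} followed by a globals-to-matchings step, but the direct componentwise appeal to~\cite{PL14,PL15} is cleaner and is the one I would take.
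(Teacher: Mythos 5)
Your proposal matches the paper's intended argument exactly: the paper derives this theorem by combining the \cepairings{}--to--\psmatchings{}-with-product-acceptance equivalence from \cite{PL14,PL15}, applied to each summand, with the union direction of Corollary~\ref{lab:cor:fcpsmatsac-lang-char}. The paper gives no further detail ("The converse result follows"), so your explicit check that the disjoint-union construction preserves conflict-equivalence and consistency of matchings is a welcome elaboration rather than a deviation.
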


\section{Conclusion}
\label{lab:sec:concl}
In this paper, we have given a language ($L_s$ of Example~\ref{exfcnondcpnetpac})  
which can be accepted by \emph{free choice Zielonka automata}.
This language is not accepted by any synchronous product or direct
product, using Lemma~\ref{lab:lm:za-not-sp}, proof of which is
presented here, and was not given in \cite{P18}.
We have also given a language ($L_4$ of Example~\ref{lab:ex:fcns-dcp-sac}) which can be accepted
by a \emph{free choice synchronous product} and not by any direct product. 
A language which can be accepted by \emph{free choice direct product} ($L_3$ of  
Example~\ref{lab:ex:fcns-dcp-pac}) was given in~\cite{PL14}.
With this we have a hierarchy of labelled free choice nets similar to automata over distributed alphabets. 
In addition we have defined Zielonka automata with product
acceptance condition and its free choice restriction.
We have given language ($L_p$ of Example~\ref{lab:ex:fcns-pac}) of this class.
We used this intermediate automata to obtain Kleene theorem for \emph{free
choice Zielonka automata}. 
Lemma~\ref{lab:lm:zapac-not-dp} shows that this class is strictly  
more expressive than direct products with matching.  
In addition, ZA with same source has product moves property  
(class is not shown in the figure) then it is equivalent to SP with matching.
 
\begin{figure}[!htbp]
          \centering
 
\begin{tikzpicture}[->,>=stealth',shorten >=1pt,%
auto,node distance=2.8cm,semithick,inner sep=2pt,bend angle=80,scale=0.65]
\tikzstyle{every state}=[fill=green!20,rectangle,draw=black!,text=red]

\begin{scope}
  \node[state,fill=white!20]                     (dp) at(0,-2)   {DP
with matching=FC net, distributed choice, PAC ($L_3$ of  
Example~\ref{lab:ex:fcns-dcp-pac}) };
  \node[state,fill=cyan!20]                      (sp) at(-5,4)  {SP with matching($L_4$ of Example~\ref{lab:ex:fcns-dcp-sac}) };
  \node[state,fill=white!20]                     (zapac) at(4,2){ZA with same source and PAC($L_p$ of Example~\ref{lab:ex:fcns-pac}) };
  \node[state,fill=cyan!20]                      (za) at(0,8)   {ZA
with same source=FC net with L-type Language ($L_s$ of
Example~\ref{exfcnondcpnetpac})};

\path (dp) 	edge   (sp)
           	edge   (zapac)
      (sp) 	edge   (za) 
      (zapac)   edge  (za); 
\end{scope}
\end{tikzpicture}
          \caption{Free choice net languages and automata over
distributed alphabets} 
          \label{lab:fig:dafc-diamond}
\end{figure}
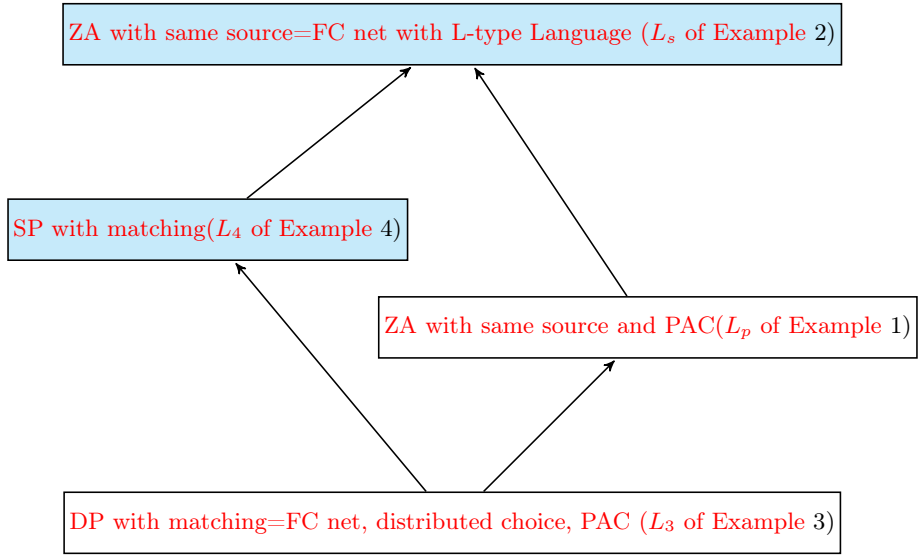

We give below the summary of correspondences established for the nets,
automata over distributed alphabets and expressions. 
To get an expression, for the language of a labelled $1$-bounded and  
S-coverable free choice net (with or without distributed choice)
having a finite set of final markings, we use Theorem~\ref{lab:tm:dcp2pmp} and
Theorem~\ref{lab:trm:pssac-to-scecables}.
In the reverse direction, we use Theorem~\ref{lab:trm:scecables-to-pssac} 
to get product sytem with subset acceptance from expressions and then  
Theorem~\ref{lab:tm:pmp2dcp} to get a language equivalent free choice net system. 
 
If the labelled free choice net has distributed choice and has a finite set of final markings, 
we have an alternate syntax for it. 
We first use Theorem~\ref{tm-dc-n2p2} to get equivalent product system, and then
Theorem~\ref{lab:trm:psmatsac-to-scepairings}, to get equivalent
expressions for the product system constructed. 
In the reverse direction, we use Theorem~\ref{lab:tm:scepairings2psmatsac} and then
Theorem~\ref{lab:tm:ps2n-dcp1}. 
All these correspondences and hirearchy of classes is shown in  Figure~\ref{lab:fig:dafc-diamond}. 
 


\bibliographystyle{splncs04}
\bibliography{ms}
 
%
%
%
%
%
%
\end{document}